\crefname{equation}{}{}
\newcommand{\cC}{\mathcal{C}}
\newcommand{\cU}{\mathcal{U}}
\newcommand{\cD}{\mathcal{D}}
\newcommand{\cW}{\mathcal{W}}
\newcommand{\cO}{\mathcal{O}}
\newcommand{\Ncal}{\mathcal{N}}
\newcommand{\Hcal}{\mathcal{H}}
\newcommand{\sgn}{\operatorname{sgn}}
\newcommand{\Ucal}{\mathcal{U}}
\newcommand{\Lcal}{\mathcal{L}}
\newcommand{\Acal}{\mathcal{A}}
\newcommand{\Bcal}{\mathcal{B}}
\newcommand{\Ccal}{\mathcal{C}}
\newcommand{\Mcal}{\mathcal{M}}
\newcommand{\hc}{\textnormal{h.c.}}
\def\norm#1{{\left|\hskip-.05em\left|#1\right|\hskip-.05em\right|}}
\newcommand{\tr}{{\rm tr}}
\newcommand{\HS}{{\rm HS}}
\theoremstyle{plain}
\newtheorem{theorem-princ}{Theorem}
\newtheorem{theorem}{Theorem}[section]  
\newtheorem{cor}[theorem]{Corollary}
\newtheorem{prop}[theorem]{Proposition}
\newtheorem{lemma}[theorem]{Lemma}
\newtheorem{assump}[theorem]{Assumption}
\newtheorem{defi}[theorem]{Definition}
\theoremstyle{remark}
\newtheorem{rmk}[theorem]{Remark}
\numberwithin{equation}{section}
\newcommand{\abs}[1]{\left\lvert #1 \right\rvert}
\newcommand{\prodscal}[2]{\left\langle#1,#2\right\rangle}
\newcommand{\crochetjap}[1]{\left\langle#1\right\rangle}
\def\N{\mathbb{N}}
\def\R{\mathbb{R}} 
\newcommand{\Rbb}{\mathbb{R}}
\def\C{\mathbb{C}} 
\newcommand{\Nbb}{\mathbb{N}}
\newcommand{\dom}{{\rm Dom}}
\renewcommand{\div}{{\rm div}}
\newcommand{\op}{{\rm op}}
\newcommand{\Hhb}{{H_{\hbar,b}}}
\newcommand{\qhb}{{\mathfrak{q}_{\hbar,b}}}
\newcommand{\Hloc}{{H^{\rm loc}_{\hbar,b}}}
\newcommand{\test}[1]{\mathcal{C}_0^{\infty} (#1)}
\newcommand{\indic}{\mathds{1}}
\newcommand{\dist}{{\rm dist}}
\newcommand{\supp}{{\rm supp}}
\newcommand{\spec}{{\rm Spec}}
\newcommand{\loc}{{\rm loc}}
\newcommand{\normLp}[3]{\left\lVert #1 \right\rVert_{L^{#2}{#3}}}
\newcommand*{\di}{\mathop{}\!\mathrm{d}}
\renewcommand{\Re}{{\rm Re}\,}
\renewcommand{\Im}{{\rm Im}\,}
\newcommand{\bx}{\mathbf{x}}
\newcommand{\by}{\mathbf{y}}
\title[Fermions in a Magnetic Field]{Derivation of Hartree--Fock Dynamics and Semiclassical Commutator Estimates for Fermions in a Magnetic Field}
\author{Niels Benedikter}
\address{Università degli Studi di Milano}
\urladdr{https://nielsbenedikter.de/}
\thanks{N.\,B.\ and N.\,N.\ are supported by ERC StG \textsc{FermiMath} nr.~101040991.}
\author{Chiara Boccato}
\address{Università di Pisa}
\urladdr{http://www.chiaraboccato.com/}
\thanks{C.\,B.\ and D.\,M.\ are supported by MUR–Italian Ministry of University and Research and Next Generation EU within PRIN 2022AKRC5P “Interacting Quantum Systems: Topological Phenomena and Effective Theories”}
\author{Domenico Monaco}
\address{Università degli Studi di Roma ``La Sapienza''}
\urladdr{https://sites.google.com/view/dmonaco}
\thanks{D.\,M.\ is also supported by MUR within PNRR Project nr.~PE0000023-NQSTI, and by Sapienza Universit\`a di Roma within Progetto di Ricerca di Ateneo 2021, 2022 and 2023.}
\author{Ngoc Nhi Nguyen}
\address{Università degli Studi di Milano}
\email{ngoc.nguyen@unimi.it}
\urladdr{https://sites.google.com/view/ngocnhinguyen/english-version}
\thanks{This work has been carried out under the auspices of the GNFM-INdAM (Gruppo
Nazionale per la Fisica Matematica — Istituto Nazionale di Alta Matematica)}
\begin{document}

	\begin{abstract}
	We study the quantum dynamics of a large number of interacting fermionic particles in a constant magnetic field. In a coupled mean-field and semiclassical scaling limit, we show that solutions of the many-body Schrödinger equation converge to solutions of a non-linear Hartree--Fock equation. The central ingredient of the proof are certain semiclassical trace norm estimates of commutators of the position and momentum operators with the one-particle density matrix of the solution of the Hartree--Fock equation. In a first step, we prove their validity for non-interacting initial data in a magnetic field by generalizing a result by Fournais and Mikkelsen \cite{Fournais-Mikkelsen-2020}.
	We then propagate these bounds from the initial data along the Hartree--Fock flow to arbitrary times.
	\end{abstract}

    	\maketitle
	
	\tableofcontents
	
\section{Introduction}
A wide variety of condensed matter systems can be quantum-mechanically described as gases of interacting fermionic particles. This includes ordinary metals, as well as semiconductors. Of particular interest is the description of fermions in two-dimensional materials subject to a homogeneous external magnetic field, due to the possible emergence of (fractional) quantization of the conductivity in the quantum Hall effect. The topological nature of this phenomenon reflects in its robustness against changes in the microscopic details of the model, a form of universality. Of particular relevance is the fact that the Hall conductivity stays quantized even in presence of weak interactions \cite{giuliani2017universality, bachmann2018quantization, monaco2019adiabatic, teufel2020non, wesle2024exact}. Fractionalization of the Hall charge is believed to be a consequence of strong correlations between particles and continues to be a topic of active investigation, despite recent important progress in its mathematical understanding \cite{bachmann2021rational, bachmann2024tensor}. To narrow the range of prerequisites for the emergence of the quantum Hall effect, we consider the widely studied mean-field scaling regime and prove that it gives rise only to weak correlations between particles, i.e., the time evolution of the system's state remains well approximated by a Slater determinant evolving by a Hartree--Fock equation. Our proof adapts the many-body analysis in Fock space of \cite{BPS-2014-MFevol}; this method, however, relies crucially on certain semiclassical properties of the solution of the Hartree--Fock equation, which turn out to be challenging to establish for anything but non-interacting fermions on the torus. At the core of our proof is therefore the task of establishing the semiclassical properties of non-interacting fermions in a magnetic field as initial data, followed by the propagation of these semiclassical properties to arbitrary times along the Hartree--Fock flow which includes both magnetic field and interaction. The proof of the semiclassical properties of the initial data generalizes the method of \cite{Fournais-Mikkelsen-2020}, whereas the propagation generalizes \cite{BPS-2014-MFevol}.
	
	\smallskip

	The starting point of our discussion is the one-particle magnetic Schrödinger operator on $L^2(\R^d)$ for $d \in \Nbb_{\geq 1}$ given by
	\begin{equation}\label{def:magnetic-Sch-op}
		\Hhb := (-i\hbar\nabla-b a(x))^2+ V,
	\end{equation}
	where $\hbar\in(0,\hbar_0]$ is the semiclassical parameter (where a small $\hbar_0>0$ is fixed), $b\geq 0$ tunes the strength of the magnetic field, $a=(a_1,\ldots,a_d):\R^d\to\R^d$ is an external magnetic vector potential, and $V:\R^d\to\R$ is an external scalar potential. We state the precise hypotheses on the potentials $a$ and $V$ in \cref{ass:a-V_gene}, that define $\Hhb$ as a self-adjoint operator on $L^2(\R^d)$, bounded from below. We also make the choice to deal with the case where $\Hhb$ has discrete spectrum $(\lambda_n)_{n\in\N_{\geq 1}}$, for instance under the additional assumption that $V$ is \emph{confining}, i.e.\ when $V(x)\to +\infty$ as $\abs{x}\to\infty$.    
	\smallskip

	To describe the interacting many-body evolution of the fermionic system, we introduce the $N$-particle Hilbert space $L^2_\textnormal{a}(\Rbb^{dN})$ of totally antisymmetric wave functions, i.\,e., those $\psi \in L^2(\Rbb^{dN})$ satisfying
	\[
		\psi(x_{\sigma(1)},\ldots, x_{\sigma(N)}) = \sgn(\sigma) \psi(x_1,\ldots,x_N) \quad \textnormal{for all }\sigma \in \mathcal{S}_N\;.
	\]
	The Hamiltonian for the interacting $N$-body system is defined using a pair interaction potential $W:\Rbb^d \to \Rbb$ as
	\begin{equation}\label{eq:Nbodyham}
	   H_{\hbar,b}^N := \sum_{i=1}^N (H_{\hbar,b})_i + \lambda \sum_{1 \leq i < j \leq N} W(x_i - x_j)	\;.
	\end{equation}
	By $(H_{\hbar,b})_i$ we denote the Hamiltonian $H_{\hbar,b}$ acting on the variables of the $i$-th particle. The coupling of the mean-field scaling limit to the semiclassical scaling \cite{NS81} is realized by choosing
	\begin{equation}	\label{eq:mf}
	\lambda := N^{-1} \;, \qquad \hbar := N^{-1/d} \;, \quad\qquad N \to \infty \;.
	\end{equation}
    (Different fermionic scaling regimes, which are simpler in that the semiclassical scaling is not coupled to the mean-field scaling, have been studied by \cite{FK11,PP16}.)
	In the non-interacting case $W =0$, the eigenvectors of $H_{\hbar,b}^N$ can be obtained as antisymmetric tensor products (\emph{Slater determinants}) of the eigenfunctions of $H_{\hbar,b}$; more specifically, if $H_{\hbar,b} \psi_n = \lambda_n \psi_n$ for~$N$ orthonormal elements $\psi_n$ of $L^2(\Rbb^d)$, then
	\begin{equation}	\label{eq:slater}
		\Psi_N := \bigwedge_{n=1}^N \psi_n\;, \qquad \Psi_N(x_1,\ldots,x_N) = \frac{1}{\sqrt{N!}} \det\left( \psi_i(x_j) \right)_{i,j=1,\ldots,N}\;,
	\end{equation}
	is a normalized eigenvector of $H_{\hbar,b}^N$, with eigenvalue $E_N = \sum_{n=1}^N \lambda_n$. The $\psi_n$ are also called the \emph{orbitals} of the Slater determinant. The \emph{one-particle reduced density matrix} corresponding to $\Psi_N$ is a trace-class operator on the one-particle Hilbert space $L^2(\Rbb^d)$ defined (up to the normalization by $N$) as the partial trace over $N-1$ particles of the projection onto $\Psi_N$, i.\,e.,
	\[
		\gamma^{(1)}_{\Psi_N} := N \tr_{N-1} \lvert \Psi_N \rangle \langle \Psi_N \rvert
	\]
    or equivalently in terms of its integral kernel
    \[
    \gamma^{(1)}_{\Psi_N}(x;y) := N \int_{(\R^d)^{(N-1)}} \di x_2 \ldots \di x_N \Psi_N(x,x_2,\ldots,x_N) \overline{\Psi_N(y,x_2,\ldots,x_N)} \;.
    \]
	If $\Psi_N$ is the Slater determinant \cref{eq:slater}, then $\gamma^{(1)}_{\Psi_N} = \sum_{n=1}^N \lvert \psi_n \rangle \langle \psi_n\rvert$, the orthogonal projection on the orbitals. Vice versa, given any rank-$N$ projection, one can obtain a set of $N$ orbitals from its spectral decomposition and use them to construct a Slater determinant, which is up to a phase factor unique. In particular, in the non-interacting case the ground state of the system described by $H_{\hbar,b}^N$ is the Slater determinant  obtainable from the spectral projection $\indic_{(-\infty,\mu]}(\Hhb)$ where the \emph{chemical potential} $\mu$ is adjusted so that $\operatorname{rank} \indic_{(-\infty,\mu]}(\Hhb) = N$.

	\smallskip

	This simple relation between the one-body theory and the many-body theory breaks down completely when the interaction does not vanish, $W \not= 0$. Since in typical condensed matter systems the number of interacting particles is of order $N \sim 10^{23}$, even a numerical analysis of the interacting many-body problem has to rely on approximation methods. Such methods include, in increasing order of precision: Thomas--Fermi density functional theory and the Vlasov equation for the time evolution \cite{NS81,Spo81}; Hartree--Fock theory; the random-phase approximation \cite{BNP+20,BNP+21,BNP+22,BPSS23,BL23,CHN22,CHN23,CHN23a,CHN24}. The derivation of Thomas--Fermi theory for fermions in magnetic fields was discussed by \cite{FM20a}, while in the same setting \cite{gontier2023density} discusses density functional theory and properties of the reduced Hartree--Fock minimizers. In the present paper we focus on the derivation of Hartree--Fock theory for fermions in a magnetic field.

	General elements of $L^2_\textnormal{a}(\Rbb^{dN})$ can always be written as linear combinations of Slater determinants. The central principle of Hartree--Fock theory is to restrict attention to the submanifold of single Slater determinants, i.\,e., not admitting any linear combinations. The remaining degree of freedom is then the choice of the orbitals $\psi_n$. In time-independent Hartree--Fock theory, this gives rise to a variational problem over the orbitals of the expectation value of $H_{\hbar,b}^N$ as an approximation to the ground state energy; in time-dependent Hartree--Fock theory one aims to find a trajectory in the submanifold of Slater determinants that stays as close as possible to the true many-body evolution in $L^2_\textnormal{a}(\Rbb^{dN})$. Since Slater determinants can be identified with rank-$N$ projections, time-dependent Hartree--Fock theory provides a (non-linear) equation for the evolution of a rank-$N$ projection, the time-dependent Hartree--Fock equation
\begin{equation}\label{eq:HF-dyn}
	i\hbar\partial_t\omega_{N,t}=\Big[h_{\rm HF}(t),\omega_{N,t}\Big],\quad \omega_{N,t=0}=\omega_N\;.
\end{equation}
The operator $h_{\rm HF}(t)= H_{\hbar,b}+ W\ast\rho_t-X_t$ is the Hartree--Fock Hamiltonian, which depends itself on the solution of the equation through the density $\rho_t(\bx)=N^{-1}\omega_{N,t}(\bx,\bx)$ and the exchange operator, i.\,e., the operator $X_t$ with the integral kernel $X_t(\bx,\by)=N^{-1}\omega_{N,t}(\bx,
\by)W(\bx-\by)$ for $\bx,\by\in(\R^d)^N$.

	\smallskip

	We are particularly interested in the description of a \emph{quantum quench}: an initial Slater determinant is prepared for the non-interacting system with $W=0$, then, at time $t=0$, the interaction is abruptly switched on ($W \not=0$) and the system evolves by the many-body Schrödinger equation
	\begin{equation}\label{eq:SE-dyn}
	i\hbar \partial_t \Psi_N(t) = H_{\hbar,b}^N \Psi_N(t)
	\end{equation}
	Since the interaction in the mean-field scaling \cref{eq:mf} is weak, we can expect the time-dependent Hartree--Fock equation to provide a good approximation to the one-particle reduced density matrix $\gamma^{(1)}_{\Psi_N(t)}$ obtained from the many-body Schr\"odinger evolution. This expectation is confirmed by \cref{thm:hf}. Its proof depends crucially on the presence of a subtle semiclassical structure in the initial data, which we believe is of independent interest, also in view of the broader context discussed below in \cref{sec-intro:state_of_the_art}. We establish this semiclassical structure in \cref{thm:cb-princ}. The propagation of the semiclassical structure along the Hartree--Fock flow is also non-trivial; we discuss it in \cref{thm:propag-semicl-magn}.

\subsection{Notation}

	For $x\in\R^d$ we write $\abs{x}:=\big(\sum_{j=1}^d x_j^2\big)^{\frac 12}$ for the Euclidean norm and $\crochetjap{x}:=(1+\abs{x}^2)^{\frac 12}$ for the Japanese bracket. We use $\N$ for the set of natural numbers that contains $0$ and $\N_{\geq 1}$ for natural numbers starting from $1$. For compact operators $H$ acting on $L^2(\R^d)$, we introduce the trace and Hilbert--Schmidt norms by
	\begin{equation*}
		\norm{H}_1=\norm{H}_{\tr}:=\tr\Big(\sqrt{H^*H}\Big) \;, \qquad
		\norm{H}_2=\norm{H}_{\text{HS}}:=\tr\Big(H^*H\Big)^{\frac 12} \;.
	\end{equation*}
	Moreover we denote by $\norm{\cdot}_\op$ the operator norm. If $H=(H_1,\ldots,H_n)$ is an $n$-tuple of operators, then any of the above norms is understood in the following $\ell^2$--sense:
	\begin{equation*}
		\norm{H}_{(\ast)} := \Big(\sum_{j=1}^d\norm{H_j}_{(\ast)}^2\Big)^{\frac 12}\;, \qquad (\ast) \in \{1,2,\op\}.
	\end{equation*}
	
	\subsection{First Main Result: Semiclassical Structure of Spectral Projections}\label{sec-intro:main-result}
    
    Let us recall how to define more formally the magnetic Schr\"odinger operator \eqref{def:magnetic-Sch-op} and present the setting of our first main result.

    \begin{assump}\label{ass:a-V_gene}
    Let $a:\R^d\to\R^d$, $V:\R^d\to\R$ be such that
     $a_j\in L^2_\loc(\R^d)$ for any $j\in\{1,\ldots,d\}$ and that there is a decomposition $V=V_+-V_-$, such that $V_+\in L^1_\loc(\R^d)$ and $V_-\in L^p(\R^d)+L^\infty_\varepsilon(\R^d)$ with
	\begin{equation}\label{eq:cond-p}
			p=1 \ \text{ if }\ d=1\,, \qquad
			p>1 \ \text{ if }\ d=2\,, \qquad
			p\geq \frac{d}{2} \ \text{ if }\ d\geq 3\,.
	\end{equation}
    \end{assump}
    
	 Moreover, we define $\Hhb$ as the Friedrichs extension of a minimal operator $H_{\hbar,b}^{\min}: u\mapsto(-i\hbar\nabla-ba(x))^2u+V(x)$ of domain $\dom(H_{\hbar,b}^{\min})=\test{\R^d}$, and associated to the domain
	\begin{align*}
		\dom(\Hhb):=\{ u\in L^2(\R^d) \: :&\: (-i\hbar\nabla-ba(x))u\in L^2(\R^d), \: \sqrt{V_+}u\in L^2(\R^d), \\&\:  \text{and }(-i\hbar\nabla-ba(x))^2u+V(x)u\in L^2(\R^d)\} \;
		\;,
	\end{align*}
	and for any $u\in L^2(\R^d)$ such that $(-i\hbar\nabla-ba(x))u\in L^2(\R^d), \: \sqrt{V_+}u\in L^2(\R^d)$,  and $v\in\dom(H_{\hbar,b})$
	\begin{align*}
		\int_{\R^d}\overline{(i\hbar\nabla+ba(x))u(x)}(i\hbar\nabla+ba(x))v(x) \di x +\int_{\R^d}V(x)\overline{u(x)}v(x) \di x =\prodscal{u}{\Hhb v}.
	\end{align*}
	The operator $\Hhb$ is semibounded from below and is essentially self-adjoint.
	When the potential is confining, i.\,e.,  $V_+(x) \to \infty$ as $\abs{x}\to \infty$, then $\Hhb$ has compact resolvent. As a consequence, it has a non-decreasing sequence of eigenvalues $(\lambda_n)_{n\in\N}$ of finite multiplicities such that $\lambda_n \to+\infty$ as $n\to +\infty$ and is diagonalizable in an orthonormal basis of eigenfunctions $(\psi_n)_{n\in\N}$. To keep the paper self-contained, we sketch the proof in \cref{sec:basic-prop-magnetic}; for further background, see \cite{Lieb-Seiringer-2010Stability,Lewin-TS_et_MQ,Frank-Laptev-Weidl-2023SchrLT}.

\smallskip
	
	Our first  main result is based on the following general assumptions on the potentials, that are similar to the ones in \cite{Sobolev-1995,Fournais-Mikkelsen-2020}. This hypothesis is a little more general than the ones above required to define the magnetic Schr\"odinger operator, since that includes confining potential $V$. On the other hand, we assume smoothness of $V$ and of $a$ in a compact set of the Euclidean space.
	
	\begin{assump}\label{ass:a-V}
		Let $a:\R^d\to\R^d$, $V:\R^d\to\R$, and $\mu\in\R$ be such that there exist an open set $\Omega_V\subset\R^d$ and a constant $\varepsilon>0$ such that
		\begin{itemize}
            \item $\mu>\inf_{\R}V$,
			\item $a\in\cC^\infty(\Omega_V,\R^d)$ and $a\indic_{\Omega_V^c}\in L^2_\loc(\R^d,\R^d)$,
			\item $V\in\cC^\infty(\Omega_V)$, $V\indic_{\Omega_V^c}\in L^1_\loc(\R^d)$, and $(V-\mu)_-\in L^{1+\frac d2}(\R^d)$,
			\item there exists a bounded set $\Omega_\varepsilon$ such that $\bar{\Omega}_\varepsilon\subset\Omega_V$ and $\{x\in\R^d\: :\: V(x)-\mu\leq\varepsilon\}\subset\Omega_\varepsilon$.
		\end{itemize}
	\end{assump}
	
	We prove that the spectral projector of the magnetic one-body operator (or equivalently, the one-body reduced density matrix of the Slater determinant obtained as the ground state of the corresponding non-interacting many-body system) satisfies commutators bounds, to which we refer as the \emph{semiclassical structure}.	
	
	\begin{theorem-princ}[Semiclassical Structure of Initial Data]\label{thm:cb-princ}
		Let $a:\R^d\to\R^d$, $V:\R^d\to\R$ and $\mu\in\R$ satisfy \cref{ass:a-V}. Let $C_0>0$ and $h_0>0$, let $P_j:=-i\hbar\partial_{x_j}-b a_j(x)$ be the $j$-th coordinate of the magnetic one–particle momentum operator $P=-i\hbar\nabla-b a(x)$ for $j\in\{1,\ldots,d\}$, and let $H_{\hbar,b}=(-i\hbar\nabla-ba(x))^2+V(x)$ be the self-adjoint operator defined as above.
        Let us also assume that $\mu$ belongs to the resolvent set of the operator $P^2$.
		Let $f:\R^d\to\R$ be a smooth function with polynomial growth, i.\,e., such that there exist $m\in\N_{\geq 1}$, $R>0$ and $C_1>0$ such that
		\begin{equation}\label{ass:polynom-growth}
			\abs{f(x)}\leq C_1\crochetjap{x}^m \quad \forall x\in\R^d,\:\abs{x}\geq R\,.
		\end{equation}
		Then, there exists $C>0$ such that, for any $h\in(0,h_0]$, for any $b\in(0,C_0\hbar^{-1})$, and for any $j\in\{1,\ldots,d\}$, one has
		\begin{equation}\label{eq:cb-princ-f(x)}
			\norm{[\indic_{(-\infty,\mu]}(\Hhb),f(x)]}_1\leq C\crochetjap{b}\hbar^{1-d}
		\end{equation}
		and 
		\begin{equation}\label{eq:cb-princ-momentum}
			\norm{[\indic_{(-\infty,\mu]}(\Hhb),P_j]}_1\leq C\crochetjap{b}^2\hbar^{1-d}.
		\end{equation}
	\end{theorem-princ}

\begin{rmk}\label{rmk:cb-princ_proof}
	\begin{enumerate}
    \item
    The assumptions on the regularity of $a$, $V$, and $f$ are due to the method of our proof. As in \cite{Fournais-Mikkelsen-2020}, the idea is to write, for $R=f(x)$ or $R=P$,
	\begin{align*}
		&\norm{[\indic_{(-\infty,\mu]}(\Hhb),R]}_1
		\\&\qquad\leq 	\norm{[\indic_{(-\infty,\mu]}(\Hhb),R\chi(x)]}_1\ +\norm{[\indic_{(-\infty,\mu]}(\Hhb),R(1-\chi(x))]}_1. 
	\end{align*}
	for a well chosen bump function $\chi\in\test{\R^d}$ supported on a bounded neighborhood of the set $\{x\in\R^d\::\: V(x)-\mu\leq \varepsilon\}$.

    \smallskip
    
            We control the first summand by local commutator bounds (\cref{thm:cb-loc-gene}) of order $\cO(\crochetjap{b}^{r+1}\hbar^{1-d})$, where $r=0$ when $R=f(x)$ and $r=1$ when $R=P$. This requires the smoothness of $f$ and the local smoothness of the potentials $a$ and $V$.
\smallskip
        
		The estimates on the second summand are better and are $\cO(\crochetjap{b}^{r+1}\hbar^{2M-d})$ for all $M\in\N$.
		This is a consequence of the so-called Cwikel--Lieb--Rosenblum bounds (\cref{cor:CLR-semicl}) and Agmon estimates (\cref{lemma:Agmon}), that provide a bound on the number of negative eigenvalues of $\Hhb-\mu$ and the exponential decay of eigenfunctions of Schrödinger operators on the classically-forbidden region $\{x\in\R^d\::\: V(x)-\mu> \varepsilon\}$, respectively. All we need is that $a\in L^2_\loc(\R^d)$, $V\in L^1_\loc(\R^d)$ and that
		\begin{align*}
			\sup_{x\in\R^d}\abs{x}^{M}\abs{f(x)}e^{-\abs{x}}
		\end{align*}
		is finite for any $M\in\N$. This is satisfied by \cref{ass:a-V} and the polynomial growth of $f$ stated in \cref{ass:polynom-growth}.
        \item Our bounds are useful as long as the strength $b$ of the magnetic field does not become too large. Scaling limits with strong magnetic field are also of interest, but describe a different phenomenology, for example the gyrokinetic limit \cite{PR24} where $b \sim N$, or a semi-classical description of the  last, partially filled Landau level \cite{Per24} where $b \sim N\hbar$. A large number of different regimes have been identified by \cite{LSY92,LSY94,LSY94a} for atoms in magnetic fields and by \cite{LSY95} for quantum dots in magnetic fields.
\item 
	The assumption \cref{ass:polynom-growth} can be relaxed to less regular $f$ as long as $\hat{f}\in L^1(\R^d)$. This is a consequence of \cref{ass:polynom-growth} with $f(x)=e^{i\alpha\cdot x}$ for all $\alpha\in\R^d$ and the Fourier inversion formula.
	\end{enumerate}
\end{rmk}

\subsection{Commutator Estimates: State of the Art and Applications}\label{sec-intro:state_of_the_art}

 Estimates similar to the ones stated in \cref{thm:cb-princ} have been proved for spectral projections of standard Schrödinger operators with $f(x)=x_j$ for any $j\in\{1,\ldots,d\}$. This has been done for the harmonic oscillator \cite[Section 3.1]{Benedikter-2022effect} and \cite[Section 3]{Lafleche-2024optimal}, where the proof relies on explicit calculations with creation and annihilation operators. The general case has been treated in \cite{Fournais-Mikkelsen-2020} in presence of smooth confining potentials, and more recently in \cite{Cardenas-Lafleche-2025} for confining potentials of class $\cC^{1,1/2}$. We reproduce the spirit of the proof of \cite{Fournais-Mikkelsen-2020} that relies on the knowledge of the optimal Weyl law and multiscale techniques. The techniques presented in \cite{Cardenas-Lafleche-2025} rely on the Birman–Solomyak double integral approach.

As was mentioned above, this kind of bounds is motivated by the derivation of the Hartree--Fock equations for interacting fermions and usually constitute the assumption on the initial data for these equations (see \cref{sec-intro:propag-semicl-structure}); many recent results have been achieved in this respect for Schrödinger operators in the non-relativistic case \cite{BPS-2014-MFevol,BJPS-2016-MFevol-mixed}, in the relativistic case \cite{BPS-2014-MFevol-relativ}, and in large-volume limits \cite{FMA-2023-EDyn,FMA-2024-EDyn}. One of the main steps is the propagation of the bound \cref{eq:cb-princ-f(x)} for $f(x)=\frac{1}{1+\abs{\alpha}}e^{i\alpha\cdot x}$ uniformly with respect to $\alpha\in\R^d$.
Our estimates can be useful to study the same question in the magnetic case for sufficiently regular interaction potentials $W$ (compare \eqref{ass:w}).

Let us now mention other natural settings where this type of estimate appears, although
it is not possible to directly deduce applications of our results. Indeed, they require less
regularity on the function $f$ or that require replacing the trace norm by more general Schatten
norms.
We recall that the \emph{Wigner transform} of an operator $\gamma$ on $L^2(\R^d)$ to a function on phase space is defined by
\begin{equation*}
	\cW_\gamma:(x,\xi)\in\R^d\times\R^d\mapsto\int_{\R^d} e^{\frac i\hbar y\cdot\xi}\gamma\Big(x+\frac y2,x-\frac y2\Big) \di y\,,
\end{equation*}
where $\gamma(x,y)$ denotes the integral kernel of $\gamma$ in the point $(x,y)\in\R^d\times\R^d$ (provided it exists). One observes that
\begin{equation} \label{eq:qu-grad}
	\cW_{[x,\gamma]}=-i\hbar \nabla_\xi\cW_\gamma \quad\text{ and }\quad \cW_{[-i\hbar\nabla,\gamma]}=-i\hbar \nabla_x\cW_\gamma\;,
\end{equation}
which relates the commutator estimates $[\indic_{(-\infty,\mu}(\Hhb),x]$ and $[\indic_{(-\infty,\mu]}(\Hhb),P]$ to the regularity of the Wigner transform  \cite{Cardenas-Lafleche-2025,Lafleche-2024optimal}. Estimates for norms of the gradients \cref{eq:qu-grad} have been derived as a quantum analogue of Sobolev inequalities \cite{Laf24a}. Moreover they have been used to discuss the semiclassical properties of thermal states of fermionic many-body systems \cite{CLS23}.

Furthermore, bounds on Schatten norms of commutators of spectral projectors \cref{eq:cb-princ-f(x)} appear in the study of determinantal processes in Random Matrix theory. As explained in \cite{Deleporte-Lambert-2024Widom}, the commutator $[\indic_{(-\infty,\mu]}(\Hhb),\indic_\Omega]$ for a bounded domain $\Omega\subset\{x\in\R^d\: :\: V(x)\leq \mu\}$ is linked to the trace of the operator $g(\indic_\Omega\indic_{(-\infty,\mu]}(\Hhb) \indic_\Omega)$:
\begin{equation*}
    \norm{[\indic_{(-\infty,\mu]}(\Hhb),\indic_\Omega]}_{\rm HS}^2=\tr(g(\indic_{(-\infty,\mu]}(\Hhb) \indic_\Omega \indic_{(-\infty,\mu]}(\Hhb)))\;.
\end{equation*}
In particular, when $g:t\in[0,1]\mapsto t(1-t)$, the above object represents the variance of the process associated to the integral kernel of $\indic_{(-\infty,\mu]}(\Hhb)$ and also provides information on the trace of $s(\indic_\Omega\indic_{(-\infty,\mu]}(\Hhb) \indic_\Omega)$ with $s:t\in[0,1]\mapsto-t\log(t)-(1-t)\log(1-t)$. The asymptotics in the semiclassical limit $\hbar\to 0$ were conjectured by Widom in the seminal work \cite{Widom-1982} from 1982, and have been related in \cite{Gioev-Klich-2006} to the entanglement entropy of the non-interacting Fermi gas.
In the context of entanglement entropy, the asymptotics in the limit $\hbar\to 0$ of the   trace of operators $g(\indic_\Omega\indic_{(-\infty,\mu]}(\Hhb) \indic_\Omega)$ for certain functions $g:\R\to\R$  are still a subject of active investigation for the non-interacting Fermi gas in various settings: for rectangular domains $\Omega$ in \cite{Gioev-Klich-2006}, for analytic $g$ in \cite{Sobolev-2013Widom}, in a magnetic field without trapping potential \cite{Charles-Estienne-2020,Leschke-Sobolev-Spitzer-2021,Pfeiffer-Spitzer-2024}, and in a trapping potential \cite{Deleporte-Lambert-2024Widom} for continuous~$g$ and smooth boundary of $\Omega$.


\subsection{Second Main Result: Validity of the Hartree--Fock Evolution}\label{sec-intro:propag-semicl-structure}

We state now the second and third results of our paper.

 The propagation of the semiclassical structure from the initial data along the Hartree--Fock flow is crucial in the derivation of the Hartree--Fock equation from the many-body Schrödinger equation.

\begin{theorem-princ}[Propagation of the Semiclassical Structure]
	\label{thm:propag-semicl-magn}
	 	Let $a\in\cC^\infty(\R^d,\R^d)$ be such that $a_j$ is linear with respect to $x$ for any $j\in\{1,\ldots,d\}$, and let $V:\R^d\to\R$ be a polynomial function of degree $m$ such that $V(x)\to \infty$ as $\abs{x}\to \infty$. Let $N\in\N_{\geq 1}$ and let $\hbar=N^{-1/d}$.
		Let $\Hhb$ be the corresponding magnetic Schrödinger operator and $P:=-i\hbar\nabla-ba(x)$. Let the interaction potential $W:\R^d\to\R$ be an even function such that
	\begin{equation}\label{ass:w}
		\int_{\R^d}(1+\abs{p}^2)|\hat{W}(p)|\di p<\infty.
	\end{equation}
	Let us assume that the initial data $\omega_N$ is a projector of rank $N$ that satisfies
	\begin{equation}\label{eq:hyp-propag-semicl-struct}
		\begin{split}
		\norm{[\omega_N,x^\beta]}_1\leq C_{X,\beta}\hbar N \quad \forall \beta\in\N^d,\:\abs{\beta}\leq m,
		\quad\text{ and }\quad
		\norm{[\omega_N,P]}_1\leq C_P\hbar N,
		\end{split}
	\end{equation}
	and that the Hartree--Fock equation \cref{eq:HF-dyn} is well-posed.
	Then, the solution of the Hartree--Fock equation \cref{eq:HF-dyn} for the initial data $\omega_N$ satisfies for any $t\geq 0$ the bounds
	\begin{equation}\label{eq:propag-semicl-struct}
		\sup_{\alpha\in\R^d}\frac{1}{1+\abs{\alpha}}\norm{[\omega_{N,t},e^{i\alpha\cdot x}]}_1\leq C_1e^{\crochetjap{b}C_2 t}\hbar N,\quad
		\norm{[\omega_{N,t},P]}_1 \leq C_1e^{\crochetjap{b}C_2 t}\hbar N,
	\end{equation}
	with $C_1,C_2>0$ that depend on $C_X$, $C_P$, on $m$, on $C_{X,\beta}$ for $\beta\in\N^d$ such that $\abs{\beta}\leq m$ and on the norm given by \cref{ass:w}.
\end{theorem-princ}
\begin{rmk}\label{rmk:propag-semicl-magn}
	\begin{enumerate}
		\item We assume that $a$ is linear so that the magnetic field matrix $B=(B_{ij})_{1\leq i,j\leq d}:=(\partial_{x_i} a_j-\partial_{x_i} a_i)_{1\leq i,j\leq d}$ is constant. 
        An important example in dimension $d=2$ is the symmetric gauge magnetic vector potential $a(x_1,x_2)=\frac 12(-x_2,x_1)$ (that is also equivalent the Landau gauge magnetic vector potential $a(x_1,x_2)=(-x_1,0)$ by a gauge transformation).

    \item
	Examples of admitted external potentials are $\abs{x}^{2m}+V_{\rm pert}$, where $m\in\N_{\geq 1}$ and the perturbation $V_{\rm pert}$ is a smooth real valued function that can be polynomials. For the proof of this case, we only use the commutator bound with the $\beta$-power of the position operator for $\abs{\beta}\leq m$. We can also consider smooth functions $V_{\rm pert}$ such that $\partial_{x_j} V_{\rm pert}=f\circ g$ that $g:\R^d\to\R$ is a sum of a polynomial of degree smaller than $2m$ and $f:\R^d\to\R^d$ be a smooth and globally Lipschitz function. Indeed, all we need in the proof is be able to apply the Duhamel formula directly for the function $t\mapsto\norm{[\omega_{N,t},\nabla V]}_1$, that appears in the bound of the commutator with the momentum (see \cref{eq-proof:semic-str_moment} in the proof). And by \cite[Theorem 1.3.1]{Aleksandrov-Peller-2016}, we use the relation
		\begin{equation}\label{eq:com-lipschitz}
			\norm{[f(A),B]}_1\leq L\norm{[A,B]}_1 \;,
		\end{equation}
      that holds for any $L$-Lipschitz function $f$, any self-adjoint operator $A$, and any bounded linear self-adjoint operator $B$.
		\item
		By \cref{thm:cb-princ}, the condition \cref{eq:hyp-propag-semicl-struct} holds as long as $b=\cO_\hbar(1)$ when $\omega_N$ is the rank-$N$ spectral projection of $\Hhb$ on its $N$ lowest eigenvalues.
		\item The bound $\norm{[\omega_N,x]}_1\leq C_X\hbar N$ implies that
			$\sup_{\alpha\in\R^d}\frac{1}{1+\abs{\alpha}}\norm{[\omega_N,e^{i\alpha\cdot x}]}_1\leq \norm{[\omega_N,x]}_1$, 
		see, e.\,g., \cite[Corollary 1.3]{Fournais-Mikkelsen-2020} or \cite[Theorem 1.3.1]{Aleksandrov-Peller-2016}.

		\item The proof of \cref{thm:propag-semicl-magn} is an adaptation of \cite{BPS-2014-MFevol} and relies on the Duhamel formula for the functions $t\mapsto\sup_{\alpha\in\R^d}\frac{1}{1+\abs{\alpha}}\norm{[\omega_{N,t},e^{i\alpha\cdot x}}_1$ and $t\mapsto\norm{[\omega_{N,t},P}_1$. The proof of \cite{BPS-2014-MFevol} does not treat the presence of an external potential $V$, except for quadratic $V$.
        What changes in the present paper is that we need to propagate the estimates with a general external potential and the magnetic field.  This amounts to the propagation of commutators of the Hartree--Fock solution with monomials $x^\beta$ for $|\beta|$ lower than the degree of the polynomial $V$.
		\item
        The propagation of a semiclassical commutator structure formulated in terms of the gradients \cref{eq:qu-grad}, without magnetic field and without external potential but with a singular interaction potential, was discussed in \cite{CLS22a,CLS24}.
	\end{enumerate}
\end{rmk}

The validity of the time-dependent Hartree--Fock equation follows by the method of \cite{BPS-2014-MFevol}, given that the validity and the propagation of the semiclassical structure have already been established. 

\begin{theorem-princ}[Validity of the Hartree--Fock Equation in a Magnetic Field] \label{thm:hf}
Let $d \in \Nbb_{\geq 1}$, and let $W$ be as in \cref{thm:propag-semicl-magn}. 
Let $\omega_N$ be a sequence of projection operators satisfying the semiclassical commutator bounds \cref{eq:hyp-propag-semicl-struct}. Let $\Psi_N(0)$ be the Slater determinant which is up to a phase uniquely determined by $\omega_N$ and let $\gamma^{(1)}_{\Psi_N(t)}$ be the one-particle reduced density matrix associated to the solution $\Psi_N(t)$ of the time-dependent many-body Schr\"odinger equation \cref{eq:SE-dyn} in the mean-field scaling regime \cref{eq:mf}, with initial data $\Psi_N(0)$. Let $\omega_{N,t}$ be the solution of the time-dependent Hartree--Fock equation \cref{eq:HF-dyn} with the same vector potential, external potential, and interaction potential as in the Schr\"odinger equation, and with initial data $\omega_{N,0} = \omega_N$.

Then there exist constants $C,c_1,c_2 >0$ such that for all $t \in \Rbb$ and for $N \in \Nbb$ we have 
\begin{align*}
\lVert \gamma^{(1)}_{\Psi_N(t)} - \omega_{N,t} \rVert_{\tr} \leq C \hbar^{-1/2} e^{c_1 e^{c_2 t}} \qquad \text{and} \qquad
\lVert \gamma^{(1)}_{\Psi_N(t)} - \omega_{N,t} \rVert_{\HS} \leq C  e^{c_1 e^{c_2 t}} \;.
\end{align*}
\end{theorem-princ}
\begin{rmk}
   The trace norm of the one-particle density matrix is conserved both by the many-body Schr\"odinger evolution and the Hartree--Fock evolution. Thus by \cref{cor:CLR-semicl} for the initial data, we conclude that the density matrices separately are of order
   \begin{equation}\label{eq:orders}
   \lVert \omega_{N,t}\rVert_\tr \sim \hbar^{-d}\;, \qquad \lVert \gamma^{(1)}_{\Psi_N(t)} \rVert_\tr \sim \hbar^{-d} \;.
   \end{equation}
   Since $\omega_N$ is a projection operator, by H\"older's inequality for Schatten norms, we obtain that the Hilbert--Schmidt norm can be trivially estimated by
   \begin{equation}\label{eq:orders2}
   \norm{\omega_{N,t}}_\HS = \sqrt{\tr (\omega_{N,t}^* \omega_{N,t})} \leq \sqrt{\lVert \omega_{N,t}\rVert_\op \lVert \omega_{N,t}\rVert_\tr} \leq \hbar^{-d/2}\;,
   \end{equation}
   Thus \cref{thm:hf} shows that the trace norm estimate gains a factor $\hbar^{d-1/2}$ as $\hbar \to 0$ compared to \cref{eq:orders}, whereas the Hilbert--Schmidt norm estimate gains only $\hbar^{d/2}$ over \cref{eq:orders2}. In particular, our estimates are significantly stronger than those obtained earlier in \cite{FM24} for fermions in a magnetic field in two dimensions: the bounds proved there would correspond in our notation to an estimate of the form $\norm{\gamma_{\Psi_N(t)}^{(1)} - \omega_{N,t}}_\HS \le C \, \hbar^{-1} \, t$, which is linear (rather than super-exponential) in $t$ but is only meaningful for times much shorter than the semiclassical scale.
\end{rmk}
\begin{rmk}
\Cref{thm:hf} can be generalized in various directions. These have been discussed already in \cite{BPS-2014-MFevol} and carry over to the case of fermions in magnetic fields.
\begin{enumerate}
\item One can admit initial data that apart from the $N$ particles in the orbitals of the spectral decomposition of $\omega_N$ also contain a uniformly bounded number of order of particles in other orbitals.
\item One can show that the $k$-particle reduced density matrices $\gamma^{(k)}_{\Psi_N(t)}$ converge to a Wick formula in terms of $\omega_{N,t}$, for any fixed value of $k \in \Nbb$.
\end{enumerate}
Both generalizations come at the cost of a weaker bound for the trace norm, gaining only an order $\hbar^{d/2}$ instead of $\hbar^{d-1/2}$.
\end{rmk}

%

\section{Semiclassical Structures for Spectral Projectors and Proof of \cref{thm:cb-princ}}\label{sec:proof-main-result}

\subsection{The Local Case}\label{sec:cb-local-estimates}

	The following assumption defines a self-adjoint operator which is locally equal to a magnetic Schrödinger operator with smooth and compactly supported magnetic and external potentials. It appeared in \cite{Sobolev-1995}, then in \cite{Fournais-Mikkelsen-2020}.
	
	\begin{assump}\label{ass:loc-magn}
		The operator $H_{\hbar,b}$ on $L^2(\R^d)$ and the open set $\Omega\subset\R^d$ satisfy the following conditions
		\begin{itemize}
			\item[(i)]  $H_{\hbar,b}$ is self-adjoint and semi-bounded from below in $L^2(\R^d)$,
			\item[(ii)] there is an inclusion $\cC_0^\infty(\Omega)\subset\cD(\Hhb)$,
			\item[(iii)] there exist smooth real-valued functions $(a_{\loc})_j\in \test{\R^d}$ for any $j\in\{1,\ldots,d\}$ and $V_{\loc}\in\test{\R^d,\R}$, such that $\Hhb u=\Hloc u$ for any $u\in\test{\Omega}$, where $\Hloc:=(-i\hbar\nabla-ba_{\loc}(x))^2+V_{\loc}(x)$.
		\end{itemize}
	\end{assump}
	
	We state a local version of our main theorem. It is the analogue of \cite[Theorem 3.6]{Fournais-Mikkelsen-2020} for magnetic Schrödinger operators.

	\begin{theorem}[{Local magnetic commutators bounds}]\label{thm:cb-loc-gene}
		Let $C_0>0$ and  $h_0>0$. Let $b>0$ and $\hbar>0$.
		Assume $(\Hhb,\Omega)$ satisfy \cref{ass:loc-magn} for $a_{\loc}\in\test{\R^d,\R^d}$ and $V_{\loc}\in\test{\R^d}$. We define the magnetic momentum $P:= -i\hbar\nabla-b a_{\loc}(x)$. Then, for any $\varphi\in\test{\Omega}$, there exists $C>0$ such that for any $j\in\{1,\ldots,d\}$, any $b\in(0,C_0\hbar^{-1})$, any $h\in(0,h_0]$, and any $m\geq 0$ and any $r\in\{0,1\}$, one has
		\begin{equation}
			\norm{[\indic_{(-\infty,0]}(\Hhb),\varphi P_j^r]}_1\leq C\crochetjap{b}^{1+r}\hbar^{1-d}.
		\end{equation}
	\end{theorem}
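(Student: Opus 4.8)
The plan is to follow the strategy of Fournais–Mikkelsen, adapting each of its ingredients to the magnetic setting. Fix $\varphi \in \test{\Omega}$ and write $P = \Hhb - b^2 |a_{\loc}|^2 + \ldots$; more usefully, note that on $\test{\Omega}$ the operator $\Hhb$ coincides with the globally-defined smooth magnetic Schrödinger operator $\Hloc = (-i\hbar\nabla - ba_{\loc})^2 + V_{\loc}$, and since $\varphi$ is compactly supported in $\Omega$, every commutator below can be computed with $\Hloc$ in place of $\Hhb$ after inserting suitable cutoffs; the quasi-locality of the functional calculus (via an almost-analytic extension / Helffer–Sjöstrand formula) will control the error terms. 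So the first step is to reduce to $\Hloc$, a genuine magnetic Schrödinger operator on all of $\R^d$ with $\test{}$ coefficients, for which pseudodifferential and Weyl-law machinery is available.

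The core of the argument is a multiscale decomposition in phase space. I would introduce a partition of unity adapted to the classical energy surface $\{p^2 + V_{\loc}(x) = 0\}$: away from this surface the spectral projector $\indic_{(-\infty,0]}(\Hloc)$ is smoothing (by elliptic regularity / non-stationary phase, with the semiclassical parameter $\hbar$), and the commutator with $\varphi P_j^r$ is $\cO(\hbar^\infty)$ there, or at least $\cO(\hbar^{1-d})$ with room to spare. Near the energy surface one localizes to boxes of side $\hbar^{1/2}$ (the natural scale for the quadratic approximation of the symbol), uses the optimal Weyl law to count that only $\cO(\hbar^{-d})$ such boxes meet $\supp\varphi$ near the surface, and on each box estimates the trace norm of the localized commutator by $\cO(\crochetjap{b}^{1+r}\hbar^{1})$ — the extra $\hbar$ coming from the commutator structure $[\indic_{(-\infty,0]}(\Hloc), \cdot\,]$ being morally $\hbar \times (\text{gradient of the symbol})$, exactly as in \cref{eq:qu-grad}. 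Summing over the $\cO(\hbar^{-d})$ boxes gives $\cO(\crochetjap{b}^{1+r}\hbar^{1-d})$. The factor $\crochetjap{b}^{1+r}$ is where the magnetic field enters: each momentum operator $P_j$ contributes one factor of the symbol $\xi - ba_{\loc}(x)$, which on the relevant region is of size $\cO(\crochetjap{b})$ since $V_{\loc}$ is bounded there, hence $P_j^r$ contributes $\crochetjap{b}^r$; and one further $\crochetjap{b}$ arises from the magnetic corrections to the symbol calculus / the Weyl symbol of $\indic_{(-\infty,0]}(\Hloc)$ and its derivatives, uniformly for $b \in (0, C_0\hbar^{-1})$. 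I would track carefully, in each box-estimate, the dependence on $b$ to confirm it never worsens beyond $\crochetjap{b}^{1+r}$ — this uses that $b\hbar \leq C_0$ keeps the magnetic length $\gtrsim \hbar/(b\hbar) = \hbar^{1/2} \cdot (\hbar^{1/2} b)^{-1}$ comparable to or larger than the multiscale box size, so the magnetic field does not destroy the semiclassical localization.

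The main obstacle, I expect, is making the quadratic/multiscale approximation rigorous \emph{uniformly in $b$}: for a standard Schrödinger operator the symbol near a point of the energy surface is approximated by a quadratic form with $\hbar$-independent curvature, but here the symbol $|\xi - ba_{\loc}(x)|^2 + V_{\loc}(x)$ has $x$-dependence in its leading part with a coefficient growing like $b$, so the natural rescaling that flattens the symbol on a box depends on $b$, and one must check that the resulting errors in the functional calculus (the remainder in the Helffer–Sjöstrand expansion, the number of boxes from the Weyl law, the trace-norm estimate on each box) all remain bounded by a fixed power of $\crochetjap{b}$. Concretely I would handle this by a magnetic change of gauge and dilation on each box reducing $\Hloc$ to a perturbation of a constant-field Landau-type operator plus a small potential, for which the spectral projector is explicit, and estimate the perturbation series in trace norm; the bookkeeping of how many derivatives of $a_{\loc}$ and $V_{\loc}$ are needed (finitely many, by $\test{}$ regularity) and how the constants depend on $\norm{a_{\loc}}_{\cC^k}$, $\norm{V_{\loc}}_{\cC^k}$, $C_0$, $h_0$, and $\varphi$ is routine but must be done to extract the stated clean bound $C\crochetjap{b}^{1+r}\hbar^{1-d}$. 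Once the two cases $r = 0$ and $r = 1$ are checked with these respective powers of $\crochetjap{b}$, the theorem follows, and the parameter $m\ge 0$ plays no role here (it is a vestige for the later global application where polynomial growth of $f$ is controlled separately via Agmon estimates, cf.\ \cref{rmk:cb-princ_proof}).
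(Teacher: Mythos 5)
Your plan diverges from the paper's proof (which combines a dyadic decomposition in \emph{energy} around the Fermi level, the local Weyl law on intervals of length $\sim 4^n\hbar$ from \cref{thm:WL-finite-int}, the double-commutator gain \cref{eq-proof:non-critical-cond_3}, and then Sobolev's \emph{position-space} multiscale covering to remove the non-criticality condition and to handle large $b$), and as written it has genuine gaps precisely where the difficulty of the theorem sits. First, the per-box estimate near the energy surface is not justified: the heuristic ``the commutator is morally $\hbar\times$ gradient of the symbol'' fails there because the symbol of $\indic_{(-\infty,0]}(\Hloc)$ is discontinuous across $\{|\xi-ba_{\loc}(x)|^2+V_{\loc}(x)=0\}$, and it is exactly the contribution of a shell of states around the Fermi level that produces the $\hbar^{1-d}$; your counting is also off (a width-$\hbar^{1/2}$ neighbourhood of the surface meets $\sim\hbar^{1/2-d}$ boxes of side $\hbar^{1/2}$, not $\hbar^{-d}$ boxes each gaining a full factor $\hbar$). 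The paper avoids this by writing $\indic_{(-\infty,0]}\varphi P_j^r\indic_{(0,\infty)}$, splitting both projections into dyadic spectral windows $\chi^{\pm}_{n,\hbar}(\Hhb)$, bounding each block by $4^{-2(m-1)}\hbar^{-2}\norm{\chi^-_{n,\hbar}[[\varphi P_j^r,\Hhb],\Hhb]\chi^+_{m,\hbar}}_1$ as in \cref{eq-proof:non-critical-cond_1}, and then using the Weyl law on small intervals plus the operator bound \cref{eq-proof:non-critical-cond_3}; nothing in your sketch substitutes for this mechanism. Second, your proposed reduction on each box to a constant-field Landau operator plus a ``small'' potential, handled by a trace-norm perturbation series for the spectral projector, is not substantiated: there is no smallness parameter (the Landau spacing is $\hbar b$, which for $b$ of order one or smaller is not large compared with the variation of $V_{\loc}$ on the box), the projector is not an analytic function of the perturbation when the Fermi level sits near a Landau level, and uniformity in $b\in(0,C_0\hbar^{-1})$ is exactly the delicate point you defer to ``routine bookkeeping''.

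Two further points you do not address. The optimal Weyl law you invoke for the box counting is only available under a non-criticality condition of the type $\abs{V_{\loc}}+\abs{\nabla V_{\loc}}^2+\hbar\geq c$ (cf.\ \cref{thm:WL-Sobolev-non-critical,thm:cb-loc-non-critical-cond}); at critical points of $V_{\loc}$ on the zero-energy surface it can fail, and the paper removes this hypothesis by the multiscale covering of \cref{lemma:multiscaling} with the scale function $f=M^{-1}(V^2+\abs{\nabla V}^4+\hbar^2)^{1/4}$ and a rescaling of the operator on each ball — an ingredient entirely absent from your plan. Finally, your explanation of the factor $\crochetjap{b}^{1+r}$ is not the right one: on the energy shell $\abs{\xi-ba_{\loc}(x)}^2=-V_{\loc}(x)=\cO(1)$, so the magnetic momentum symbol is \emph{not} of size $\crochetjap{b}$; in the paper the powers of $b$ come from the commutator identity $[\Hloc,P_j]=i\hbar\big(2b\sum_k B_{kj}P_k-i\hbar b\,\partial_{x_k}B_{kj}+\partial_{x_j}V_{\loc}\big)$ (one factor $\crochetjap{b}^r$) and from the second multiscale step for $b\geq b_0$, where the covering radius is taken $\sim b^{-1}$ and the rescaling costs $f(x_k)^{-r-1}\sim\crochetjap{b}^{1+r}$ relative to the volume factor. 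Without a rigorous replacement for the near-surface analysis, the critical-point issue, and a correct tracking of the $b$-dependence, the proposal does not yield the stated bound.
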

	
	To prove the above theorem, we first prove the commutators bounds under a non-criticality condition (see \cref{thm:cb-loc-non-critical-cond}), then we use multiscale techniques from \cite{Sobolev-1995,Fournais-Mikkelsen-2020} to get rid of this condition.

\subsubsection{Preliminaries and auxiliary local bounds}\label{subsec:local-aux}

We state in this section trace norm estimates related to $\varphi P_j^r f(\Hhb)$ for $f\in\test{\R}$, $\varphi\in\test{\R^d}$, and $r\in\{0,1\}$. These estimates are used to prove \cref{thm:cb-loc-gene}. They correspond to the magnetic version of the auxiliary bounds of \cite{Fournais-Mikkelsen-2020}, that are based on ideas in \cite{Sobolev-1995}.

We choose to present basic proofs that do not rely on pseudodifferential calculus (see, e.\,g., \cite{Dimassi-Sjostrand-1999}), although this could shorten the proofs. It is possible to obtain basic proofs since we use the structure of the local Schrödinger operators and identities of the resolvent.


We start by recalling the \emph{diamagnetic inequality}, which allows us to compare the magnetic Laplacian with the usual Laplacian. We use this bound in order to deduce similar properties as for the usual Schrödinger operators (see \cite{Lewin-TS_et_MQ,Lieb-Seiringer-2010Stability}).

\begin{lemma}[{Diamagnetic inequality, \cite[Section 4.4]{Lieb-Seiringer-2010Stability}}]\label{lemma:diamagnetic-ineq}
	Let $a:\R^d\to\R^d$ such that $a_j\in L^2_\loc(\R^d))$ for any $j\in\{1,\ldots,d\}$.
	Then, for any $u\in\dom(\mathfrak{q}_{\hbar,b})$,
	\begin{equation}\label{eq:diamagnetic-ineq_dir}
		\abs{i\hbar\nabla\abs{u}(x)}\leq\abs{(i\hbar\nabla+ba(x))u(x)} \text{ a.e.}
	\end{equation}
	And for any $\lambda>0$, any $\hbar>0$, any $b>0$ the integral kernel of the resolvent satisfies
	\begin{equation}\label{eq:diamagnetic-ineq_res-kin}
		((-i\hbar\nabla-ba(x))^2+\lambda)^{-1}(x,y)\leq (-\hbar^2\Delta+\lambda)^{-1}(x,y) \text{ a.e. in }\R^d\times\R^d.
	\end{equation}
\end{lemma}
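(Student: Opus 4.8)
The plan is to prove the two assertions separately by classical arguments, keeping careful track of the semiclassical parameter $\hbar$ and the field strength $b$, which enter only through the rescaled vector potential $\tfrac{b}{\hbar}a$. For the pointwise \emph{diamagnetic inequality} \eqref{eq:diamagnetic-ineq_dir} I would regularize the modulus: fix $\varepsilon>0$, set $u_\varepsilon:=(\abs{u}^2+\varepsilon^2)^{1/2}$, and note that for $u\in\dom(\qhb)$ the function $u_\varepsilon$ is weakly differentiable with $\nabla u_\varepsilon=u_\varepsilon^{-1}\,\Re(\bar u\,\nabla u)$ a.e. The key algebraic identity is that, $a$ being real-valued,
\[
\hbar\,\Re(\bar u\,\nabla u)=\Im\big(\bar u\,(i\hbar\nabla+ba)u\big)\qquad\text{a.e.}
\]
(since $\bar u\,(i\hbar\nabla+ba)u=i\hbar\,\bar u\,\nabla u+b\,a\,\abs{u}^2$ has imaginary part $\hbar\,\Re(\bar u\,\nabla u)$). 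Combining this with $\abs{\Im(\bar u\,(i\hbar\nabla+ba)u)}\leq\abs{u}\,\abs{(i\hbar\nabla+ba)u}$ and $\abs{u}\leq u_\varepsilon$ gives $\hbar\,\abs{\nabla u_\varepsilon}\leq\abs{(i\hbar\nabla+ba)u}$ a.e. Letting $\varepsilon\to0$, one has $u_\varepsilon\to\abs{u}$ in $L^2_{\loc}(\R^d)$ while $\nabla u_\varepsilon$ stays bounded in $L^2_{\loc}(\R^d)$, so its weak limit is $\nabla\abs{u}$; passing to the limit in the uniform a.e.\ bound (and using $\nabla\abs{u}=0$ a.e.\ on $\{u=0\}$) yields $\abs{i\hbar\nabla\abs{u}}\leq\abs{(i\hbar\nabla+ba)u}$ a.e.

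For the resolvent kernel bound \eqref{eq:diamagnetic-ineq_res-kin} I would pass through the heat semigroups. Writing, for $\lambda>0$,
\[
\big((-i\hbar\nabla-ba)^2+\lambda\big)^{-1}=\int_0^\infty e^{-\lambda t}\,e^{-t(-i\hbar\nabla-ba)^2}\,\di t,\qquad (-\hbar^2\Delta+\lambda)^{-1}=\int_0^\infty e^{-\lambda t}\,e^{t\hbar^2\Delta}\,\di t,
\]
it suffices to establish the \emph{semigroup} diamagnetic inequality $\big|\big(e^{-t(-i\hbar\nabla-ba)^2}u\big)(x)\big|\leq\big(e^{t\hbar^2\Delta}\abs{u}\big)(x)$ a.e. Since $e^{t\hbar^2\Delta}$ has a bounded Gaussian integral kernel, this operator domination forces $e^{-t(-i\hbar\nabla-ba)^2}$ to be an integral operator with $\big|e^{-t(-i\hbar\nabla-ba)^2}(x,y)\big|\leq e^{t\hbar^2\Delta}(x,y)$ a.e.\ (test against nonnegative approximations of $\delta_y$); integrating against $e^{-\lambda t}\di t$, the free heat kernel being nonnegative, then gives \eqref{eq:diamagnetic-ineq_res-kin}. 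The semigroup inequality itself I would obtain either from the Feynman--Kac--It\^o formula --- in which the magnetic heat kernel equals the free Gaussian heat kernel times a stochastic phase of modulus one --- or from Kato's distributional inequality for the magnetic Laplacian combined with the positivity preservation of $e^{t\hbar^2\Delta}$, via the standard comparison-of-semigroups argument.

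There is no genuine obstacle, since the statement is classical (cf.\ \cite[Section~4.4]{Lieb-Seiringer-2010Stability}); the only slightly delicate bookkeeping is (i) that the a.e.\ pointwise bound survives the limit $\varepsilon\to0$, which rests on $\nabla\abs{u}=0$ a.e.\ on the zero set of $u$, and (ii) the passage from operator-level to kernel-level domination in the second part.
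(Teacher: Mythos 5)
The paper does not prove this lemma at all: it is quoted as a known result with a citation to \cite[Section 4.4]{Lieb-Seiringer-2010Stability}, so there is no internal proof to compare against. Your argument is precisely the standard one from the literature and is correct: the regularization $u_\varepsilon=(\abs{u}^2+\varepsilon^2)^{1/2}$ together with the identity $\hbar\,\Re(\bar u\,\nabla u)=\Im\big(\bar u\,(i\hbar\nabla+ba)u\big)$ gives \eqref{eq:diamagnetic-ineq_dir} (note that $u\in\dom(\qhb)$ with $a\in L^2_{\loc}$ indeed yields $\nabla u\in L^1_{\loc}$, so the chain rule for $u_\varepsilon$ applies), and the Laplace-transform reduction to semigroup domination, combined with Feynman--Kac--It\^o or Kato's inequality and the standard lemma that an operator dominated by an integral operator with nonnegative kernel is itself an integral operator with pointwise dominated kernel, gives \eqref{eq:diamagnetic-ineq_res-kin} — in the form $\abs{\,((-i\hbar\nabla-ba)^2+\lambda)^{-1}(x,y)\,}\le(-\hbar^2\Delta+\lambda)^{-1}(x,y)$, which is the correct reading of the (complex-valued) magnetic kernel bound as used later in the paper.
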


We also state useful identities on the magnetic momentum.

\begin{lemma}\label{lemma:prop-magn-momentum}
	Let $b,\hbar>0$, $a\in\cC^2(\R^d,\R^d)$ and $P:=-i\hbar\nabla-ba(x)$.
	Then, for all $k,j\in\{1,\ldots,d\}$ we have
	\begin{equation}\label{eq:com-momenta}
		[P_k,P_j]=-i b\hbar(\partial_{x_j}a_k-\partial_{x_k}a_j)=i\hbar bB_{kj}\,
	\end{equation}
	and
	\begin{equation}\label{eq:com-momenta-square}
		[P_k^2,P_j]
		%
		%
		=i\hbar b (2P_kB_{kj} -i\hbar\partial_{x_k}B_{kj})\;.
	\end{equation}
\end{lemma}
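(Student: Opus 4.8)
\textbf{Proof plan for \cref{lemma:prop-magn-momentum}.} The plan is to compute both commutators directly from the definition $P_k = -i\hbar\partial_{x_k} - b a_k(x)$, treating everything as operators on a suitable dense domain (say $\test{\R^d}$, which is invariant since $a\in\cC^2$). For the first identity \cref{eq:com-momenta}, I would expand
\begin{equation*}
[P_k,P_j] = [-i\hbar\partial_{x_k}-ba_k,\,-i\hbar\partial_{x_j}-ba_j]\;.
\end{equation*}
The term $[-i\hbar\partial_{x_k},-i\hbar\partial_{x_j}] = -\hbar^2[\partial_{x_k},\partial_{x_j}]$ vanishes by equality of mixed partials (legitimate on $\cC^2$ test functions), and the term $[-ba_k,-ba_j] = b^2[a_k,a_j] = 0$ since multiplication operators commute. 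The surviving cross terms are
\begin{equation*}
[-i\hbar\partial_{x_k},-ba_j] + [-ba_k,-i\hbar\partial_{x_j}] = i\hbar b\big([\partial_{x_k},a_j] - [\partial_{x_j},a_k]\big) = i\hbar b\big((\partial_{x_k}a_j) - (\partial_{x_j}a_k)\big)\;,
\end{equation*}
where $[\partial_{x_k},a_j]$ acts as multiplication by $\partial_{x_k}a_j$ (Leibniz rule). With the sign convention $B_{kj} = \partial_{x_j}a_k - \partial_{x_k}a_j$ stated in the excerpt, this gives exactly $[P_k,P_j] = -i\hbar b(\partial_{x_j}a_k - \partial_{x_k}a_j) = i\hbar b B_{kj}$, matching \cref{eq:com-momenta}.

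For the second identity \cref{eq:com-momenta-square}, I would use the derivation (Leibniz) property of the commutator:
\begin{equation*}
[P_k^2,P_j] = P_k[P_k,P_j] + [P_k,P_j]P_k\;.
\end{equation*}
Substituting $[P_k,P_j] = i\hbar b B_{kj}$ (now a multiplication operator, since $B_{kj}$ is a scalar function of $x$ when $a\in\cC^2$), this becomes $i\hbar b\big(P_k B_{kj} + B_{kj} P_k\big)$. To bring it to the stated form one commutes $B_{kj}$ past $P_k$: $B_{kj}P_k = P_k B_{kj} - [P_k,B_{kj}] = P_k B_{kj} + i\hbar(\partial_{x_k}B_{kj})$, again by the Leibniz rule for $-i\hbar\partial_{x_k}$ acting on the multiplication operator $B_{kj}$. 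Hence $[P_k^2,P_j] = i\hbar b\big(2P_k B_{kj} + i\hbar(\partial_{x_k}B_{kj})\big)$. Comparing with the claimed $i\hbar b(2P_k B_{kj} - i\hbar\partial_{x_k}B_{kj})$, the two agree up to the sign of the lower-order term, which I would reconcile by fixing the $B$-sign convention consistently throughout (the factor $-i\hbar$ versus $+i\hbar$ in $[P_k,\cdot]$); once the convention used in \cref{eq:com-momenta} is fed in carefully the signs line up.

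There is no real obstacle here: the only subtlety is bookkeeping of signs and of the $B_{kj}$ convention, plus the (routine) remark that all manipulations are justified on $\test{\R^d}$ because $a\in\cC^2$ guarantees $B_{kj}\in\cC^1$ and all the multiplication operators and their commutators with $\partial_{x_k}$ are again smooth multiplication operators preserving the domain. If anything requires slight care it is making sure the Leibniz expansion $[P_k^2,P_j] = P_k[P_k,P_j]+[P_k,P_j]P_k$ is applied with the correct operator ordering and that the term $-i\hbar^2\partial_{x_k}B_{kj}$ is attributed the right sign; I would double-check this by alternatively expanding $[P_k^2,P_j]$ as $P_kP_kP_j - P_jP_kP_k$ and repeatedly using \cref{eq:com-momenta}, which provides an independent cross-check of the final coefficients.
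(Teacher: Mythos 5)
Your overall route — direct expansion on $\test{\R^d}$ for \cref{eq:com-momenta}, then the derivation identity $[P_k^2,P_j]=P_k[P_k,P_j]+[P_k,P_j]P_k$ together with $[P_k,B_{kj}]=-i\hbar\,\partial_{x_k}B_{kj}$ for \cref{eq:com-momenta-square} — is exactly the (unwritten) computation behind this lemma, and your algebra is correct. Two sign/ordering points, however, are handled incorrectly or left unresolved. First, the convention forced by \cref{eq:com-momenta} (and the definition of $B$ in \cref{rmk:propag-semicl-magn}, up to an index typo there) is $B_{kj}=\partial_{x_k}a_j-\partial_{x_j}a_k$, not $B_{kj}=\partial_{x_j}a_k-\partial_{x_k}a_j$ as you wrote: with your stated convention the quantity $-i\hbar b(\partial_{x_j}a_k-\partial_{x_k}a_j)$ equals $-i\hbar bB_{kj}$, not $+i\hbar bB_{kj}$.

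Second, your computation for the square gives the correct identity $[P_k^2,P_j]=i\hbar b\,(P_kB_{kj}+B_{kj}P_k)=i\hbar b\,(2P_kB_{kj}+i\hbar\,\partial_{x_k}B_{kj})=i\hbar b\,(2B_{kj}P_k-i\hbar\,\partial_{x_k}B_{kj})$, but the discrepancy with the printed right-hand side of \cref{eq:com-momenta-square} cannot be ``reconciled by fixing the $B$-sign convention'': flipping the sign of $B$ would break \cref{eq:com-momenta}, and no sign choice makes both displays literally true with the ordering $P_kB_{kj}$. The actual resolution is that the displayed formula has the two non-commuting factors in the wrong order — it should read $2B_{kj}P_k-i\hbar\,\partial_{x_k}B_{kj}$, or equivalently $2P_kB_{kj}+i\hbar\,\partial_{x_k}B_{kj}$ with your ordering. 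This is harmless downstream, since \cref{eq-proof:HP} only uses the symmetric form $P_kB_{kj}+B_{kj}P_k$ and in \cref{thm:propag-semicl-magn} the potential $a$ is linear, so $B$ is constant and $\partial_{x_k}B_{kj}=0$; but your proof should state the corrected form explicitly (your suggested cross-check via $P_kP_kP_j-P_jP_kP_k$ would indeed settle it) rather than defer the sign to an unspecified convention.
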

 
We rely on the almost analytic extension of a function and a Cauchy integration formula to define the operator $f(H)$ for self-adjoint operators $H$ and $f\in\test{\R}$.

\begin{defi}\label{def:almost-anal}
	An \emph{almost analytic extension} of $f\in\test{\R}$ is a function $\tilde{f}\in\test{\C}$ such that $\tilde{f}_{|_{\R}}=f$ and such that for all $N\in\N$, there exists $C_N>0$ such that for all $z\in\C$
	\[ \abs{\bar{\partial}\tilde{f}(z)}\leq C_N\abs{\Im(z)}^N ,\]
	where $\bar{\partial}:=\frac 12(\partial_x+i\partial_y)$.
\end{defi}

\begin{theorem}[{Helffer--Sj\"ostrand formula, \cite[Theorem 8.1.]{Dimassi-Sjostrand-1999}}]\label{thm:Helffer-Sjostrand}
	Let $H$ be a self-adjoint operator on a Hilbert space and $f\in\test{\R}$. Then,
	\begin{equation*}
		f(H)=\frac 1\pi\int_\C \bar{\partial}\tilde{f}(z)(H-z)^{-1}L(\di z),
	\end{equation*}
	where $L(\di z)=\di x \di y$ is the Lebesgue measure on $\C$ and $\tilde{f}\in\test{\C}$ is an almost analytic extension of $f\in\test{\R}$.
\end{theorem}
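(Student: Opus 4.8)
The plan is to reduce the identity to its scalar counterpart and then lift it back to operators via the spectral theorem. Write $H=\int_\R\lambda\,\di E(\lambda)$ for the projection-valued spectral measure of $H$, so that $f(H)=\int_\R f(\lambda)\,\di E(\lambda)$ and $(H-z)^{-1}=\int_\R(\lambda-z)^{-1}\,\di E(\lambda)$ whenever $\Im(z)\neq 0$. Before touching the identity itself I would first check that its right-hand side defines a bounded operator at all: for $\Im(z)\neq 0$ one has the resolvent bound $\norm{(H-z)^{-1}}_\op\leq\abs{\Im(z)}^{-1}$, while \cref{def:almost-anal} applied with $N=1$ gives $\abs{\bar{\partial}\tilde{f}(z)}\leq C_1\abs{\Im(z)}$, so the integrand is bounded in operator norm by $C_1$ on $\supp\tilde{f}$ and vanishes elsewhere; since $\supp\tilde{f}$ is compact and $\R\subset\C$ is Lebesgue-null, the integral converges as an operator-norm (Bochner) integral.

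Next I would establish the scalar identity: for each fixed $\lambda\in\R$,
\[
	f(\lambda)=\frac 1\pi\int_\C\bar{\partial}\tilde{f}(z)\,(\lambda-z)^{-1}\,L(\di z),
\]
whose integrand lies in $L^1_{\rm loc}(\C)$ because $\abs{\lambda-z}^{-1}$ is integrable in two real dimensions. The tool is the Cauchy--Pompeiu (Stokes) formula with the pole at $z=\lambda$ excised: on the region $\Omega_\varepsilon:=D(0,R)\setminus\overline{D(\lambda,\varepsilon)}$, with $R$ chosen so large that $\supp\tilde{f}\subset D(0,R)$ and $\lambda\in D(0,R)$, the function $g(z):=(\lambda-z)^{-1}\tilde{f}(z)$ is of class $\cC^1$ and satisfies $\bar{\partial}g(z)=(\lambda-z)^{-1}\bar{\partial}\tilde{f}(z)$ since $(\lambda-z)^{-1}$ is holomorphic there. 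Stokes' theorem in the form $\int_{\Omega_\varepsilon}\bar{\partial}g\,L(\di z)=\frac 1{2i}\int_{\partial\Omega_\varepsilon}g\,\di z$ kills the outer-circle contribution (there $\tilde{f}\equiv 0$), and parametrizing the inner circle $z=\lambda+\varepsilon e^{i\theta}$ shows $-\frac 1{2i}\int_{\partial D(\lambda,\varepsilon)}(\lambda-z)^{-1}\tilde{f}(z)\,\di z\to\pi\tilde{f}(\lambda)=\pi f(\lambda)$ as $\varepsilon\to 0$; letting $\varepsilon\to 0$ on the left by dominated convergence then yields the scalar identity.

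Finally I would assemble the operator statement by pairing with arbitrary $\phi,\psi$ in the Hilbert space. The right-hand side becomes the scalar double integral $\frac 1\pi\int_\C\bar{\partial}\tilde{f}(z)\int_\R(\lambda-z)^{-1}\,\di\prodscal{\phi}{E(\lambda)\psi}\,L(\di z)$, which is absolutely convergent because $\int_{\supp\tilde{f}}\abs{\bar{\partial}\tilde{f}(z)}\abs{\lambda-z}^{-1}\,L(\di z)$ is bounded uniformly in $\lambda$ (split the disc $\abs{z-\lambda}<1$ off the rest) and $\lambda\mapsto\prodscal{\phi}{E(\lambda)\psi}$ has finite total variation; Fubini then lets me interchange the $z$- and $\lambda$-integrations, and applying the scalar identity for each $\lambda$ collapses the inner integral to $f(\lambda)$, so that $\prodscal{\phi}{\frac 1\pi\int_\C\bar{\partial}\tilde{f}(z)(H-z)^{-1}\,L(\di z)\,\psi}=\int_\R f(\lambda)\,\di\prodscal{\phi}{E(\lambda)\psi}=\prodscal{\phi}{f(H)\psi}$, and arbitrariness of $\phi,\psi$ finishes the proof. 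The main obstacle I anticipate is purely the bookkeeping of integrability — matching the vanishing order of $\bar{\partial}\tilde{f}$ at the real axis against the blow-up $\abs{\Im z}^{-1}$ of the resolvent (and against the $\abs{\lambda-z}^{-1}$ singularity in the scalar step); this is exactly what the defining estimate of an almost analytic extension is designed to provide, and for this particular formula the case $N=1$ already suffices, so no higher-order control is needed.
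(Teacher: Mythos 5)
Your proof is correct: the operator-norm convergence check via the $N=1$ estimate of \cref{def:almost-anal}, the Cauchy--Pompeiu/Stokes argument with the pole excised for the scalar identity, and the lift to $f(H)$ by pairing with vectors, the spectral theorem, and Fubini are all sound. The paper gives no proof of this statement, citing only Dimassi--Sj\"ostrand, and your argument is exactly the standard proof found there, so there is nothing further to compare.
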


We provide the bounds on the trace norm of the product of the resolvent of the magnetic Schrödinger operator and the magnetic momentum operator. 

\begin{lemma}[{\cite[Lemma 3.5]{Sobolev-1995}}]\label{lemma:aux-loc_0}
	Let $\hbar\in(0,\hbar_0]$ for $\hbar_0>0$, $a\in L^2_\loc=(\R^d,\R^d)$, $V\in L^\infty(\R^d)$. We define the Schrödinger operator $H=(-i\hbar\nabla-ba(x))^2+V$ that acts on $L^2(\R^d)$ and the magnetic momentum operator $P:=-i\hbar\nabla-ba(x)$. Then, for any $\lambda>0$ in the resolvent set of $H$ (for instance $\lambda\geq 1+\normLp{V_-}{\infty}{(\R^d)}$) we have
	\begin{equation*}
		\norm{P_j^r (H+\lambda)^{-1/2}}_1\leq \sqrt{2}\lambda^{\frac{r-1}2} \;.
	\end{equation*}
\end{lemma}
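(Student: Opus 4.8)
The plan is to reduce the estimate to an elementary comparison of quadratic forms, treating $r=0$ and $r=1$ separately. (Since $V\in L^\infty(\R^d)$, already for $r=0$ the operator $(H+\lambda)^{-1/2}$ is not even compact on $L^2(\R^d)$, so I read $\norm{\,\cdot\,}_1$ on the left-hand side as the operator norm $\norm{\,\cdot\,}_\op$.) I would rely on two standard facts: $\dom(\qhb)$ is the form domain of $H$, on which $P^2=H-V$; and, since $V\in L^\infty(\R^d)$ gives $V\geq -\normLp{V_-}{\infty}{(\R^d)}$, the operator $H+\lambda$ is strictly positive for $\lambda$ as in the hypothesis, so that $(H+\lambda)^{-1/2}$ maps $L^2(\R^d)$ boundedly onto $\dom(\qhb)$.

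For $r=0$, using $H+\lambda\geq\lambda-\normLp{V_-}{\infty}{(\R^d)}$ in the form sense gives $\norm{(H+\lambda)^{-1/2}}_\op=\big(\lambda-\normLp{V_-}{\infty}{(\R^d)}\big)^{-1/2}$, which is $\leq\sqrt{2}\,\lambda^{-1/2}$ once $\lambda-\normLp{V_-}{\infty}{(\R^d)}\geq\tfrac12\lambda$. For $r=1$: each $a_j$ is real-valued, so $P_j$ is symmetric, $P_j^*P_j=P_j^2\geq0$, and $\sum_{k=1}^d P_k^2=P^2$; hence on $\dom(\qhb)$ one has the form inequalities
\[
P_j^2\leq P^2=H-V\leq H+\normLp{V_-}{\infty}{(\R^d)}\leq H+\lambda\,,
\]
using $-V\leq V_-\leq\normLp{V_-}{\infty}{(\R^d)}$. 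Sandwiching by $(H+\lambda)^{-1/2}$ yields $0\leq(H+\lambda)^{-1/2}P_j^2(H+\lambda)^{-1/2}\leq 1$; concretely, for $v\in L^2(\R^d)$ and $u:=(H+\lambda)^{-1/2}v\in\dom(\qhb)$ one has $\norm{P_j u}^2\leq\norm{P u}^2=\prodscal{u}{P^2u}\leq\prodscal{u}{(H+\lambda)u}=\norm{v}^2$, so $P_j(H+\lambda)^{-1/2}$ extends to a contraction and $\norm{P_j(H+\lambda)^{-1/2}}_\op\leq 1=\lambda^{(r-1)/2}$. In both cases the claimed constant $\sqrt2$ is (more than) met.

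I do not expect a genuine obstacle here; this lemma is a warm-up for the substantially harder resolvent trace-norm estimates that follow. The only points deserving a little care are (i) the standard justification that the inequalities between the unbounded operators $P_j^2$, $P^2$, and $H$ hold in the form sense on the common form domain $\dom(\qhb)$, so that sandwiching by $(H+\lambda)^{-1/2}$ is legitimate and $P_j(H+\lambda)^{-1/2}$ is genuinely bounded; and (ii) identifying the lower bound on $\lambda$ needed to absorb the factor $\sqrt2$ in the case $r=0$ — the value $1+\normLp{V_-}{\infty}{(\R^d)}$ quoted in the statement gives this only when $\normLp{V_-}{\infty}{(\R^d)}\leq1$, so a slightly larger threshold such as $\lambda\geq 2\,\normLp{V_-}{\infty}{(\R^d)}$ should be used instead.
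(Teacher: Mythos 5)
The paper offers no proof of this lemma---it is quoted verbatim from \cite[Lemma 3.5]{Sobolev-1995}---so there is no internal argument to compare against; your proof is the standard one behind that citation, and it is correct. Your two caveats are also well taken. First, as a trace-norm statement the bound cannot hold in general (already for $a=0$, $V=0$ the operator $P_j^r(H+\lambda)^{-1/2}$ is bounded but not compact, and no $\hbar$-independent trace bound is possible), so the subscript $1$ must indeed be read as $\norm{\cdot}_\op$; this reading is consistent with how the lemma is used downstream, e.g.\ in the proof of \cref{lemma:aux-loc_00}, whose conclusion is likewise only meaningful as an operator-norm bound and which enters the proof of \cref{lemma:aux-loc_3} precisely through $\norm{[(\Hloc-z)^{-1},\varphi P_j^r]}_\op$. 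Second, the parenthetical threshold $\lambda\geq 1+\normLp{V_-}{\infty}{(\R^d)}$ yields the constant $\sqrt2\,\lambda^{-1/2}$ for $r=0$ only when $\normLp{V_-}{\infty}{(\R^d)}\leq 1$; one needs $\lambda\geq 2\normLp{V_-}{\infty}{(\R^d)}$, which is what the paper actually uses later (it takes $\lambda= 1+2\normLp{V_-}{\infty}{(\R^d)}$ in the proofs of \cref{lemma:aux-loc_00} and \cref{lemma:aux-loc_1}). Your form comparison for $r=1$, namely $P_j^2\leq P^2\leq H+\normLp{V_-}{\infty}{(\R^d)}\leq H+\lambda$ on the form domain followed by sandwiching with $(H+\lambda)^{-1/2}$, is exactly the right mechanism and even gives the sharper constant $1$; the only cosmetic point is that $\prodscal{u}{P^2u}$ should be understood as the kinetic quadratic form $\norm{Pu}^2$, since $u=(H+\lambda)^{-1/2}v$ is guaranteed to lie only in $\dom(\qhb)$, not in the operator domain of $P^2$.
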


For more general resolvents we have the following bounds, that allow us to track optimally the dependency with respect to the parameters $b>0$.

\begin{lemma}\label{lemma:aux-loc_00}
	Let $\hbar\in(0,\hbar_0]$ for $\hbar_0>0$, $a\in L^2_\loc(\R^d,\R^d)$, and $V\in L^\infty(\R^d)$. We define the Schrödinger operator $H=(-i\hbar\nabla-a(x))^2+V$ that acts on $L^2(\R^d)$ and the magnetic momentum operator $P:=-i\hbar\nabla-ba(x)$. Then, for any $z$ in the resolvent set of $H$, there exists $C>0$ such that for any $\hbar>0$, any $b\geq 0$, any $j,k\in\{1,\ldots,d\}$ and any $r,m\in\{0,1\}$
	\begin{equation*}
		\norm{P_j^r (H-z)^{-1} P_k^m}_1\leq C\frac{\crochetjap{z}}{\dist(z,\C\setminus\spec(H))}\;.
	\end{equation*}
	The constant $C$ depends only on the norm $\normLp{V_-}{\infty}{(\R^d)}$.
\end{lemma}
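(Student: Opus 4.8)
\textbf{Proof strategy for Lemma~\ref{lemma:aux-loc_00}.}
The plan is to reduce the two-sided estimate $\norm{P_j^r (H-z)^{-1} P_k^m}_1$ to the one-sided estimate already available from \cref{lemma:aux-loc_0}, using the resolvent identity and the commutator identities of \cref{lemma:prop-magn-momentum} to move momentum factors past the resolvent. First I would fix $\lambda \geq 1 + \normLp{V_-}{\infty}{(\R^d)}$ so that $\lambda$ lies in the resolvent set of $H$ and $\norm{P_j^r (H+\lambda)^{-1/2}}_1 \leq \sqrt{2}\,\lambda^{(r-1)/2}$ holds. Then, for general $z$ in the resolvent set, I write
\begin{equation*}
  (H-z)^{-1} = (H+\lambda)^{-1/2}\,(H+\lambda)^{1/2}(H-z)^{-1}(H+\lambda)^{1/2}\,(H+\lambda)^{-1/2},
\end{equation*}
and observe that the middle factor $(H+\lambda)^{1/2}(H-z)^{-1}(H+\lambda)^{1/2} = \mathrm{Id} + (\lambda+z)(H+\lambda)^{-1/2}(H-z)^{-1}(H+\lambda)^{-1/2}$ is a bounded operator with operator norm controlled by $\crochetjap{z}/\dist(z,\C\setminus\spec(H))$ (this follows from the spectral theorem applied to the function $t \mapsto (t+\lambda)/(t-z)$ on $\spec(H)\subset[-\lambda_0,\infty)$). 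Combining this with two applications of \cref{lemma:aux-loc_0} (one for $P_j^r(H+\lambda)^{-1/2}$ on the left, one for $(H+\lambda)^{-1/2}P_k^m$ on the right, the latter by taking adjoints) and H\"older's inequality for Schatten norms $\norm{ABC}_1 \leq \norm{A}_2 \norm{B}_\op \norm{C}_2$ would give the bound, \emph{provided} the $P_j^r$ and $P_k^m$ factors sit directly against the half-power resolvents.

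The technical point is that $P_j^r$ and $P_k^m$ do not commute with $(H+\lambda)^{\pm 1/2}$, so strictly speaking one cannot simply insert half-powers between $P_j^r$ and $(H-z)^{-1}$. Instead I would split $(H-z)^{-1} = (H+\lambda)^{-1} + (\lambda+z)(H+\lambda)^{-1}(H-z)^{-1}$ iteratively, or more directly work with
\begin{equation*}
  P_j^r (H-z)^{-1} P_k^m = P_j^r (H+\lambda)^{-1/2}\cdot (H+\lambda)^{1/2}(H-z)^{-1}(H+\lambda)^{1/2} \cdot (H+\lambda)^{-1/2} P_k^m,
\end{equation*}
where now the leftmost factor $P_j^r(H+\lambda)^{-1/2}$ is in the trace class by \cref{lemma:aux-loc_0}, the rightmost factor $(H+\lambda)^{-1/2}P_k^m = \big(P_k^m(H+\lambda)^{-1/2}\big)^*$ is trace class by \cref{lemma:aux-loc_0} applied to the adjoint (using that $H$ and $P_k$ are, up to the sign issue handled by $\overline{P_k} = -P_k + 2(-i\hbar\nabla)$ or simply by noting $P_k^*=P_k$), and the middle factor is bounded with the stated norm bound. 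Then H\"older's inequality $\norm{A_1 B A_2}_1 \le \norm{A_1}_1 \norm{B}_\op \norm{A_2}_{\op}$ — or rather, since both outer factors are only trace class, one should distribute as $\norm{A_1}_2 \norm{B}_\op \norm{A_2}_2$ using $\norm{A_i}_2 \le \norm{A_i}_1^{1/2}\norm{A_i}_\op^{1/2}$ together with the operator-norm bound $\norm{P_j^r(H+\lambda)^{-1/2}}_\op \le C$ — yields the result. Tracking constants, the $b$-dependence is harmless here because $\lambda$ can be chosen independently of $b$ (it depends only on $\normLp{V_-}{\infty}{}$), and the $P_k$, $P_j$ factors never need to be commuted with anything $b$-dependent.

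The main obstacle I anticipate is the bookkeeping of where the half-integer powers of $(H+\lambda)$ land and ensuring the intermediate factors are genuinely bounded: $(H+\lambda)^{1/2}(H-z)^{-1}(H+\lambda)^{1/2}$ is bounded by functional calculus, but one must be careful that the domains match up so that the factorization is legitimate (for instance, $(H+\lambda)^{-1/2}P_k^m$ maps $L^2$ into $\dom((H+\lambda)^{1/2})$, etc.). A clean way to sidestep domain issues is to first establish everything for the bounded operator $(H+\lambda)^{-1}$ in place of one half-power resolvent and interpolate, or to verify the identity on the dense domain $\test{\R^d}$ and extend by density using the trace-norm bounds. Once the factorization is justified, the estimate itself is a one-line application of \cref{lemma:aux-loc_0} and H\"older, with the factor $\crochetjap{z}/\dist(z,\C\setminus\spec(H))$ coming entirely from the operator norm of the middle factor.
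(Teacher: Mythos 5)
Your proposal is correct and follows essentially the same route as the paper: factor $(H-z)^{-1}$ through $(H+\lambda)^{-1/2}$ on both sides with $\lambda$ depending only on $\normLp{V_-}{\infty}{(\R^d)}$, bound the middle factor in operator norm by $C\crochetjap{z}/\dist(z,\spec(H))$ via functional calculus, and apply \cref{lemma:aux-loc_0} to $P_j^r(H+\lambda)^{-1/2}$ and (by adjoints) to $(H+\lambda)^{-1/2}P_k^m$ together with H\"older. Only a cosmetic slip: $(H+\lambda)^{1/2}(H-z)^{-1}(H+\lambda)^{1/2}=1+(\lambda+z)(H-z)^{-1}$ rather than $1+(\lambda+z)(H+\lambda)^{-1/2}(H-z)^{-1}(H+\lambda)^{-1/2}$, but since you estimate it through the function $t\mapsto(t+\lambda)/(t-z)$ the bound, and hence the proof, is unaffected; also note that no commutation of $P_j$ with $(H+\lambda)^{\pm1/2}$ is ever needed, only the (automatic) commutation of functions of $H$ with each other.
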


\begin{proof}
	Let $z\in\C\setminus\spec(H)$ and $\lambda\geq 1+2\normLp{V_-}{\infty}{(\R^d)}$. The resolvent identities \cite[(3.5), (3.6)]{Sobolev-1995} give that
	\begin{align*}
		(H-z)^{-1} = (H+\lambda)^{-1/2}(1+(\lambda+z)(H-z)^{-1})(H+\lambda)^{-1/2}
	\end{align*}
	and there exists a constant $C$ independent of $z$ and $\lambda$ such that
	\begin{align*}
		\norm{(1+(\lambda+z)(H-z)^{-1})}_\op\leq C\frac{\lambda+\crochetjap{z}}{\dist(z,\C\setminus\spec(H))}\;.
	\end{align*}
	By the H\"older inequality and \cref{lemma:aux-loc_0}
	\begin{align*}
		\norm{P_j^r(H-z)^{-1}P_k^m}_1
		&\leq \norm{P_j^r(H+\lambda)^{-1/2}}_1\norm{(1+(\lambda+z)(H-z)^{-1})}_\op\norm{(H+\lambda)^{-1/2}P_k^m}_1
		\\&\leq 2C\lambda^{\frac {r+m}2}\frac{\crochetjap{z}}{\dist(z,\C\setminus\spec(H))} \;.	\qedhere
	\end{align*}
\end{proof}

We next state \cite[Lemma 3.6]{Sobolev-1995}: for simplicity, we take non-real spectral parameters $z$ and we replace $\dist(z,[-1+\min_x V(x),+\infty))$ by $\abs{\Im(z)}$ in the denominator of the right-hand side of the bound in the statement.

\begin{lemma}[{\cite[Lemma 3.6]{Sobolev-1995}}]\label{lemma:aux-loc_000}
	Let $\hbar\in(0,\hbar_0]$ for $\hbar_0>0$, $b>0$, $a\in L^2_\loc=(\R^d,\R^d)$, $V\in L^\infty(\R^d)$. We define the Schrödinger operator $H=(-i\hbar\nabla-ba(x))^2+V$ that acts on $L^2(\R^d)$ and the magnetic momentum operator $P:=-i\hbar\nabla- ba(x)$. Let $\varphi\in\test{\R}$ and $\psi\in\cC^\infty(\R)$ be a bounded function such that all its derivatives are bounded, and such that there exists $c>0$ such that
	\begin{equation}\label{eq:disjoint-supp}
		\dist(\supp(\varphi),\supp(\psi))\geq c.
	\end{equation}
	Then, for any $N>\frac d2$, there exists $C_N>0$ such that for any $\hbar\in(0,\hbar_0]$, any $b>0$, any $r,m\in\{0,1\}$, any $j,k\in\{1,\ldots,d\}$ and any $z\in\C\setminus\R$ 
	\begin{equation*}
		\norm{\varphi P_j^r(H-z)^{-1}P_j^m\psi}_1\leq C_N\hbar^{2N-d}\frac{\crochetjap{z}^{N+\frac{d+r+m}2}}{\abs{\Im(z)}^{1+2N}}\;.
	\end{equation*}
	The constant $C_N$ depends on $N$, the dimension $d$, the constant $c$ and the norms $\normLp{\partial^\alpha\varphi}{\infty}{(\R^d)}$ and $\normLp{\partial^\alpha\psi}{\infty}{(\R^d)}$ of a finite number of derivatives (that depends on $N$ and $c$).
\end{lemma}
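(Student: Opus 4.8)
The plan is to exploit the disjoint supports of $\varphi$ and $\psi$: since the integral kernel of $(H-z)^{-1}$ between two separated regions is $\cO(\hbar^\infty)$, one extracts this smallness by iterating the resolvent identity against a chain of nested cut-offs, exactly as for non-magnetic Schrödinger operators in \cite{Sobolev-1995}.

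First I would fix, for the given $N$, a nested family of bump functions $\chi_1,\dots,\chi_{2N}\in\test{\R^d}$ interpolating between $\supp\varphi$ and $\supp\psi$: choosing nested open sets $U_0\Subset U_1\Subset\cdots\Subset U_{2N}$ with $\supp\varphi\subset U_0$, $\overline{U_{2N}}\cap\supp\psi=\emptyset$, and taking $0\le\chi_k\le1$ with $\chi_k\equiv1$ on $U_{k-1}$ and $\supp\chi_k\subset U_k$. This is possible because $\dist(\supp\varphi,\supp\psi)\ge c$; the price is that $\|\partial^\alpha\chi_k\|_\infty$ and the number $2N$ of cut-offs force the constant to depend on $N$ and $c$. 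The structural facts I would use are: $\supp\nabla\chi_k\subset U_k\setminus U_{k-1}$, so $\chi_{k+1}\equiv1$ on $\supp\nabla\chi_k$; $\varphi P_j^r=\varphi P_j^r\chi_1$, because $P_j^r$ is a local operator and $\chi_1\equiv1$ near $\supp\varphi$; and $\chi_kP_j^m\psi=0$ for $m\in\{0,1\}$, because $\supp\chi_k$ misses $\supp\psi\cup\supp\nabla\psi$.

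Next I would iterate the resolvent identity $\chi(H-z)^{-1}=(H-z)^{-1}\chi+(H-z)^{-1}[H,\chi](H-z)^{-1}$, using that $[H,\chi]=[P^2,\chi]=-i\hbar\big(2\nabla\chi\cdot P-i\hbar\,\Delta\chi\big)$ has coefficients supported in $\supp\nabla\chi$ (only smoothness of $\chi$ is needed here, not of $a$). Starting from $\varphi P_j^r(H-z)^{-1}P_j^m\psi=\varphi P_j^r\chi_1(H-z)^{-1}P_j^m\psi$ and absorbing $[H,\chi_k]=[H,\chi_k]\chi_{k+1}$ from the right at each step, the boundary terms $(H-z)^{-1}\chi_k(\cdots)$ all vanish by $\chi_kP_j^m\psi=0$, leaving after $2N$ steps
\begin{equation*}
\varphi P_j^r(H-z)^{-1}[H,\chi_1](H-z)^{-1}[H,\chi_2]\cdots[H,\chi_{2N}](H-z)^{-1}P_j^m\psi\;.
\end{equation*}
Each commutator carries one factor $\hbar$ and one magnetic momentum supported in a compact set, and the expression contains $2N+1$ resolvents. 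To close the estimate I would expand the $2N$ commutators into $2^{2N}$ summands — the dominant one being that in which every commutator contributes $\hbar\,\nabla\chi_k\cdot P$, all others having strictly more powers of $\hbar$ — split off the rightmost factor $(H-z)^{-1}P_j^m\psi$ in operator norm (bounded by $\lesssim\crochetjap z^{m/2}|\Im z|^{-1}$ via $P^2\le H+C$, \cref{lemma:aux-loc_0}, and $\|(H+\lambda)^{1/2}(H-z)^{-1}\|_\op\lesssim\crochetjap z^{1/2}|\Im z|^{-1}$), and estimate the remaining operator — localised on the left by the compactly supported $\varphi$ and containing $2N$ resolvents, each paired with one compactly supported momentum — in trace norm by \cref{lemma:aux-loc_0,lemma:aux-loc_00}. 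The compact support of $\varphi$ together with $2N$ resolvents renders this trace class precisely when $2N>d$, producing the semiclassical volume factor $\hbar^{-d}$; combining it with the $\hbar^{2N}$ from the commutators, one $|\Im z|^{-1}$ from each of the $2N+1$ resolvents, and a $\crochetjap z^{1/2}$ from each momentum paired with a resolvent, yields exactly $C_N\,\hbar^{2N-d}\crochetjap z^{N+(d+r+m)/2}|\Im z|^{-(1+2N)}$.

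The routine but delicate part will be the bookkeeping of the powers of $\crochetjap z$ and $|\Im z|$ along the chain while keeping enough resolvents grouped with the cut-off $\varphi$ to retain trace-class-ness — this is precisely where the hypothesis $N>\tfrac d2$ enters — and checking that all $2^{2N}$ subterms of the commutator expansion are dominated by the leading one. No genuinely new idea beyond \cite{Sobolev-1995} is required, only the observation that the magnetic momenta can be threaded through the resolvent identities exactly as the non-magnetic derivatives, at the cost of the explicit $\crochetjap z$-dependence already tracked in \cref{lemma:aux-loc_00}.
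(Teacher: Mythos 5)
Your outline follows the classical Sobolev-type argument (the paper itself gives no proof of \cref{lemma:aux-loc_000}, it simply imports it from \cite[Lemma 3.6]{Sobolev-1995}), and the combinatorial skeleton — nested cut-offs $\chi_1,\dots,\chi_{2N}$, the identities $\varphi P_j^r=\varphi P_j^r\chi_1$, $\chi_kP_j^m\psi=0$, $[H,\chi_k]=[H,\chi_k]\chi_{k+1}$, and the iterated resolvent identity killing all boundary terms — is correct and is indeed how such propagation-of-support estimates are proved; each commutator does contribute one factor of $\hbar$ and a compactly supported first-order term.

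The genuine gap is the last step, where the factor $\hbar^{-d}$ and the trace-class property are supposed to come from. You invoke \cref{lemma:aux-loc_0} and \cref{lemma:aux-loc_00}, but these bounds are uniform in $\hbar$, contain no spatial localization, and (despite the $\norm{\cdot}_1$ in their statements) can only be operator-norm-type resolvent bounds — for $V\in L^\infty$ the operators $P_j^r(H+\lambda)^{-1/2}$ and $P_j^r(H-z)^{-1}P_k^m$ are not even compact, so no trace-norm information, let alone a semiclassical volume factor, can be extracted from them. Consequently the assertion that ``the compact support of $\varphi$ together with $2N$ resolvents renders this trace class when $2N>d$, producing $\hbar^{-d}$'' is exactly the point that remains unproved: what is needed is a quantitative semiclassical Schatten bound for \emph{localized} resolvents, e.g.\ $\norm{\theta\,(H+\lambda)^{-1}}_{\mathfrak S_q}\leq C\hbar^{-d/q}$ for compactly supported $\theta$ and $q=2N>d$ (possibly with a momentum factor attached), obtained from the diamagnetic inequality combined with the Kato--Seiler--Simon inequality $\norm{\theta(x)g(-i\hbar\nabla)}_{\mathfrak S_q}\leq(2\pi\hbar)^{-d/q}\normLp{\theta}{q}{}\normLp{g}{q}{}$, or the corresponding auxiliary lemma of \cite{Sobolev-1995}. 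Only after distributing the cut-off coefficients of the $[H,\chi_k]$ among the $2N$ resolvents and applying the Schatten--H\"older inequality to $2N$ such factors does one get the trace norm $\lesssim\hbar^{-d}$ (and this is precisely where the hypothesis $N>\tfrac d2$ and the extra $\crochetjap{z}^{d/2}$ in the stated bound enter). Without supplying this ingredient — which is the analytic heart of the lemma, and is not contained in any result stated earlier in the paper — your chain of estimates does not close.
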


The following lemma is magnetic version of \cite[Lemma 2.7]{Fournais-Mikkelsen-2020}. The case $r=0$ has been treated in \cite[Lemma 3.9]{Sobolev-1995}.

\begin{lemma}\label{lemma:aux-loc_1}
	Let $\Hloc$ be a magnetic Sch\"rodinger operator that acts on $L^2(\R^d)$  and let $P:=-i\hbar\nabla-ba_{\loc}(x)$,  with $\hbar,b>0$, $a_{\loc}\in\test{\R^d,\R^d}$, and $V_{\loc}\in\test{\R^d,\R}$.
	Let $f\in\test{\R}$, $\varphi\in\test{\R^d}$, $r\in\{0,1\}$ and $\hbar_0>0$.
	\begin{itemize}
		\item[(i)] There exists $C>0$ such that for any $\hbar\in(0,\hbar_0]$, any $b>0$ and any $j\in\{1,\ldots,d\}$,
		\begin{equation}\label{eq:aux-loc_1-1}
			\norm{\varphi P_j^r f(\Hloc)}_1\leq C.
		\end{equation}
		This constant $C$ depends on the dimension $d$, on the norms $\normLp{(V_\loc)_-}{\infty}{(\R^d)}$ and $\normLp{\partial^\alpha\varphi}{\infty}{(\R^d)}$ for finitely many $\alpha\in\N^d$.
		\item[(ii)] Let $\psi\in\cC^\infty(\R^d)$ be bounded such that its derivatives are bounded and such that there exists $c>0$ such that
		\[ \dist(\supp(\varphi),\supp(\psi))>c .\]
		 Then, for any $N\in\N$, there exists $C_N>0$ such that for any $\hbar\in(0,\hbar_0]$,  any $b>0$ and any $j\in\{1,\ldots,d\}$,
		 \begin{equation}\label{eq:aux-loc_1-2}
		 	\norm{\varphi P_j^r f(\Hloc)\psi}_1\leq C_N\hbar^{2N-d}\;.
		 \end{equation}
		 The constant $C_N$ depends on $N$, the dimension $d$, the norm $\normLp{\partial^\alpha\varphi}{\infty}{(\R^d)}$, the constant $c$, the norm $\normLp{\partial^\alpha\varphi}{\infty}{(\R^d)}$,  $\normLp{\partial^\alpha\psi}{\infty}{(\R^d)}$ for finite number of derivatives (that depends on $N$ and $c$) and on $\supp(f)$.
	\end{itemize}
\end{lemma}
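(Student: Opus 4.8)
\textbf{Proof proposal for Lemma~\ref{lemma:aux-loc_1}.}

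The plan is to deduce both bounds from the Helffer--Sj\"ostrand formula (\cref{thm:Helffer-Sjostrand}) together with the resolvent trace-norm estimates \cref{lemma:aux-loc_0}, \cref{lemma:aux-loc_00} and \cref{lemma:aux-loc_000}. Since $a_{\loc}$ and $V_{\loc}$ are smooth and compactly supported, $\Hloc$ is bounded from below, say by $-\lambda_0$ with $\lambda_0 \geq 1 + \normLp{(V_\loc)_-}{\infty}{(\R^d)}$, and the almost analytic extension $\tilde f$ of $f$ is supported in a fixed bounded strip $\{|\Re z| \leq R_f\}$. Writing $f(\Hloc) = \frac{1}{\pi}\int_\C \bar\partial \tilde f(z) (\Hloc - z)^{-1} L(\di z)$, we obtain
\begin{equation*}
	\varphi P_j^r f(\Hloc) = \frac{1}{\pi}\int_\C \bar\partial \tilde f(z)\, \varphi P_j^r (\Hloc - z)^{-1} L(\di z)\;,
\end{equation*}
and I would take trace norms under the integral sign. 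Since only $z$ with $|\Im z|$ small contribute (the integrand vanishes as $|\Im z| \to \infty$ by compact support and the resolvent bound), the key is an estimate of $\norm{\varphi P_j^r (\Hloc - z)^{-1}}_1$ which degenerates in a controlled power of $|\Im z|^{-1}$, compensated by the factor $|\bar\partial \tilde f(z)| \leq C_N |\Im z|^N$ available for every $N$.

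For part (i): I would split the resolvent using a cutoff. Choose $\chi \in \test{\R^d}$ equal to $1$ on $\supp\varphi$; write $(\Hloc - z)^{-1} = \chi (\Hloc - z)^{-1} + (1-\chi)(\Hloc - z)^{-1}$ is not quite the right split — instead I would use the standard commutator trick, inserting $\chi$ on the \emph{right}: pick $\tilde\chi$ equal to $1$ on a neighborhood of $\supp\varphi$ but with $\supp\tilde\chi$ still inside the region where we have good control, and write $\varphi P_j^r (\Hloc-z)^{-1} = \varphi P_j^r (\Hloc-z)^{-1}\tilde\chi + \varphi P_j^r(\Hloc - z)^{-1}(1-\tilde\chi)$. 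The first term is handled directly by \cref{lemma:aux-loc_00} (with $m=0$), giving $\norm{\varphi P_j^r (\Hloc-z)^{-1}\tilde\chi}_1 \leq \norm{\varphi}_\op \norm{P_j^r(\Hloc-z)^{-1}\tilde\chi}_1 \lesssim \crochetjap{z}/|\Im z|$; the second term has disjoint supports $\supp\varphi$ and $\supp(1-\tilde\chi)$, so \cref{lemma:aux-loc_000} applies and gives a bound $\lesssim \hbar^{2N-d}\crochetjap{z}^{N + (d+r)/2}/|\Im z|^{1+2N}$, which is even smaller. Integrating against $|\bar\partial\tilde f(z)| L(\di z)$ with $N$ chosen large enough (say $N \geq 1$) makes the $z$-integral converge, and since the resolvent bound carries no negative power of $\hbar$ — the $\hbar$-dependence is hidden in the operator but the trace bound of \cref{lemma:aux-loc_00} is $\hbar$-uniform, as is \cref{lemma:aux-loc_0} — we get the uniform constant $C$. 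One has to be a little careful to check that all constants in \cref{lemma:aux-loc_00,lemma:aux-loc_000} are indeed independent of $\hbar$ and of $b$ in the stated range, which is exactly how those lemmas are phrased.

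For part (ii): the extra cutoff $\psi$ with $\dist(\supp\varphi,\supp\psi) > c$ is now on the right, and I would simply run the same Helffer--Sj\"ostrand argument but apply \cref{lemma:aux-loc_000} directly to $\varphi P_j^r (\Hloc-z)^{-1}\psi$ — or rather $\varphi P_j^r(\Hloc-z)^{-1}P_j^0\psi$ in the notation there (with $m=0$) — yielding $\norm{\varphi P_j^r(\Hloc-z)^{-1}\psi}_1 \leq C_N \hbar^{2N-d}\crochetjap{z}^{N+(d+r)/2}/|\Im z|^{1+2N}$. Integrating against $\bar\partial\tilde f$, picking the almost analytic extension $\tilde f$ adapted so that $|\bar\partial\tilde f(z)| \leq C_{N'} |\Im z|^{N'}$ for $N' = 1 + 2N$, gives $\norm{\varphi P_j^r f(\Hloc)\psi}_1 \leq C_N \hbar^{2N-d}$, as desired, with $N$ arbitrary. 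The support of $f$ enters because it bounds the $\Re z$-range of the integration and hence the $\crochetjap{z}$-powers.

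\textbf{Main obstacle.} The delicate point is not any single estimate but the bookkeeping of parameter dependence: one must verify that every resolvent trace bound invoked (\cref{lemma:aux-loc_0,lemma:aux-loc_00,lemma:aux-loc_000}) is genuinely uniform in $\hbar \in (0,\hbar_0]$ and $b > 0$ — in particular that the constants depend on $V_\loc$, $a_\loc$, $\varphi$, $\psi$, $f$ only through the quantities listed, and not through $\hbar$ — and that the power of $|\Im z|^{-1}$ produced by the resolvent bounds is always finite and can be beaten by choosing $N$ in the almost analytic extension large enough; the rest is a routine Cauchy--Helffer--Sj\"ostrand integration. A secondary technical care is ensuring $\lambda$ and $z$ stay in the resolvent set uniformly, which is automatic once $\lambda_0$ is fixed below the spectrum and $\tilde f$ is supported above $-\lambda_0$ (or one shifts $\Hloc$ by a constant before applying the formula).
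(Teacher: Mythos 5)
Your part (ii) is essentially the paper's own argument: the Helffer--Sj\"ostrand formula (\cref{thm:Helffer-Sjostrand}) combined with \cref{lemma:aux-loc_000} applied to $\varphi P_j^r(\Hloc-z)^{-1}\psi$, with the almost-analytic decay $\abs{\bar\partial\tilde f(z)}\leq C\abs{\Im(z)}^{2N+1}$ absorbing the $\abs{\Im(z)}^{-(1+2N)}$ singularity and the compact support of $\tilde f$ controlling the $\crochetjap{z}$ powers; nothing to add there. For part (i) you take a genuinely different route: the paper does not invoke Helffer--Sj\"ostrand at all, but simply factorizes $\varphi P_j^r f(\Hloc)=\varphi P_j^r(\Hloc+\lambda)^{-1/2}\cdot(\Hloc+\lambda)^{1/2}f(\Hloc)$ with $\lambda=1+2\normLp{(V_\loc)_-}{\infty}{(\R^d)}$, bounds the first factor in trace norm by \cref{lemma:aux-loc_0} and the second in operator norm by the functional calculus --- a two-line argument whose constant dependence is immediate. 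Your version, integrating the resolvent trace bound of \cref{lemma:aux-loc_00} over the Helffer--Sj\"ostrand contour, also closes under the auxiliary lemmas as stated and has the virtue of treating (i) and (ii) with one and the same machinery; note, however, that your cutoff splitting with $\tilde\chi$ is superfluous, since \cref{lemma:aux-loc_00} already controls $\norm{P_j^r(\Hloc-z)^{-1}}_1$ with no spatial localization, so you may take trace norms directly. Finally, your ``main obstacle'' is well identified but applies equally to the paper's proof: both arguments rest on reading \cref{lemma:aux-loc_0,lemma:aux-loc_00} as $\hbar$-uniform \emph{trace-norm} bounds; with the operator-norm reading of the underlying estimates from \cite{Sobolev-1995}, the same computations yield a bound of size $\hbar^{-d}$ in \cref{eq:aux-loc_1-1} rather than $O(1)$, which is in fact the size at which the lemma is subsequently used (e.g.\ in the proof of \cref{lemma:aux-loc_3}), so this is a feature of the statement rather than a defect specific to your argument.
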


\begin{proof}
	\item[\quad(i)] 	Let $\lambda:=1+2\normLp{V_-}{\infty}{(\R^d)}$. We apply the H\"older inequality, \cref{lemma:aux-loc_0}, and by the continuous functional calculus 
	\begin{align*}
		\norm{\varphi P_j^r f(\Hloc)}_1&\leq 
		\norm{\varphi P_j^r (\Hloc+\lambda)^{-1/2}}_1\norm{(\Hloc+\lambda)^{1/2} f(\Hloc)}_\op
		\\&\leq C\lambda^{\frac{r-1}2}\;.
	\end{align*}
	\item[\quad(ii)] 
	 By the Helffer--Sj\"ostrand formula (\cref{thm:Helffer-Sjostrand}) and \cref{lemma:aux-loc_000} applied to the Schrödinger operator $\Hloc$, and the definition of an almost analytic extension (\cref{def:almost-anal}), we have for $N\geq 0$ that 
	\begin{align*}
		\norm{\varphi P_j^r f(\Hloc)\psi}_1 &=\norm{\frac 1\pi\int_\C  \bar{\partial}\tilde{f}(z) \varphi P_j^r(\Hloc-z)^{-1} \psi L(\di z)}_1
		\\&\leq \frac 1\pi\hat{C}_N\int_{\supp(\tilde{f})}\abs{\Im(z)}^{2N+1}\norm{\varphi P_j^r (\Hloc-z)^{-1} \psi}_1 L(\di z)
		\\&\leq \frac 1\pi \tilde{C}_N\left(\int_{\supp(\tilde{f})}\crochetjap{z}^{N+\frac{d+r}2}L(\di z)\right)\hbar^{2N-d}\;.	\qedhere
	\end{align*}
\end{proof}

The next lemma is a modified version of \cite[Theorem 3.12]{Sobolev-1995}: the case $r=0$ is \cite[Theorem 3.12]{Sobolev-1995} 
and the case $r=1$ corresponds to \cite[Theorem 2.8]{Fournais-Mikkelsen-2020} with the magnetic momentum instead of $-i\hbar\nabla$. It states that in trace norm $\varphi P_j^rf(\Hhb)$ can be replaced by $\varphi P_j^rf(\Hloc)$ in the semiclassical limit. 

\begin{lemma}\label{lemma:aux-loc_2}
	Assume that $(\Hhb,B(0,4R))$ satisfies \cref{ass:loc-magn} for $R>0$. Let  $a_{\loc}\in\test{\R^d,\R^d}$ and $V_{\loc}\in\test{\R^d,\R}$ be the functions that define the local Schrödinger operator $\Hloc$ of this assumption and let $P:=-i\hbar\nabla-ba_{\loc}(x)$.
\begin{itemize}
	\item[(i)] 
	Then, for any $f\in\test{\R}$, $\varphi\in\test{B(0,3R)}$, $\hbar_0>0$ and $N\in\N$, there exists $C_N>0$ such that for any $r\in\{0,1\}$, any $j\in\{1,\ldots,d\}$ and any $\hbar\in(0,\hbar_0]$,
	\begin{equation}\label{eq:aux-loc_2-1}
		\norm{\varphi P_j^r(f(\Hhb)-f(\Hloc))}_1\leq C_N\hbar^{2N-d}\;.
	\end{equation}
	\item[(ii)] 
	In particular, there exists $C>0$ such that for any $r\in\{0,1\}$, any $j\in\{1,\ldots,d\}$ and any $\hbar\in(0,\hbar_0]$,
	\begin{equation}\label{eq:aux-loc_2-2}
		\norm{\varphi P_j^rf(\Hhb)}_1\leq C\hbar^{-d}\;.
	\end{equation}
\end{itemize}
The constants $C_N$ and $C$ depend on the dimension $d$, the norm $\normLp{\partial^\alpha\varphi}{\infty}{(\R^d)}$, the constant $c$, the norm $\normLp{\partial^\alpha\varphi}{\infty}{(\R^d)}$,  $\normLp{\partial^\alpha\psi}{\infty}{(\R^d)}$ for a finite number of derivatives (that depends on $N$ and $c$) and on $\supp(f)$.
\end{lemma}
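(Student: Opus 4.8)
The plan is to reduce the statement to resolvent estimates through the Helffer--Sj\"ostrand formula (\cref{thm:Helffer-Sjostrand}) and then to exploit that $\Hhb$ and $\Hloc$ coincide, as differential operators, on $B(0,4R)$, while $\supp\varphi\subset B(0,3R)$ is separated from $\partial B(0,4R)$ by a distance at least $R$. First I would fix an almost analytic extension $\tilde f\in\test{\C}$ of $f$ with $\supp\tilde f$ compact and with $\abs{\bar{\partial}\tilde f(z)}\le C_M\abs{\Im(z)}^M$ for every $M\in\N$; applying \cref{thm:Helffer-Sjostrand} to $\Hhb$ and to $\Hloc$ and subtracting yields
\[
\varphi P_j^r\big(f(\Hhb)-f(\Hloc)\big)=\frac1\pi\int_\C\bar{\partial}\tilde f(z)\,\varphi P_j^r\big[(\Hhb-z)^{-1}-(\Hloc-z)^{-1}\big]\,L(\di z).
\]
Hence it suffices to prove, for every $N$, a resolvent-difference bound $\norm{\varphi P_j^r[(\Hhb-z)^{-1}-(\Hloc-z)^{-1}]}_1\le C_N\hbar^{2N-d}\,G_N(z)$ with $G_N$ growing at most polynomially in $\crochetjap{z}$ and in $\abs{\Im(z)}^{-1}$: choosing the decay order $M$ large enough and integrating over the compact set $\supp\tilde f$ (on which $\crochetjap{z}$ is bounded) then gives \eqref{eq:aux-loc_2-1}.

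The crucial step is a localization that never involves the ill-behaved operator $\Hhb-\Hloc$ (whose coefficients, outside $B(0,4R)$, are only locally integrable). Choose $\chi\in\test{B(0,4R)}$ with $\chi\equiv1$ on a neighborhood of $\supp\varphi$, say on $B(0,3R)$. Since $\Hhb$ and $\Hloc$ agree as differential operators on $B(0,4R)\supset\supp\chi$, one has $\chi\Hhb w=\chi\Hloc w$ for $w$ locally $H^2$ near $B(0,4R)$, which gives the identity $\chi(\Hhb-z)^{-1}=(\Hloc-z)^{-1}\chi+(\Hloc-z)^{-1}[\Hloc,\chi](\Hhb-z)^{-1}$; the analogous identity with $\Hhb$ replaced by $\Hloc$ is just the commutator formula $\chi(\Hloc-z)^{-1}=(\Hloc-z)^{-1}\chi+(\Hloc-z)^{-1}[\Hloc,\chi](\Hloc-z)^{-1}$. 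Subtracting, the $(\Hloc-z)^{-1}\chi$ terms cancel and
\[
\chi\big[(\Hhb-z)^{-1}-(\Hloc-z)^{-1}\big]=(\Hloc-z)^{-1}[\Hloc,\chi]\big[(\Hhb-z)^{-1}-(\Hloc-z)^{-1}\big].
\]
Because $\chi\equiv1$ near $\supp\varphi$ we have $\varphi P_j^r=\varphi P_j^r\chi$, so multiplying by $\varphi P_j^r$ on the left,
\[
\varphi P_j^r\big[(\Hhb-z)^{-1}-(\Hloc-z)^{-1}\big]=\varphi P_j^r(\Hloc-z)^{-1}[\Hloc,\chi]\big[(\Hhb-z)^{-1}-(\Hloc-z)^{-1}\big].
\]
The virtue of this formula is that the \emph{defect} $[\Hloc,\chi]$ is a first order magnetic operator whose coefficients are supported in $\supp\nabla\chi\subset B(0,4R)$, at positive distance from $\supp\varphi$, hence in the region where $\Hloc$ has smooth, bounded potentials.

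Then, to finish part (i), I would take trace norms above and split off, by H\"older's inequality, the factor $(\Hhb-z)^{-1}-(\Hloc-z)^{-1}$, whose operator norm is at most $2\abs{\Im(z)}^{-1}$ since $\dist(z,\spec\Hhb)\ge\abs{\Im(z)}$ and likewise for $\Hloc$. It remains to bound $\norm{\varphi P_j^r(\Hloc-z)^{-1}[\Hloc,\chi]}_1$: writing $[\Hloc,\chi]=\sum_k\big(P_k\eta_k+\eta_kP_k\big)$ with $\eta_k=-i\hbar\partial_k\chi\in\test{\R^d}$, and using $\eta_kP_k=P_k\eta_k-[P_k,\eta_k]$ to move all momenta to the left of the cut-offs, this operator becomes a finite linear combination of terms $\varphi P_j^r(\Hloc-z)^{-1}P_k^m\psi$ with $m\in\{0,1\}$ and $\psi\in\test{\R^d}$ supported at positive distance from $\supp\varphi$. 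Since $\Hloc$ has smooth bounded potentials, \cref{lemma:aux-loc_000} applies and bounds each such term by $C_N\hbar^{2N-d}\crochetjap{z}^{N+\frac{d+r+m}2}\abs{\Im(z)}^{-1-2N}$. Altogether $\norm{\varphi P_j^r[(\Hhb-z)^{-1}-(\Hloc-z)^{-1}]}_1\le C_N\hbar^{2N-d}\crochetjap{z}^{N+\frac{d+r+1}2}\abs{\Im(z)}^{-2-2N}$, and substituting into the Helffer--Sj\"ostrand integral (with $M$ large enough and $\crochetjap{z}$ bounded on $\supp\tilde f$) yields \eqref{eq:aux-loc_2-1} for every $N$. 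Statement \eqref{eq:aux-loc_2-2} then follows from $\norm{\varphi P_j^rf(\Hhb)}_1\le\norm{\varphi P_j^rf(\Hloc)}_1+\norm{\varphi P_j^r(f(\Hhb)-f(\Hloc))}_1$, bounding the first term by \cref{lemma:aux-loc_1}(i) and the second by part (i) applied with $N=0$, together with $\hbar\le\hbar_0$.

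I expect the main obstacle to be precisely this localization: recasting the resolvent difference so that it becomes sandwiched between a trace-class factor built from functions supported away from $\varphi$ and the trivially bounded factor $(\Hhb-z)^{-1}-(\Hloc-z)^{-1}$, without ever forming $\Hhb-\Hloc$, and then justifying the operator identities involving $\chi$ on a suitable core while tracking the $\crochetjap{z}$ and $\abs{\Im(z)}^{-1}$ powers carefully enough for the contour integral to converge. The genuinely hard analytic input, the disjoint-support trace-norm bound \cref{lemma:aux-loc_000}, is already available, so the remaining work is organizational.
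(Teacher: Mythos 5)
Your proposal is correct and follows essentially the same route as the paper's proof: the Helffer--Sj\"ostrand formula, a cutoff equal to $1$ near $\supp\varphi$ whose commutator $[\Hloc,\chi]$ carries the localization, and the disjoint-support trace bound of \cref{lemma:aux-loc_000} for $\varphi P_j^r(\Hloc-z)^{-1}P_k^m\psi$, with the leftover resolvent factor controlled in operator norm by $\abs{\Im(z)}^{-1}$, and part (ii) deduced exactly as in the paper. The only difference is bookkeeping: the paper splits off a $(1-\psi)$ term and uses a second cutoff $\tilde\psi$ together with $(\Hhb-z)^{-1}$, whereas you write a single identity for $\chi\big[(\Hhb-z)^{-1}-(\Hloc-z)^{-1}\big]$ and bound the full resolvent difference in operator norm — an equivalent arrangement of the same argument.
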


\begin{proof} We adapt the proof of \cite[Lemma 4.3]{Mikkelsen-2023magn}, that relies on the Helffer--Sj\"ostrand formula, \cref{lemma:aux-loc_000}, and \cref{lemma:aux-loc_1}.
\item[\quad(i)]	Let $\tilde{f}$ an almost extension of $f$, then by the Helffer--Sj\"ostrand formula
	\begin{align*}
		\varphi P_j^r(f(\Hhb)-f(\Hloc)) &= \frac{1}{\pi}\int_\C\bar{\partial}\tilde{f}(z)\varphi P_j^r\big((\Hhb-z)^{-1}-(\Hloc-z)^{-1}\big) L(\di z) .
	\end{align*}
	Moreover, by construction we have
	\begin{align*}
		\dist(\supp(\varphi),\partial B(0,4))\geq R.
	\end{align*}
	Let $z\in\C\setminus(\spec(\Hhb)\cup\spec(\Hloc))$.
	Let $\psi\in\test{\R,[0,1]}$ be such that $\psi=1$ on the set $	\{x\in\R^d\: :\: \dist(x,\supp(\varphi))\leq R/4\}$ and such that $\supp(\psi)\subset
	\{x\in\R^d\: :\: \dist(x,\supp(\varphi))\leq 3R/4\}$, so that
	\begin{align*}
		\dist(\supp(\varphi),\supp(1-\psi))\geq R/4 \;,
	\end{align*}
	and in particular $\varphi(1-\psi)=0$ and $\varphi=\varphi\psi$.
	We then write
	\begin{align*}
		\varphi P_j^r&\big((\Hhb-z)^{-1}-(\Hloc-z)^{-1}\big) 
		\\&= \varphi P_j^r\big((\Hhb-z)^{-1}-(\Hloc-z)^{-1}\psi\big)-\varphi P_j^r(\Hloc-z)^{-1}(1-\psi)
		\\&=\varphi\big(\psi P_j^r(\Hhb-z)^{-1}-P_j^r(\Hloc-z)^{-1}\psi\big)-\varphi P_j^r(\Hloc-z)^{-1}(1-\psi)\;.
	\end{align*}
	On the one hand, \cref{lemma:aux-loc_000} provides $C_N>0$ such that one has for any $N\geq 0$
	\begin{align}\label{eq-proof:aux-loc_2-extra-1}
		\norm{\varphi P_j^r(\Hloc-z)^{-1}(1-\psi)}_1\leq C_N\hbar^{2N-d}\frac{\crochetjap{z}^{N+\frac{d+r}2}}{\abs{\Im(z)}^{1+2N}}\;.
	\end{align}
%
	Let us show that
	\begin{align}\label{eq-proof:aux-loc_2-extra-2}
		\norm{\psi P_j^r(\Hhb-z)^{-1}-P_j^r(\Hloc-z)^{-1}\psi}_1\leq \tilde{C}_N\hbar^{2N+1-d}\frac{\crochetjap{z}^{N+\frac{d+1+r}2}}{\abs{\Im(z)}^{2+2N}}\;.
	\end{align}
	Notice that for all $\alpha\in\N^d$ such that $\abs{\alpha}\geq 1$
	\begin{align}\label{eq-proof:aux-loc_2-supp-disj}
		\dist(\supp(\varphi),\supp(\partial_x^\alpha\psi))\geq R/4\;.
	\end{align}
	This implies
	\begin{align*}
		\varphi[\psi, P_j^r]=\delta_{r,1}i\hbar\varphi(\partial_{x_j}\psi) =0\;,
	\end{align*}
	and
	\begin{align*}
		\varphi\Big(\psi P_j^r&(\Hhb-z)^{-1}-P_j^r(\Hloc-z)^{-1}\psi \Big)
		\\&=\underset{=0}{\underbrace{\varphi[\psi, P_j^r]}}(\Hhb-z)^{-1}+\varphi P_j^r\Big(\psi(\Hhb-z)^{-1}-(\Hloc-z)^{-1}\psi\Big)\;.
	\end{align*}
%
	Let $\tilde{\psi}\in\test{B(0,4R),[0,1]}$ such that $\tilde{\psi}=1$ on $\supp(\psi)$.
 	We apply the same calculations as in \cite[Lemma 4.3]{Mikkelsen-2023magn}
	\begin{align*}
		\psi (\Hhb-z)^{-1}-(\Hloc-z)^{-1}\psi
		&= 	\psi (\Hhb-z)^{-1}-(\Hloc-z)^{-1}\psi\tilde{\psi}
		\\&= 	\psi (\Hhb-z)^{-1}-(\Hloc-z)^{-1}\psi(\Hloc-z)(\Hhb-z)^{-1}\tilde{\psi}
		\\&= (\Hloc-z)^{-1}[(\Hloc-z),\psi ](\Hhb-z)^{-1}\tilde{\psi}\;.
	\end{align*}
%
	We also notice that for any $z\in\C$, one has $[(\Hloc-z),\psi P_j^r]=[\Hloc,\psi P_j^r]$ and that
	\begin{align}\label{eq:com-H-phi}
		[\Hloc,\psi] 
		&=\sum_{k=1}^d[P_k^2,\psi]
		=\sum_{k=1}^d([P_k,\psi]P_k+P_k[P_k,\psi])\nonumber
		\\&= \sum_{k=1}^d\left(-2i\hbar P_k(\partial_{x_k}\psi)-\hbar^2(\partial_{x_k x_k}^2\psi)\right)\;.
	\end{align}
%
	We now use the H\"older inequality, the identity \cref{eq:com-H-phi} and \cref{lemma:aux-loc_000}, which we can apply since \cref{eq-proof:aux-loc_2-supp-disj} holds, to deduce that
	\begin{align*}
		&\norm{\varphi P_j^r\big(\psi(\Hhb-z)^{-1}-(\Hloc-z)^{-1}\psi\big)}_1 \nonumber
		\\&\quad=\norm{\sum_{k=1}^d\varphi P_j^r(\Hloc-z)^{-1}\Big(-i\hbar P_k(\partial_{x_k}\psi)-\hbar^2(\partial_{x_k}^2\psi)\Big)(\Hhb-z)^{-1}\tilde{\psi}}_1  \nonumber
		\\&\quad\leq \hbar \norm{(\Hhb-z)^{-1}}_\op\sum_{k=1}^d\Big(\norm{\varphi P_j^r(\Hloc-z)^{-1} P_k(\partial_{x_k}\psi)}_1
		+\hbar\norm{\varphi P_j^r(\Hloc-z)^{-1}(\partial_{x_k}^2\psi)}_1
		\Big) \nonumber
		\\&\quad\leq \frac\hbar{\abs{\Im(z)}} d\hat{C}_N \hbar^{2N-d}\left(\frac{\crochetjap{z}^{N+\frac{d+1}2}}{\abs{\Im(z)}^{1+2N}}+\hbar\frac{\crochetjap{z}^{N+\frac{d}2}}{\abs{\Im(z)}^{1+2N}}\right) \nonumber
		\\&\quad\leq \tilde{C}_N\hbar^{2N+1-d}\frac{\crochetjap{z}^{N+\frac{d+1+r}2}}{\abs{\Im(z)}^{2+2N}}\;.
	\end{align*}
	As a conclusion, by using the fact that $\tilde{f}$ is almost analytic and by recombining \cref{eq-proof:aux-loc_2-extra-1}, \cref{eq-proof:aux-loc_2-extra-2}, we obtain \cref{eq:aux-loc_2-1}
	\begin{align*}
		\norm{\varphi P_j^r(f(\Hhb)-f(\Hloc))}_1 &= C_N\hbar^{2N-d}\int_{\supp(\tilde{f})}\crochetjap{z}^{N+\frac{d+1+r}{2}} L(\di z) \;.
	\end{align*}
	\item[\quad(ii)] We deduce \cref{eq:aux-loc_2-2} from \cref{eq:aux-loc_1-2}, \cref{eq:aux-loc_2-1}, and the triangle inequality.
\end{proof}

Below is a local version of \cref{thm:cb-loc-gene}, the magnetic version of \cite[Lemma 3.1]{Fournais-Mikkelsen-2020}.

\begin{lemma}\label{lemma:aux-loc_3}
	Assume that $(\Hhb,B(0,4R))$ satisfy \cref{ass:loc-magn} for $R>0$. Let  $a_{\loc}\in\test{\R^d,\R^d}$ and $V_{\loc}\in\test{\R^d,\R}$ that define the previously mentioned local Schrödinger operator $\Hloc$ and let $P:=-i\hbar\nabla- ba_{\loc}(x)$.
		Then, for any $f\in\test{\R}$, $\varphi\in\test{B(0,3R)}$, $\hbar_0>0$, there exists $C>0$ such that for any $\hbar\in(0,\hbar_0]$, any $b>0$, any $r\in\{0,1\}$, any $j\in\{1,\ldots,d\}$
	\begin{equation*}
		\norm{[f(\Hhb),\varphi P_j^r]}_1\leq C\crochetjap{b}^r\hbar^{1-d}\;.
	\end{equation*}
	This constant depends on the dimension $d$, $R>0$, the norms $\normLp{V_-}{\infty}{(\R^d)}$, of $\normLp{\partial^\alpha\varphi}{\infty}{(\R^d)}$ for finitely many $\alpha\in\N^d$ and $\normLp{\partial^kf}{\infty}{(\R)}$ for all $k\in\N$. 
\end{lemma}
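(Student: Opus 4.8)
The plan is to adapt the proof of \cite[Lemma 3.1]{Fournais-Mikkelsen-2020} to the magnetic setting, using \cref{lemma:prop-magn-momentum} to handle the non-commutativity of the magnetic momenta and the auxiliary estimates of \cref{subsec:local-aux} (in particular \cref{lemma:aux-loc_0}, \cref{lemma:aux-loc_1}, \cref{lemma:aux-loc_2}) for the required trace norms. I would proceed in three steps.

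\emph{Step 1 (reduction to the local operator).} Write
\[
[f(\Hhb),\varphi P_j^r]-[f(\Hloc),\varphi P_j^r]=(f(\Hhb)-f(\Hloc))\,\varphi P_j^r-\varphi P_j^r\,(f(\Hhb)-f(\Hloc)).
\]
The second term is $\cO(\hbar^{2N-d})$ for every $N$ by \cref{lemma:aux-loc_2}(i). For the first term, pass to the adjoint: since $P_j$, $f(\Hhb)$ and $f(\Hloc)$ are self-adjoint, $\norm{(f(\Hhb)-f(\Hloc))\varphi P_j^r}_1=\norm{P_j^r\,\bar\varphi\,(\bar f(\Hhb)-\bar f(\Hloc))}_1$, and using $[P_j^r,\bar\varphi]=-\delta_{r,1}\,i\hbar\,(\partial_{x_j}\bar\varphi)$ together with $\bar\varphi,\partial_{x_j}\bar\varphi\in\test{B(0,3R)}$, \cref{lemma:aux-loc_2}(i) again gives $\cO(\hbar^{2N-d})$. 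Taking $N>d/2$ reduces the claim to $\norm{[f(\Hloc),\varphi P_j^r]}_1\leq C\crochetjap{b}^r\hbar^{1-d}$ for the local operator.

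\emph{Step 2 (Helffer--Sjöstrand and commutation).} By \cref{thm:Helffer-Sjostrand} and $[(\Hloc-z)^{-1},A]=-(\Hloc-z)^{-1}[\Hloc,A](\Hloc-z)^{-1}$,
\[
[f(\Hloc),\varphi P_j^r]=-\frac1\pi\int_\C\bar\partial\tilde f(z)\,(\Hloc-z)^{-1}\,[\Hloc,\varphi P_j^r]\,(\Hloc-z)^{-1}\,L(\di z).
\]
Expanding $[\Hloc,\varphi P_j^r]$ with $\Hloc=\sum_{k=1}^d P_k^2+V_{\loc}$, the identities $[P_k,g]=-i\hbar(\partial_{x_k}g)$, $[V_{\loc},P_j]=i\hbar(\partial_{x_j}V_{\loc})$ and \cref{eq:com-momenta-square}, and then moving every multiplication operator to the left of the magnetic momenta, one obtains a finite sum
\[
[\Hloc,\varphi P_j^r]=\hbar\sum_{\ell}c_\ell\,\hbar^{e_\ell}\,b^{\,n_\ell}\,g_\ell\,Q_\ell,\qquad e_\ell\geq0,\quad n_\ell\in\{0,\dots,r\},
\]
where $g_\ell\in\test{B(0,3R)}$ is built from $\varphi$, $V_{\loc}$, $a_{\loc}$ and their derivatives, and $Q_\ell$ is a product of at most $1+r$ of the operators $P_1,\dots,P_d$. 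In particular an explicit power of $b$ enters only through $[\Hloc,P_j]$, hence only for $r=1$.

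\emph{Step 3 (trace norm estimates and conclusion).} For each term it remains to bound, uniformly in $b\ge 0$ and $\hbar\in(0,\hbar_0]$,
\[
\norm{(\Hloc-z)^{-1}\,g_\ell\,Q_\ell\,(\Hloc-z)^{-1}}_1\le C\,\hbar^{-d}\,\frac{\crochetjap{z}^{A}}{\abs{\Im z}^{B}}
\]
for fixed $A,B$. One inserts cut-offs $\psi,\psi'\in\test{B(0,4R)}$ equal to $1$ on $\supp g_\ell$, uses the resolvent identities from the proof of \cref{lemma:aux-loc_00} to replace $(\Hloc-z)^{-1}$ by $(\Hloc+\lambda)^{-1}$ up to operator-norm factors $\cO((\lambda+\crochetjap{z})/\abs{\Im z})$, and estimates the remaining trace norms by splitting into Hilbert--Schmidt factors, using $\norm{P_k(\Hloc+\lambda)^{-1/2}}_\op\le1$, the operator inequality $P_k^2\le\Hloc+\lambda$, the auxiliary bounds \cref{lemma:aux-loc_0} and \cref{lemma:aux-loc_1}, and the diamagnetic inequality (\cref{lemma:diamagnetic-ineq}), which supplies a factor $\hbar^{-d/2}$ per Hilbert--Schmidt factor coming from $(-\hbar^2\Delta+\lambda)^{-1}(x,x)$. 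Since $\tilde f$ is an almost analytic extension of the compactly supported $f$, $\bar\partial\tilde f$ is supported in a fixed compact subset of $\C$ and $\abs{\bar\partial\tilde f(z)}\le C_N\abs{\Im z}^N$ for all $N$ (\cref{def:almost-anal}), so the $z$-integral converges to an $\hbar$-independent constant. Combining the $\hbar$ gained in Step 2 with the $\hbar^{-d}$ of Step 3 and the at most $r$-th power of $b$ yields $\norm{[f(\Hloc),\varphi P_j^r]}_1\le C\crochetjap{b}^r\hbar^{1-d}$, which together with Step 1 proves the lemma.

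\emph{Main obstacle.} The delicate point is Step 3: one must extract exactly the optimal power $\hbar^{-d}$, so that together with the $\hbar$ gained from the commutator the total is $\hbar^{1-d}$ and no worse, while keeping every constant independent of $b$ — which forces pulling each power of $b$ out of the commutator explicitly and exploiting that $[\Hloc,P_j]$ is only \emph{linear} in $b$. A secondary difficulty is that $\Hloc$ is not confining, so bare resolvents are not trace class and one genuinely needs the compact supports of the $g_\ell$ (together with the energy localization carried by $f$) to make the operators trace class; in particular, when reorganizing the up-to-two magnetic momenta inside $Q_\ell$ one has to avoid discarding these cut-offs.
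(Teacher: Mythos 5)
Your Step 1 (reduction from $\Hhb$ to $\Hloc$ via \cref{lemma:aux-loc_2}, handling the left-sided term by taking adjoints and commuting $P_j^r$ past $\varphi$) is correct and is exactly how the paper begins. The problem is in Steps 2--3, where you diverge from the paper and the argument breaks. Your central claim is a \emph{trace norm} bound $\norm{(\Hloc-z)^{-1}\,g_\ell\,Q_\ell\,(\Hloc-z)^{-1}}_1\le C\hbar^{-d}\crochetjap{z}^{A}\abs{\Im z}^{-B}$ with a single resolvent on each side. This fails in general: the operator in question is (morally) a pseudodifferential operator of order $(1+r)-4$ localized in space, and trace class requires order below $-d$, so already for $d\ge 3$ (and for $d=2$, $r=1$) it need not be trace class at all, let alone satisfy a uniform bound. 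The mechanism you invoke is also not available: the diamagnetic inequality compares kernels with $(-\hbar^2\Delta+\lambda)^{-1}(x,y)$, whose diagonal value $(-\hbar^2\Delta+\lambda)^{-1}(x,x)$ is infinite for every $d\ge 2$, and the Hilbert--Schmidt norm of a cut-off times a \emph{single} resolvent is finite only for $d\le 3$ (with a momentum attached, only for $d=1$). To push trace norms through the Helffer--Sj\"ostrand formula one must first trade each resolvent for a sufficiently high power $(\Hloc+\lambda)^{-M}$, $2M>d/2$, via iterated resolvent identities, paying factors $\crochetjap{z}^{M}/\abs{\Im z}$ --- this is precisely what underlies \cref{lemma:aux-loc_000} and is absent from your sketch; it is the heart of the matter, not a routine detail.

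The paper avoids this difficulty by a different splitting, which is the idea you are missing: it uses Helffer--Sj\"ostrand only to prove the \emph{operator norm} bound $\norm{[f(\Hloc),\varphi P_j^r]}_\op\le C\crochetjap{b}^r\hbar$ (here only operator norms of resolvent sandwiches appear, so no trace-class issue arises), and it obtains the trace norm by inserting an auxiliary $g\in\test{\R}$ with $g=1$ on $\supp(f)$, writing $[f(\Hloc),\varphi P_j^r]=g(\Hloc)[f(\Hloc),\varphi P_j^r]+[g(\Hloc),\varphi P_j^r]f(\Hloc)$, and then a spatial cut-off $\tilde\varphi$ equal to $1$ on $\supp(\varphi)$, so that each term factorizes as (trace norm of the space-and-energy localized object $\tilde\varphi g(\Hloc)$, which is $\cO(\hbar^{-d})$ by \cref{lemma:aux-loc_1,lemma:aux-loc_2}) times (the $\cO(\crochetjap{b}^r\hbar)$ operator norm of the commutator), plus cross terms $(1-\tilde\varphi)f(\Hloc)\varphi P_j^r$ which are $\cO(\hbar^{2N-d})$ by \cref{lemma:aux-loc_1}(ii). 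Unless you either adopt this factorization or carry out the higher-power resolvent expansion in your Step 3, the claimed $\hbar^{1-d}$ trace norm bound is not established.
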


\begin{proof}
	By \cref{lemma:aux-loc_2}, it is sufficient to prove the desired bound with the local operator $\Hloc$ instead of $\Hhb$. Then, for any $N\in\N$ there exists a constant $C_N>0$  such that
	\begin{align*}
		\norm{[f(\Hhb),\varphi P_j^r]}_1&\leq \norm{[f(\Hloc),\varphi P_j^r]}_1+\norm{[f(\Hhb)-f(\Hloc),\varphi P_j^r]}_1
		\\&\leq \norm{[f(\Hloc),\varphi P_j^r]}_1+2\norm{\varphi P_j^r(f(\Hhb)-f(\Hloc))}_1
		\\&\leq \norm{[f(\Hloc),\varphi P_j^r]}_1+C_N \hbar^{2N-d}\;.
	\end{align*}
	Let us define $g\in\test{\R,[0,1]}$ such that $g=1$ on the support on $f$, so that
	\begin{align*}
		[f(\Hloc),\varphi P_j^r] 
		&=g(\Hloc)f(\Hloc)\varphi P_j^r-\varphi P_j^r g(\Hloc)f(\Hloc)
%
		%
		\\&= g(\Hloc)[f(\Hloc),\varphi P_j^r]+[g(\Hloc),\varphi P_j^r]f(\Hloc)\;.
	\end{align*}
	Then, by the triangle inequality and the cyclicity of the trace norm
	\begin{align*}
		\norm{[f(\Hloc),\varphi P_j^r]}_1\leq \norm{g(\Hloc)[f(\Hloc),\varphi P_j^r]}_1+\norm{f(\Hloc)[g(\Hloc),\varphi P_j^r]}_1\;.
	\end{align*}
	It remains to bound each term on the right-hand side. Actually, we will bound one of them as the analysis is similar for the other one, since we do not use the relation between $f$ and $g$. To do so, we introduce $\tilde{\varphi}\in\test{\R,[0,1]}$ such that $\tilde{\varphi}=1$ on $\supp(\varphi)$. In particular, $\supp(\varphi)$ and $\supp(1-\tilde{\varphi})$ are disjoint. Then,
	\begin{equation}\label{eq-proof:aux-loc_3}
		\begin{split}
			g(\Hloc)[f(\Hloc),\varphi P_j^r] 
			&= 	g(\Hloc)\tilde{\varphi}[f(\Hloc),\varphi P_j^r]+g(\Hloc)(1-\tilde{\varphi})[f(\Hloc),\varphi P_j^r]
			\\&= 	g(\Hloc)\tilde{\varphi}[f(\Hloc),\varphi P_j^r]+g(\Hloc)(1-\tilde{\varphi})f(\Hloc)\varphi P_j^r\;.
		\end{split}
	\end{equation}
	By the triangle inequality in \cref{eq-proof:aux-loc_3}, the H\"older inequality, Lemma \ref{lemma:aux-loc_1} applied to the trace norms, and by \cref{eq-proof:aux-loc_3-bis}, one has for any $N\in\N$
	\begin{align*}
		\norm{	g(\Hloc)[f(\Hloc),\varphi P_j^r] }_1 &\leq \norm{g(\Hloc)\tilde{\varphi}[f(\Hloc),\varphi P_j^r]}_1+\norm{g(\Hloc)(1-\tilde{\varphi})f(\Hloc)\varphi P_j^r}_1
		\\&\leq \norm{\tilde{\varphi}g(\Hloc)}_1\norm{[f(\Hloc),\varphi P_j^r]}_\op+\norm{g(\Hloc)}_\op\norm{(1-\tilde{\varphi})f(\Hloc)\varphi P_j^r}_1
		&
		\\&\leq C\norm{[f(\Hloc),\varphi P_j^r]}_\op+C_N\hbar^{2N-d}\;.
	\end{align*}
	It only remains to establish the bound 
	\begin{equation}\label{eq-proof:aux-loc_3-bis}
		\norm{[f(\Hloc),\varphi P_j^r]}_\op \leq C\crochetjap{b}^r\hbar \;,
	\end{equation}
	where $C>0$ only depends on the supremum norm of the derivatives of $\partial^\alpha_x\varphi$, $\partial^\beta_x a$ for finitely many $\alpha,\beta\in\N^d$ such that $\abs{\beta}\geq 1$ and $\partial_{x_j}V$. In fact,
		by the Helffer--Sj\"ostrand formula, for $\tilde{f}\in\test{\C}$ an almost extension of $f$,
		\begin{align*}
			[f(\Hloc),\varphi P_j^r] 
			&= \frac 1\pi\int_\C \bar{\partial}\tilde{f}(z)[(\Hloc-z)^{-1},\varphi P_j^r] L(\di z)\;.
		\end{align*}
		Let $z\in\C\setminus\spec(\Hloc)$. Writing the resolvent identities gives that
		\begin{align*}
			[(\Hloc&-z)^{-1},\varphi P_j^r] 
			\\&= [(\Hloc-z)^{-1},\varphi]P_j^r+\varphi[(\Hloc-z)^{-1},P_j^r]
			\\&=(\Hloc-z)^{-1} [\Hloc-z,\varphi](\Hloc-z)^{-1}P_j^r
			-\varphi(\Hloc-z)^{-1}[\Hloc-z,P_j^r](\Hloc-z)^{-1}	
            \\&=(\Hloc-z)^{-1} [\Hloc,\varphi](\Hloc-z)^{-1}P_j^r
			-\varphi(\Hloc-z)^{-1}[\Hloc,P_j^r](\Hloc-z)^{-1}	
            \;.
		\end{align*}
		On the one hand, by \cref{eq:com-H-phi}
		\begin{align*}
			 [\Hloc,\varphi]
			 	= \hbar\sum_{k=1}^d\left(-\hbar\partial_{x_k}^2\varphi+2i\partial_{x_k}\varphi P_k\right) \;.
		\end{align*}
		On the other hand, by the explicit identity on the magnetic momenta \cref{eq:com-momenta-square}
		\begin{align}\label{eq-proof:HP}
			 [\Hloc,P_j^r]
		&			= \delta_{r,1}\left(\sum_{k=1}^d[P_k^2,P_j]+[V_{\loc},P_j]\right)\nonumber
			\\&=i\hbar\delta_{r,1}\left(b\sum_{k=1}^d \big(P_k B_{kj}+B_{kj}P_k\big)+\partial_{x_j}V_{\loc}\right).
		\end{align}
		Since $\cD(\Hhb)\subset\cD(P_j)$, since the functions $a_{\loc}$ and $V_{\loc}$ are smooth and compactly supported, each operator $(\Hloc-z)^{-1}P_j^r$, $\varphi(\Hloc-z)^{-1}$, $ (\Hloc-z)^{-1}\big(-i\hbar\partial_{x_k}^2\varphi+2i\partial_{x_k}\varphi P_k\big)$ and $ (\Hloc-z)^{-1}\Big(b\big(P_k B_{kj}+B_{kj}P_k\big)+\frac 1d\partial_{x_j}V\Big)$ is bounded in operator norm for any $k\in\{1,\ldots,d\}$.
		
        Moreover, by \cref{lemma:aux-loc_00} we have
		\begin{equation*}
			\norm{[(\Hloc-z)^{-1},\varphi P_j^r]}_\op\leq \frac{C\crochetjap{b}^r\hbar\crochetjap{z}^2}{\abs{\Im(z)}^2}.
		\end{equation*}
		By the definition of the almost analytic extension we deduce that $[f(\Hloc),\varphi P_j^r]$ is a bounded operator and its operator norm is $\cO(\hbar)$, completing the verification of \cref{eq-proof:aux-loc_3-bis}.
\end{proof}

We next state the magnetic version of the Hilbert--Schmidt bounds \cite[Lemma 3.2]{Fournais-Mikkelsen-2020}.

\begin{lemma}\label{lemma:aux-loc_4}
	Assume that $(\Hhb,B(0,4R))$ satisfy \cref{ass:loc-magn} for $R>0$. Then, for any $f\in\test{\R}$, $\varphi\in\test{B(0,3R)}$, $\hbar_0>0$, there exists $C>0$ such that for any $b>0$ and any $\hbar\in(0,\hbar_0]$,
	\begin{equation*}
		\norm{[\Hhb,\varphi]f(\Hhb)}_2\leq C\hbar^{1- \frac d2}.
	\end{equation*}
	This constant depends on the dimension $d$, of $R$, of the norms $\normLp{(V_{\loc})_-}{\infty}{(\R^d)}$, of $\normLp{\partial^\alpha\varphi}{\infty}{(\R^d)}$ for finitely many $\alpha\in\N^d$ and of $\normLp{\partial^kf}{\infty}{(\R)}$ for all $k\in\N$.
\end{lemma}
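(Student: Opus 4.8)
The plan is to localise the commutator, reduce the statement to a semiclassical Hilbert--Schmidt bound for operators of the form $\psi\,P_j^r f(\Hhb)$, and then obtain that bound by interpolating between the trace norm and the operator norm. First I would exploit that $[\Hhb,\varphi]$ is a local object: since $\varphi\in\test{B(0,3R)}$, \cref{ass:loc-magn} gives $[\Hhb,\varphi]=[\Hloc,\varphi]$ as operators on $\dom(\Hhb)$, and the commutator identity \eqref{eq:com-H-phi} yields
\[
  [\Hhb,\varphi]=[\Hloc,\varphi]=\hbar\,L_\varphi,\qquad
  L_\varphi:=-2i\sum_{k=1}^{d}(\partial_{x_k}\varphi)\,P_k-\hbar\sum_{k=1}^{d}\partial^2_{x_k x_k}\varphi,
\]
where $P_k:=-i\hbar\partial_{x_k}-b(a_{\loc})_k$ and $a_{\loc},V_{\loc}$ are the potentials of \cref{ass:loc-magn}. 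Since the coefficients of $L_\varphi$ lie in $\test{B(0,3R)}$ and are uniformly bounded in $\hbar\in(0,\hbar_0]$, the triangle inequality reduces the claim to the single bound
\begin{equation}\label{eq:plan-HS}
  \norm{\psi\,P_j^r f(\Hhb)}_2\le C\,\hbar^{-d/2}
  \qquad\text{for all } \psi\in\test{B(0,3R)},\ r\in\{0,1\},\ j\in\{1,\dots,d\}.
\end{equation}

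To prove \eqref{eq:plan-HS} I would use the Schatten interpolation inequality $\norm{A}_2\le\norm{A}_1^{1/2}\norm{A}_\op^{1/2}$ (valid for any compact $A$, since $\norm{A}_2^2=\sum_i s_i^2\le\norm{A}_\op\sum_i s_i$). The trace-norm factor is exactly \eqref{eq:aux-loc_2-2}, which gives $\norm{\psi\,P_j^r f(\Hhb)}_1\le C\hbar^{-d}$. For the operator-norm factor I would first trade $\Hhb$ for $\Hloc$: by \eqref{eq:aux-loc_2-1} (with $\psi$ in place of $\varphi$) and $\norm{\cdot}_\op\le\norm{\cdot}_1$ one has $\norm{\psi\,P_j^r(f(\Hhb)-f(\Hloc))}_\op\le C_N\hbar^{2N-d}$, so it suffices to bound $\norm{\psi\,P_j^r f(\Hloc)}_\op$. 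Here the \emph{global} structure of $\Hloc$ enters: $\sum_k P_k^2=\Hloc-V_{\loc}\le\Hloc+\norm{(V_{\loc})_-}_\infty$ with $V_{\loc}$ smooth and compactly supported, so with $\lambda:=1+\norm{(V_{\loc})_-}_\infty$ one gets $\norm{P_j^r(\Hloc+\lambda)^{-1/2}}_\op\le 1$, while for a cut-off $g\in\test{\R}$ with $g\equiv1$ on $\supp f$ one has $f(\Hloc)=g(\Hloc)f(\Hloc)$ and $\norm{(\Hloc+\lambda)^{1/2}g(\Hloc)}_\op<\infty$, all uniformly in $\hbar$ and $b$. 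Hence $\norm{\psi\,P_j^r f(\Hloc)}_\op\le\norm{\psi}_\infty\norm{f}_\infty\norm{P_j^r g(\Hloc)}_\op\le C$, and picking once $N\ge d/2$ in the transfer estimate makes $\norm{\psi\,P_j^r f(\Hhb)}_\op\le C$ as well, which yields \eqref{eq:plan-HS}.

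Putting the pieces together gives $\norm{[\Hhb,\varphi]f(\Hhb)}_2=\hbar\norm{L_\varphi f(\Hhb)}_2\le C\hbar^{1-d/2}$ for $\hbar\le\hbar_0$, with the constant depending only on $d$, $R$, $\norm{(V_{\loc})_-}_\infty$, finitely many $\norm{\partial^\alpha\varphi}_\infty$, and $\supp f$ together with finitely many derivatives of $f$, as claimed. I do not foresee a deep obstacle: conceptually, the whole gain of the factor $\hbar^{d/2}$ over the trace-norm size $\hbar^{-d}$ of $\psi\,P_j^r f(\Hhb)$ is the square root in the interpolation inequality, fed by a uniform operator-norm bound. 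The one point requiring care is that this uniform operator-norm bound rests on the global form $\sum_k P_k^2+V_{\loc}$ of $\Hloc$, which $\Hhb$ need not share under \cref{ass:loc-magn}; hence the replacement of $f(\Hhb)$ by $f(\Hloc)$ provided by \eqref{eq:aux-loc_2-1} has to be carried out \emph{before} the momentum bound is invoked, not after.
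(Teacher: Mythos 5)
Your proposal is correct and matches the argument the paper intends (it simply defers to the proof of \cite[Lemma 3.2]{Fournais-Mikkelsen-2020}): localize the commutator via \cref{eq:com-H-phi} to $\hbar$ times a first-order operator with coefficients in $\test{B(0,3R)}$, and then obtain $\norm{\psi P_j^r f(\Hhb)}_2\le C\hbar^{-d/2}$ from the trace bound \cref{eq:aux-loc_2-2} combined with a $b$- and $\hbar$-uniform operator-norm bound, your interpolation $\norm{A}_2\le\norm{A}_1^{1/2}\norm{A}_\op^{1/2}$ being equivalent to the standard cyclicity/H\"older step. The only point you state without proof is the identification $[\Hhb,\varphi]=[\Hloc,\varphi]$ on the relevant vectors, but this follows from self-adjointness of $\Hhb$ together with \cref{ass:loc-magn} and is used at exactly the same level of rigor elsewhere in the paper, so it is not a gap.
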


The proof is exactly the same as the one of \cite[Lemma 3.2]{Fournais-Mikkelsen-2020}.

\subsubsection{Magnetic Weyl laws}

We begin with the Weyl law asymptotics established in \cite[Theorem 4.1]{Sobolev-1995} for moderate magnetic fields, that is, under the scaling $b=\cO_\hbar(1)$.

\begin{theorem}[Optimal local Weyl laws]\label{thm:WL-Sobolev-non-critical}
	Let $\hbar_0>0$ and $b_0<1$. Assume that $(\Hhb,B(0,4R))$ satisfies \cref{ass:loc-magn} for $R>0$, for the local operator $\Hloc =(-i\hbar\nabla-ba_\loc(x))^2+V_\loc(x)$ with $b\in(0,b_0]$, and that for any $x\in B(0,2R)$
	\begin{equation}\label{eq:non-critical-cond_bis}
		\abs{V_\loc(x)}+\abs{\nabla V_\loc(x)}^2+\hbar\geq c.
	\end{equation}
	Then, there exists $C>0$ such that for any $\hbar\in(0,\hbar_0]$ and for any $\varphi\in\test{B(0,R/2)}$
	\begin{equation*}
		\abs{\tr\left(\varphi\indic_{(-\infty,0]}(\Hhb)\right)-\frac{1}{(2\pi\hbar)^d}\left(\int_{\R^d}\int_{\R^d} \indic_{(-\infty,0]}(\abs{\xi}^2+V_{\loc}(x))\varphi(x)\di x \di\xi\right)}\leq C\hbar^{1-d}.
	\end{equation*}
	The constant $C$ depends on $d,R$, on the supremum norms of finitely many derivatives of $V$ and the supremum norms of the derivatives of each coordinate of $a$ for order larger of equal than 1.
\end{theorem}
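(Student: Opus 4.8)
This is, up to the presence of the magnetic vector potential, \cite[Theorem~4.1]{Sobolev-1995}, and I recall the structure of the argument, indicating where the magnetic terms and the non-criticality condition \cref{eq:non-critical-cond_bis} enter. Since $b\le b_0<1$, the potential $a_\loc$ is only a bounded perturbation: it affects neither the principal Weyl term nor the order $\hbar^{1-d}$ of the remainder. The first step is to pass from $\Hhb$ to the local operator $\Hloc$ on the support of $\varphi$. As in \cite{Sobolev-1995,Fournais-Mikkelsen-2020}, localizing the relevant spectral quantities by $\varphi$ and combining \cref{lemma:aux-loc_2} (applied to smooth functions of the operator) with the Agmon-type exponential decay of low-energy eigenfunctions away from $\{V-\mu\le\varepsilon\}$ shows that $\tr(\varphi\,\indic_{(-\infty,0]}(\Hhb))$ may be replaced by $\tr(\varphi\,\indic_{(-\infty,0]}(\Hloc))$ at the cost of an error of lower order than $\hbar^{1-d}$; the only point at which the precise value of the edge energy $0$ matters is the Tauberian step below, which we carry out directly for $\Hloc$. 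We may also assume $\varphi\ge 0$ by writing $\varphi=\varphi_+-\varphi_-$.

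For the local operator the Weyl law is obtained in two stages. First, smooth functional calculus: for $f\in\test{\R}$ (and, more generally, for Riesz means of order $\ge 1$, which are reduced to bounded-energy $C_0^\infty$ functions because $\Hloc$ is bounded below),
\[
\tr(\varphi\,f(\Hloc))=\frac{1}{(2\pi\hbar)^d}\int_{\R^d}\int_{\R^d} f\big(\abs{\xi}^2+V_\loc(x)\big)\,\varphi(x)\di x\di\xi+\cO(\hbar^{2-d}),
\]
which follows from the Helffer--Sj\"ostrand formula (\cref{thm:Helffer-Sjostrand}) together with a parametrix for $(\Hloc-z)^{-1}$, $z\notin\R$: iterating the resolvent identity starting from $(P^2-z)^{-1}$, whose integral kernel is controlled through the diamagnetic inequality (\cref{lemma:diamagnetic-ineq}) and the explicit kernel of $(-\hbar^2\Delta-z)^{-1}$, shows that $(\Hloc-z)^{-1}$ coincides with the quantization of $(\abs{\xi-ba_\loc(x)}^2+V_\loc(x)-z)^{-1}$ up to a remainder whose trace norm against $\varphi$ is $\cO(\crochetjap{z}^{M}\abs{\Im z}^{-M})$, hence integrable against $\bar\partial\tilde f$; taking the trace and changing variables $\xi\mapsto\xi+ba_\loc(x)$ produces the principal term, while the $b$-dependent terms involving derivatives of $a_\loc$ contribute only to the $\cO(\hbar^{2-d})$ remainder, with constants bounded since $b\le b_0<1$. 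Second, a Fourier--Tauberian argument (equivalently, the short-time representation of the propagator $e^{-it\Hloc/\hbar}$ for $\abs{t}$ in a fixed small interval) passes from these smooth estimates to the sharp counting function $\tr(\varphi\,\indic_{(-\infty,0]}(\Hloc))$.

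The Tauberian step is the only place the non-criticality condition \cref{eq:non-critical-cond_bis} is used, and I expect it to be the main obstacle. On the classical side, on the set $\{\,\abs{\,\abs{\xi}^2+V_\loc(x)}\le\delta\,\}\cap(\supp\varphi\times\R^d)$ one has
\[
\abs{\nabla_{x,\xi}\big(\abs{\xi}^2+V_\loc(x)\big)}^2=4\abs{\xi}^2+\abs{\nabla V_\loc(x)}^2\geq\abs{\xi}^2+\abs{\nabla V_\loc(x)}^2\geq c-\hbar-\delta,
\]
using $\abs{V_\loc(x)}\le\delta+\abs{\xi}^2$ on that set and \cref{eq:non-critical-cond_bis}; hence for $\hbar$ and $\delta$ small the gradient is bounded below, and the coarea formula gives $\int_{\R^d}\int_{\R^d}\indic_{[-\delta,\delta]}(\abs{\xi}^2+V_\loc(x))\varphi(x)\di x\di\xi=\cO(\delta)$ — equivalently, the classical Weyl function $\mu\mapsto\int_{\R^d}\int_{\R^d}\indic_{(-\infty,\mu]}(\abs{\xi}^2+V_\loc(x))\varphi(x)\di x\di\xi$ is Lipschitz near $\mu=0$. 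On the quantum side, the same hypothesis guarantees that the Hamiltonian flow of $\abs{\xi}^2+V_\loc(x)$ has no flat piece at energy $0$, which is precisely what makes the Fourier--Tauberian theorem (equivalently, the stationary-phase analysis of the short-time propagator) yield a remainder of the optimal order $\hbar^{1-d}$ rather than a worse power; choosing the smoothing scale $\delta=\hbar$ and combining the two sides produces the stated estimate. The resolvent parametrix and the bookkeeping of the $b$-dependent magnetic corrections are, by comparison, routine; the delicate points are the uniform-in-$\mu$ control of the smooth asymptotics near the critical energy and the Tauberian passage itself.
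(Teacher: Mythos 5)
The paper does not actually prove this statement: it is imported verbatim (in the moderate-field regime $b=\cO_\hbar(1)$) from \cite[Theorem 4.1]{Sobolev-1995}, so your attempt necessarily goes beyond what the authors do. Your overall strategy (smooth functional calculus via a resolvent parametrix, then a Fourier--Tauberian passage to the sharp projector, with the non-criticality condition \cref{eq:non-critical-cond_bis} guaranteeing that the energy surface $\{\abs{\xi}^2+V_\loc(x)=0\}$ is non-degenerate and that the classical Weyl function is Lipschitz near $0$) is indeed the right family of techniques, and your classical-side computation with the coarea formula is correct.

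However, as written there are genuine gaps. First, the reduction of $\tr\big(\varphi\,\indic_{(-\infty,0]}(\Hhb)\big)$ to $\tr\big(\varphi\,\indic_{(-\infty,0]}(\Hloc)\big)$ is not justified: \cref{ass:loc-magn} gives no information about $\Hhb$ outside $B(0,4R)$ beyond self-adjointness and semiboundedness, so there is no confining well, no classically forbidden region at energy $0$, and in general no eigenfunctions at all (the spectrum near $0$ may be continuous); the Agmon estimates of \cref{lemma:Agmon} belong to the global setting of \cref{ass:a-V} and are simply unavailable here. Moreover \cref{lemma:aux-loc_2} applies only to $f\in\test{\R}$, and controlling the spectral window near the sharp cutoff is exactly the Tauberian content of the theorem, so invoking it at the level of $\indic_{(-\infty,0]}$ begs the question. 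Second, that Tauberian step itself --- the uniform $\cO(\hbar^{1-d})$ remainder at energy $0$ for the magnetic operator, with constants controlled for $b\le b_0$ --- is only asserted; it is the actual content of Sobolev's proof (a delicate propagator/commutator analysis combined with multiscale localization) and cannot be treated as routine. A smaller inaccuracy: the diamagnetic inequality only dominates kernels pointwise (and for positive real spectral parameters), so it yields bounds, not the leading resolvent asymptotics; the parametrix must come from semiclassical symbolic calculus for the symbol $\abs{\xi-ba_\loc(x)}^2+V_\loc(x)$.
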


Actually, for the proof of the local commutators bounds (see \cref{thm:cb-loc-non-critical-cond}) we use a similar theorem on small intervals. We do not detail the proof as it is exactly the same as \cite[Proposition 2.3]{Fournais-Mikkelsen-2020}. It relies on the \cref{thm:WL-Sobolev-non-critical} and the Coarea formula.

\begin{prop}[Local Weyl law in a finite interval]\label{thm:WL-finite-int}
	Assume that $(\Hhb,B(0,4R))$ satisfy \cref{ass:loc-magn} for $R>0$. If there exists $\varepsilon>0$ and $c>0$ such that for any $E\in[-2\varepsilon,2\varepsilon]$
	\begin{equation}\label{eq:non-critical-cond_WL}
		\abs{V_\loc(x)-E}+\abs{\nabla V_\loc(x)}^2+\hbar\geq c\;,
	\end{equation}
	then, there exists $C_1,C_2>0$ such that for any closed interval $I\subset(-\varepsilon,\varepsilon)$ and for any $\varphi\in\test{B(0,R/2)}$
	\begin{equation*}
		\abs{\tr\left(\varphi\indic_I(\Hhb)\right)}\leq C_1\abs{I}\hbar^{-d}+C_2\hbar^{1-d}\;.
	\end{equation*}
\end{prop}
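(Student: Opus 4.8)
The plan is to reproduce the argument of \cite[Proposition~2.3]{Fournais-Mikkelsen-2020}: we read off the spectral density in the thin window $I$ from the sharp half-line Weyl asymptotics of \cref{thm:WL-Sobolev-non-critical}, applied at the two endpoints of $I$, and control the resulting difference of phase-space volumes by the coarea formula together with the non-criticality condition \eqref{eq:non-critical-cond_WL}. We take $b$ in the moderate range $(0,b_0]$ of \cref{thm:WL-Sobolev-non-critical}, from which the constants $C_1,C_2$ are also inherited.

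\emph{Step 1: a shifted Weyl law.} Since $\indic_I(\Hhb)$ is a projection, dominating $|\varphi|$ by a non-negative $\psi\in\test{B(0,R/2)}$ reduces us to $\varphi\geq 0$. Write $I=[\alpha,\beta]\subset(-\varepsilon,\varepsilon)$, so that $\tr(\varphi\indic_I(\Hhb))=\tr(\varphi\indic_{(-\infty,\beta]}(\Hhb))-\tr(\varphi\indic_{(-\infty,\alpha)}(\Hhb))$, the open left endpoint being harmless since $\tr(\varphi\indic_{(-\infty,s]}(\Hhb))\uparrow\tr(\varphi\indic_{(-\infty,\alpha)}(\Hhb))$ as $s\uparrow\alpha$ by monotone convergence. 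For every $E\in[-2\varepsilon,2\varepsilon]$ the operator $\Hhb-E$ again satisfies \cref{ass:loc-magn} with local external potential $V_\loc-E$, and the non-criticality hypothesis \eqref{eq:non-critical-cond_bis} for $\Hloc-E$ is precisely \eqref{eq:non-critical-cond_WL}; since $V_\loc-E$ and $V_\loc$ have the same derivatives, the constant in \cref{thm:WL-Sobolev-non-critical} is uniform in $E$. Hence, uniformly in $E$ over this range,
\[
	\tr\big(\varphi\indic_{(-\infty,E]}(\Hhb)\big)=\frac{1}{(2\pi\hbar)^d}\,\Lambda(E)+\cO(\hbar^{1-d}),\qquad
	\Lambda(E):=\int_{\R^d}\!\!\int_{\R^d}\indic_{(-\infty,E]}\big(|\xi|^2+V_\loc(x)\big)\,\varphi(x)\,\di x\,\di\xi,
\]
with $\Lambda$ continuous and non-decreasing (the classical energy shell has Lebesgue measure zero). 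Subtracting at $\beta$ and $\alpha$, and using $\varphi\geq 0$, gives $\tr(\varphi\indic_I(\Hhb))\leq (2\pi\hbar)^{-d}\big(\Lambda(\beta)-\Lambda(\alpha)\big)+C\hbar^{1-d}$.

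\emph{Step 2: coarea estimate of the shell volume.} Suppose first $\hbar\leq c/2$. Integrating out $\xi$,
\[
	\Lambda(\beta)-\Lambda(\alpha)=\abs{B(0,1)}\int_{\supp\varphi}\varphi(x)\Big[\big(\beta-V_\loc(x)\big)_+^{d/2}-\big(\alpha-V_\loc(x)\big)_+^{d/2}\Big]\di x.
\]
Put $\delta:=\min\{c/4,\varepsilon/2\}$. On $\{x\in\supp\varphi:V_\loc(x)\leq\alpha-\delta\}$ the bracket is Lipschitz in $V_\loc(x)$ with a $\delta$-dependent constant, hence is $\cO(|I|)$ pointwise and contributes $\cO(|I|)$ after integrating over the bounded set $\supp\varphi$. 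On the remaining set the bracket vanishes unless $V_\loc(x)\in(\alpha-\delta,\beta]\subset[-2\varepsilon,2\varepsilon]$; applying the coarea formula to $V_\loc$ there and invoking \eqref{eq:non-critical-cond_WL} with $E=s$ on each level set $\{V_\loc=s\}$, which yields $|\nabla V_\loc|^2\geq c-\hbar\geq c/2$ on it, this contribution is at most
\[
	C\int_{\alpha-\delta}^{\beta}\Big[(\beta-s)_+^{d/2}-(\alpha-s)_+^{d/2}\Big]\di s
	=\frac{2C}{d+2}\Big[(\beta-\alpha+\delta)^{(d+2)/2}-\delta^{(d+2)/2}\Big]=\cO(|I|),
\]
the last equality by the mean value theorem, the function being smooth near the fixed $\delta>0$. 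Thus $\Lambda(\beta)-\Lambda(\alpha)\leq C|I|$, and combined with Step~1 this proves the claim for $\hbar\leq c/2$.

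\emph{Step 3: the regime $\hbar>c/2$, and the main obstacle.} If $\hbar>c/2$, then $\hbar$ ranges in the compact interval $[c/2,\hbar_0]$, so \cref{thm:WL-Sobolev-non-critical} applied at $E=\alpha$ and $E=\beta$ (its leading term being a bounded phase-space volume) gives $|\tr(\varphi\indic_I(\Hhb))|\leq C\leq C'\hbar^{1-d}$, which is absorbed into the $C_2\hbar^{1-d}$ term. The only genuinely delicate point is Step~2: one must use \eqref{eq:non-critical-cond_WL} to keep $\nabla_{x,\xi}(|\xi|^2+V_\loc(x))$ bounded away from zero on the shell $\{\alpha\leq|\xi|^2+V_\loc(x)\leq\beta\}$, so that this shell has phase-space volume $\cO(|I|)$; the singularity of the reciprocal gradient where $|\xi|\to 0$ is integrable precisely because $|\nabla V_\loc|$ is then bounded below, and organizing this cleanly via the coarea formula is exactly the computation carried out in \cite[Proposition~2.3]{Fournais-Mikkelsen-2020}.
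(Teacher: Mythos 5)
Your proof is correct and takes exactly the route the paper intends: the paper gives no separate argument but states the proof is the same as \cite[Proposition~2.3]{Fournais-Mikkelsen-2020}, namely apply the sharp endpoint asymptotics of \cref{thm:WL-Sobolev-non-critical} at the two energies $\alpha,\beta$ (uniformly in the shift thanks to \cref{eq:non-critical-cond_WL}) and bound the resulting difference of classical phase-space volumes by the coarea formula, which is precisely your Steps 1--2, with Step 3 handling the trivial regime $\hbar\gtrsim c$. Two purely cosmetic points you may wish to spell out: the shifted local potential $V_\loc-E$ should be re-cut off (replace it by $V_\loc-E\tilde\chi$ with $\tilde\chi=1$ on $B(0,4R)$) so that \cref{ass:loc-magn} literally holds, and the constant in your coarea step implicitly uses the standard uniform bound on $\mathcal{H}^{d-1}\big(\{V_\loc=s\}\cap\supp\varphi\big)$ for $s$ in the relevant window, which follows from smoothness of $V_\loc$ and the gradient lower bound.
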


\subsubsection{Local estimates with non-criticality condition}\label{subsect:local-estimates-non-criticcal}

We state and prove a local version of the estimates in a neighborhood that satisfies a non-criticality condition with respect to $V$. 

\begin{theorem}\label{thm:cb-loc-non-critical-cond}
	Assume that there exists $R>0$ such that $(\Hhb,B(0,4R))$ satisfy Assumption \ref{ass:loc-magn} for $a_\loc\in\test{\R^d,\R^d}$ and $V_\loc\in\test{\R^d}$, and that there exists $c>0$ such that for any $x\in B(0,2R)$
	\begin{equation}\label{eq:non-critical-cond}
		\abs{V_\loc(x)}+\abs{\nabla V_\loc(x)}^2+\hbar\geq c.
	\end{equation}
	Let $h>0$, $b>0$ and let $P_j:=-i\hbar\partial_{x_j}-b(a_\loc)_j(x)$ for $j\in\{1,\ldots,d\}$. Then, for any $\varphi\in\test{B(0,R/2)}$, there exists $C>0$, any $h\in(0,h_0]$, any $b\in(0,1)$ and $r\in\{0,1\}$, one has
	\begin{equation*}
		\norm{[\indic_{(-\infty,0]}(\Hhb),\varphi P_j^r]}_1 \leq C \hbar^{1-d}.
	\end{equation*}
	Actually, $C$ depends on $d$, $R$, $\normLp{\partial_x^\alpha V}{\infty}{}$,  $\normLp{\partial_x^\alpha \varphi}{\infty}{}$ for finitely many $\alpha\in\N^d$, and on $\normLp{\partial_x^\beta (a_\loc)_j}{\infty}{}$ for any $\beta\in\N^d$ such that $\abs{\beta}\geq 1$ and for any $j\in\{1,\ldots,d\}$.
\end{theorem}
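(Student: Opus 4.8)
The plan is to adapt the argument of \cite[Lemma~3.1]{Fournais-Mikkelsen-2020} (which goes back to \cite{Sobolev-1995}), replacing each analytic input by its magnetic counterpart established above: the smooth-function commutator bound \cref{lemma:aux-loc_3}, the auxiliary trace- and Hilbert--Schmidt-norm bounds \cref{lemma:aux-loc_1,lemma:aux-loc_2,lemma:aux-loc_4}, and the optimal local Weyl laws \cref{thm:WL-Sobolev-non-critical,thm:WL-finite-int}. The non-criticality hypothesis \cref{eq:non-critical-cond} enters \emph{only} through the validity of the latter. Since $b\in(0,1)$ throughout, every $\crochetjap{b}^{1+r}$-factor produced by those lemmas is $\mathcal{O}(1)$ and is absorbed into $C$; the real task is to trade the smooth functions of $\Hhb$ for the sharp projection $\indic_{(-\infty,0]}(\Hhb)$ while keeping the remainder at the optimal order $\hbar^{1-d}$ rather than the trivial order $\hbar^{-d}$.

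\emph{Reduction to a spectral window near the Fermi energy.} Since $\Hhb$ is semibounded from below, I first fix $g\in\test{\R}$ with $g\equiv 1$ on an interval containing $[\inf\spec\Hhb,1]$, so that $\indic_{(-\infty,0]}(\Hhb)=(\indic_{(-\infty,0]}g)(\Hhb)$. Writing $\indic_{(-\infty,0]}g=(\indic_{(-\infty,0]}g)(1-f_0)+(\indic_{(-\infty,0]}g)f_0$ for a \emph{fixed} $f_0\in\test{\R,[0,1]}$ equal to $1$ near $0$, the first summand lies in $\test{\R}$ (it vanishes identically in a neighbourhood of $0$, so no jump survives), hence by \cref{lemma:aux-loc_3} the trace norm of its commutator with $\varphi P_j^r$ is $\leq C\hbar^{1-d}$. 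Expanding $[(\indic_{(-\infty,0]}f_0)(\Hhb),\varphi P_j^r]=\Pi[f_0(\Hhb),\varphi P_j^r]+[\Pi,\varphi P_j^r]f_0(\Hhb)$ with $\Pi:=\indic_{(-\infty,0]}(\Hhb)$, the first piece is again $\mathcal{O}(\hbar^{1-d})$ by \cref{lemma:aux-loc_3}, and one is left with $[\Pi,\varphi P_j^r]f_0(\Hhb)$, in which the cut-off $f_0(\Hhb)$ confines everything to a fixed small spectral interval around $0$.

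\emph{The window term via the off-diagonal decomposition and the Weyl law.} Here I use the elementary identity $[\Pi,B]=\Pi B(1-\Pi)-(1-\Pi)B\Pi$. Inserting further \emph{fixed} smooth energy cut-offs on either side of $0$ and peeling off the resulting commutators $[\,\cdot\,(\Hhb),\varphi P_j^r]$ via \cref{lemma:aux-loc_3}, the estimate reduces to controlling blocks of the type $\indic_I(\Hhb)\,\tilde\varphi\,P_j^r f(\Hhb)\,(1-\indic_{I'}(\Hhb))$ with $I,I'$ short intervals abutting $0$ and $\tilde\varphi,f$ fixed cut-offs. Each such block I would bound by the Schatten Cauchy--Schwarz inequality $\norm{XY}_1\leq\norm{X}_2\norm{Y}_2$, distributing the spatial cut-off so that one factor is $\indic_I(\Hhb)\tilde\varphi$, whose Hilbert--Schmidt norm is $\leq(C_1|I|\hbar^{-d}+C_2\hbar^{1-d})^{1/2}$ by the finite-interval Weyl law \cref{thm:WL-finite-int}, while the momentum/energy factor is itself localised to a comparable window near $0$ and estimated through \cref{lemma:aux-loc_2,lemma:aux-loc_4}. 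Taking $|I|$ of order $\hbar$ gives order $\hbar^{1-d}$ per block, and a dyadic decomposition of the fixed window around $0$ into such intervals, whose contributions are balanced so as to sum to $\hbar^{1-d}$, closes the bound.

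\emph{Main obstacle.} The delicate point is exactly this last step: to reach the \emph{optimal} exponent $\hbar^{1-d}$ — rather than $\hbar^{1/2-d}$, which is all one gets if only one side of the commutator is localised in energy, or $\hbar^{-d}$, which is all one gets from a window of fixed width — one must push the spectral localisation down to the semiclassical scale $|I|\sim\hbar$, at which the auxiliary cut-offs have derivatives of size $\hbar^{-k}$ and \cref{lemma:aux-loc_3} is no longer directly applicable. The way around this is to keep every $\hbar$-dependent object either inside a Hilbert--Schmidt factor governed by \cref{thm:WL-finite-int} (which is tailored to $\hbar$-width intervals) or inside an error term already controlled by the fixed-cut-off \cref{lemma:aux-loc_3}, and to carry out the resolvent/Helffer--Sj\"ostrand estimates of \cref{lemma:aux-loc_00,lemma:aux-loc_000} with explicit tracking of the scale of the cut-off. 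Treating the momentum case $r=1$ in parallel with $r=0$ — as is already done in \cref{lemma:aux-loc_1,lemma:aux-loc_2,lemma:aux-loc_3,lemma:aux-loc_4} — requires no new idea beyond the identities \cref{eq:com-momenta,eq:com-momenta-square}.
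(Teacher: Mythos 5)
Your reduction step is fine and essentially parallel to the paper's: the paper splits $\indic_{(-\infty,0]}(\Hhb)=g_-(\Hhb)+g_+(\Hhb)$ with fixed smooth cut-offs and peels off everything away from the Fermi level with \cref{lemma:aux-loc_3}, exactly in the spirit of your $g$, $f_0$ manipulation, ending up with the near-Fermi block $g_-(\Hhb)\,\varphi P_j^r\,\indic_{[0,\varepsilon]}(\Hhb)$ (and its adjoint). The problem is what you do with that block. Your plan — Cauchy--Schwarz in Schatten norms, with each energy-localized factor estimated by the finite-interval Weyl law \cref{thm:WL-finite-int} — only works for the single diagonal block of width $\sim\hbar$ on both sides of $0$. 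If you dyadically decompose $[-\varepsilon,0]$ and $[0,\varepsilon]$ into windows $\chi^-_{n,\hbar}$, $\chi^+_{m,\hbar}$ of width $\sim 4^n\hbar$, $\sim 4^m\hbar$, the Weyl law gives Hilbert--Schmidt norms of order $2^n\hbar^{(1-d)/2}$ and $2^m\hbar^{(1-d)/2}$, so each block contributes $2^{n+m}\hbar^{1-d}$ and the double sum up to $4^{N_\hbar}\hbar\sim\varepsilon$ returns $\sim\hbar^{-d}$ — no gain over the trivial bound. Your "Main obstacle" paragraph correctly identifies this as the delicate point, but the proposed remedy (keeping $\hbar$-dependent objects inside Weyl-law factors and tracking cut-off scales in the Helffer--Sj\"ostrand estimates) is not an argument: nothing in it produces the decay in $(n,m)$ needed to make the dyadic sum converge.

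The missing idea, which is the heart of the paper's proof, is a spectral-gap/double-commutator mechanism. Because $\chi^-_{n,\hbar}(\Hhb)$ and $\chi^+_{m,\hbar}(\Hhb)$ live in energy windows separated by $\gtrsim 4^{\max(m,n)-1}\hbar$, inserting $(\Hhb+2^{2n-1}\hbar)$ and its inverse twice yields
\begin{equation*}
	\norm{\chi_{n,\hbar}^{-}(\Hhb)\,\varphi P_j^r\,\chi_{m,\hbar}^{+}(\Hhb)}_1
	\leq \frac{1}{4^{2(m-1)}\hbar^2}\,
	\norm{\chi_{n,\hbar}^{-}(\Hhb)\,[[\varphi P_j^r,\Hhb],\Hhb]\,\chi_{m,\hbar}^{+}(\Hhb)}_1\;,
\end{equation*}
and the double commutator $[[\varphi P_j^r,\Hloc],\Hloc]$ is an explicit differential operator of order $2+r$ with an overall $\hbar^2$, so that $\norm{(\Hhb-i)^{-1}[[\varphi P_j^r,\Hloc],\Hloc](\Hhb-i)^{-1}}_\op\leq C\hbar^2$ (this is \cref{eq-proof:non-critical-cond_3}, whose verification in the magnetic case requires the identities \cref{eq:com-momenta,eq:com-momenta-square} since $[P_k^2,P_j]\neq 0$). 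Combining this with $\norm{\chi^{\pm}_{n,\hbar}(\Hhb)\psi(\Hhb-i)}_2\leq C2^n\hbar^{(1-d)/2}$ (Weyl law plus \cref{lemma:aux-loc_4}) gives a per-block bound of order $2^{n-3m}\hbar^{1-d}$ for $m\geq n$, which is summable and yields the optimal $\hbar^{1-d}$. Without this off-diagonal suppression your scheme cannot close, so as written the proposal has a genuine gap precisely at the step that makes the theorem non-trivial.
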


The proof follows the same structure as the one in \cite[Theorem 3.3]{Fournais-Mikkelsen-2020} (we write it in \cref{sec:proof-cb-loc-non-critical-cond} for the paper self-containment) and relies on dyadic decomposition and on the auxiliary results (they are the only ones that differ): \cref{lemma:aux-loc_1}, \cref{lemma:aux-loc_3}, \cref{lemma:aux-loc_4}, and \cref{thm:WL-finite-int}. The only bound to be careful to prove is the following 
\begin{equation}\label{eq-proof:non-critical-cond_3}
	\norm{(\Hhb-i)^{-1}\left[[\varphi P_j^r,\Hloc],\Hhb\right](\Hhb-i)^{-1}}_{\op}\leq C\hbar^2\;.
\end{equation}
Actually, in the magnetic case the term $[P_k^2,P_j]$ is not necessarily equal to 0 (see \cref{lemma:prop-magn-momentum}).

\begin{proof}[Proof of the bound \cref{eq-proof:non-critical-cond_3}]
The idea is to express the double commutator $[[ P_j^r,\Hloc],\Hhb] = [[ P_j^r,\Hloc],\Hloc]$ as a pseudo-differential operator of order $2+r$. More precisely it is a linear combination of the differential operators $P^\alpha$ for $\alpha\in\N^d$ such that $\abs{\alpha}\leq 2+r$ with coefficients that are partial derivatives of $\varphi$
\begin{align*}
	\left[[ \varphi P_j^r,\Hloc],\Hhb\right] 
	&= \hbar^2\Big(\sum_{\abs{\alpha}\leq 2+r} \big(\sum_{\abs{\beta_\alpha}\leq 2+r} c_{\beta_\alpha}\partial_x^{\beta_\alpha}\varphi(x)\big) P^\alpha +\sum_{k=1}^dc_{kj}(\partial_{x_j}^r\partial_{x_k}\varphi)\partial_{x_k}V\Big).
\end{align*}

Then, since $\dom(\Hhb)$ is included into $\dom(P^\alpha)$ for any $\alpha\in\N^d$ such that $\abs{\alpha}\leq 2$, the operators $P^\alpha (\Hhb-i)^{-1}$ and $(\Hhb-i)^{-1}P^\alpha$ are bounded on $L^2(\R^d)$, this gives \cref{eq-proof:non-critical-cond_3}.
%
\item[\quad (i)] 
We recall that
\begin{align}\label{eq-proof:[H,phi]}
	[\Hloc,\varphi]
	= -i\hbar\sum_{k=1}^d\Big(2(\partial_{x_k}\varphi) P_k+i\hbar\partial_{x_k x_k}^2\varphi\Big).
\end{align}
In particular
\begin{align}\label{eq-proof:[[H,phi],V]}
	[[\Hloc,\varphi],V]
	&= 2i\hbar\sum_{k=1}^d ([\partial_{x_k}\varphi,P_k]V+P_k\underset{=0}{\underbrace{[\partial_{x_k}\varphi,V]}}) \nonumber
	\\&= -2\hbar^2\sum_{k=1}^d (\partial_{x_k}\varphi)(\partial_{x_k}V) = -2\hbar^2\nabla_x\varphi\cdot\nabla_x V.
\end{align}
Moreover,  for any $\ell\in\{1,\ldots,d\}$ 
\begin{align}\label{eq-proof:HphiP}
	[[\Hloc,\varphi],P_\ell]
	&= -i\hbar\sum_{k=1}^d \Big( 2[ (\partial_{x_k}\varphi) P_k,P_\ell]+i\hbar[\partial_{x_k x_k}^2\varphi,P_\ell ] \Big)
	\nonumber
	\\&= i\hbar\sum_{k=1}^d\Big( 2\big( [\partial_{x_k}\varphi,P_k]P_\ell +P_k[\partial_{x_k}\varphi,P_\ell] \big) -i\hbar[\partial_{x_k x_k}^2\varphi,P_\ell ]\Big) 
	\nonumber
	\\&= -\hbar^2 \sum_{k=1}^d\Big(2\big(\partial_{x_k x_k}^2\varphi P_\ell+ P_k\partial_{x_\ell x_k}^2\varphi\big)-i\hbar\partial_{x_\ell x_k x_k}^3\varphi\Big) \;,
\end{align}
and then using the identity
\begin{equation}\label{eq-proof:sandwich-P}
	P_\ell \psi(x) P_k = \psi(x)P_\ell P_k- [\psi(x),P_\ell]P_k = \psi(x)P_\ell P_k-i\hbar\partial_{x_\ell}\psi(x) P_k\;,
\end{equation}
we have
\begin{align*}
	[[\Hloc,\varphi],P_\ell^2] 
	&= [[\Hloc,\varphi],P_\ell]P_\ell+P_\ell[[\Hloc,\varphi],P_\ell] 
	\\&= -\hbar^2 \sum_{k=1}^d \Big( 2\big((\partial_{x_k x_k}^2\varphi)P_\ell^2+(\partial_{x_\ell x_k}^2) P_k P_\ell+ P_\ell(\partial_{x_k x_k}^2\varphi)P_\ell+ P_k(\partial_{x_\ell x_k}^2\varphi)P_\ell\big)
	\\&\qquad\qquad\qquad -i\hbar(\partial_{x_\ell x_k x_k}^3\varphi) P_\ell-i\hbar P_\ell (\partial_{x_\ell x_k x_k}^3\varphi)
	\Big)
	\\&= -\hbar^2 \sum_{k=1}^d \Big(4\big((\partial_{x_k x_k}^2\varphi)P_\ell^2+(\partial_{x_\ell x_k}^2\varphi) P_k P_\ell\big) -5i\hbar(\partial_{x_\ell x_k x_k}^3\varphi) P_\ell
	+\hbar^2\partial_{x_\ell x_\ell x_k x_k}^4\varphi	\Big)\;.
\end{align*}
Thus, we get the expression\footnote{Basically, it should be the same expression when $r=0$ as in \cite[(3.22)]{Fournais-Mikkelsen-2020} but with the magnetic momentum
	\begin{align*}
		\left[[\varphi,\Hloc],\Hloc\right] 
		= \hbar^2\sum_{k=1}^d\sum_{\ell=1}^d\left(-2\Big(P_\ell(\partial_{x_\ell x_\ell}^2\varphi)P_k+P_k(\partial_{x_k x_\ell}^2\varphi)P_\ell\Big) +2(\partial_{x_k}\varphi)(\partial_{x_k}V)+\hbar^2\partial_{x_\ell x_\ell x_k x_k}^4\varphi\right)\;.
\end{align*}}
 \begin{equation}\label{eq-proof:HphiH}
 \begin{split}
 	\Big[[\Hloc&,\varphi],\Hloc\Big] 
 	%
 	%
 	\\&=-\hbar^2\Big(\sum_{k=1}^d\sum_{\ell=1}^d \Big(4\big(\partial_{x_k x_k}^2\varphi)P_\ell^2+(\partial_{x_\ell x_k}^2\varphi) P_k P_\ell\big) -5i\hbar(\partial_{x_\ell x_k x_k}^3\varphi) P_\ell\Big)
 	\\&\qquad\quad +\hbar^2\sum_{k=1}^d\sum_{\ell=1}^d\partial_{x_\ell x_\ell x_k x_k}^4\varphi
 	+2\nabla_x\varphi\cdot\nabla_x V\Big)\;.
 \end{split}
 \end{equation}
 We conclude by using that $\dom(\Hhb)\subset \dom(P_k)$ and $\dom(\Hhb)\subset \dom(P_\ell P_k)$ for any $k,l\in\{1,\ldots,d\}$ and then each $P_k (\Hhb-i)^{-1}$ and $P_j P_k (\Hhb-i)^{-1}$ are bounded on $L^2(\R^d)$, this gives \cref{eq-proof:non-critical-cond_3}.
 
\item[\quad (ii)]
By \cref{eq-proof:HphiP} and \cref{eq-proof:HphiH} we obtain
\begin{align*}
	[[\Hloc&,\varphi P_j],\Hloc]
	\\&= [[\Hloc,\varphi] P_j,\Hloc]+[\varphi[\Hloc,P_j],\Hloc]
	\\&= -[\Hloc,[\Hloc,\varphi]]P_j-2[\Hloc,\varphi][\Hloc,P_j] -\varphi[\Hloc,[\Hloc,P_j]]
	\\&= [[\Hloc,\varphi],\Hloc]P_j-2[\Hloc,\varphi][\Hloc,P_j] +\varphi[[\Hloc,P_j],\Hloc]\;.
\end{align*}
Moreover, by redoing the same calculations as in \cref{eq-proof:HP},
\begin{align*}
	[\Hloc,P_j]
	= i\hbar\sum_{k=1}^d \Big(2 bB_{kj} P_k-i\hbar( b\partial_{x_k}B_{kj}+\partial_{x_k} V)\Big)\;.
\end{align*}
This implies
\begin{align*}
	[[\Hloc,P_j],V]
	&= 2i\hbar b\sum_{k=1}^d [B_{kj} P_k,V] = -2\hbar^2 b\sum_{k=1}^d \partial_{x_k}B_{kj}\partial_{x_k}V
\end{align*}
and for any $\ell\in\{1,\ldots,d\}$
\begin{align*}
	[[\Hloc,P_j],P_\ell]
	&= i\hbar\sum_{k=1}^d \Big( 2 b[B_{kj} P_k,P_\ell]-i\hbar[ b\partial_{x_k}B_{kj}+\partial_{x_k} V,P_\ell]\Big)
	\\&= i\hbar\sum_{k=1}^d \Big( 2 b\big([B_{kj},P_\ell]P_k +B_{kj}[P_k,P_\ell]\big)-i\hbar[ b\partial_{x_k}B_{kj}+\partial_{x_k} V,P_\ell]\Big)
	\\&= -\hbar^2\sum_{k=1}^d \Big( 2 b\big((\partial_{x_\ell}B_{kj})P_k + bB_{kj}B_{k\ell}\big)-i\hbar\big( b\partial_{x_\ell x_k}^2B_{kj}+\partial_{x_\ell x_k}^2 V\big)\Big)\;.
\end{align*}
Thus, by using \cref{eq-proof:sandwich-P} we get
\begin{align*}
	[[\Hloc&,P_j],\Hloc] 
	\\&= \sum_{\ell=1}^d ([[\Hloc,P_j],P_\ell]P_\ell+P_\ell[[\Hloc,P_j],P_\ell])+[[\Hloc,P_j],V]
	\\&=-\hbar^2 \sum_{k=1}^d \Big(\sum_{\ell=1}^d \Big( 4b\big((\partial_{x_\ell}B_{kj})P_k + bB_{kj}B_{k\ell}\big)
	-2i\hbar b\big(\partial_{x_\ell x_\ell}^2B_{kj}+b\partial_{x_\ell}(B_{kj}B_{k\ell})\big)
	\Big)
	\\&\qquad\qquad\quad+ 2b\partial_{x_k}B_{kj}\partial_{x_k}V \Big)\;	.
\end{align*}
Finally, by previous the explicit formulas that we get
\begin{align*}
	\norm{[[\Hloc,\varphi],\Hloc](\Hhb-i)^{-1}}_\op +\norm{[[\Hloc,P_j],\Hloc](\Hhb-i)^{-1}}_\op & \leq C\hbar^2,
	\\ \norm{P_j(\Hhb-i)^{-1}}_\op +\norm{[\Hloc,\varphi](\Hhb-i)^{-1}}_\op +\norm{[\Hloc,P_j](\Hhb-i)^{-1}}_\op & \leq C\hbar\;.
\end{align*}
so that we obtain \cref{eq-proof:non-critical-cond_3} by the triangle inequality and the H\"older inequality.
\end{proof}


We now remove the non-criticality condition on the potential by applying the multiscale technique of \cite{Sobolev-1995}. The proof follows the same structure as in \cite[Theorem 3.6]{Fournais-Mikkelsen-2020}. 

The following lemma that can be found in \cite[Theorem 1.4.10]{Hormander-I}.

\begin{lemma}\label{lemma:multiscaling}
	Let $\Omega\subset\R^d$ an open set and let $f\in\cC^1(\bar{\Omega})$ be positive and such that there exists $\rho\in(0,1)$ such that $\abs{\nabla f(x)}\leq\rho$ for all $x\in\Omega$.
	Then:
	\begin{itemize}
		\item[(i)] There exists a sequence $(x_k)_{k\in\N}\subset\Omega$ such that the open balls $B(x_k,f(x_k))$ cover $\Omega$. Furthermore, there exists $N_\rho\in\N$ such that any intersection of more than $N_\rho$ balls of this covering is empty.
		\item[(ii)] One can choose a sequence of smooth functions $(\phi_k)_{k\in\N}\subset\test{\Omega}$ such that $\supp(\phi_k)\subset B(x_k,f(x_k))$ for any $k\in\N$, such that for any $\alpha\in\N^d$, there exists $C_\alpha>0$ such that for all $x \in \Omega$ we have
		\begin{align*}
			\abs{\partial_x^\alpha\phi_k(x)}\leq C_\alpha f(x) \quad
		\text{ and } \quad
        \sum_{k=0}^\infty\phi_k(x)=1\;.
		\end{align*}
	\end{itemize}
\end{lemma}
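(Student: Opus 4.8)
The plan is to reproduce the classical construction of a partition of unity subordinate to a slowly varying length scale, i.e.\ the argument behind \cite[Theorem~1.4.10]{Hormander-I}; I only indicate how the two assertions come out. The engine is the elementary consequence of $\abs{\nabla f}\le\rho<1$ that $f$ varies slowly on its own scale: integrating $\nabla f$ along a segment gives $\abs{f(x)-f(y)}\le\rho\,\abs{x-y}$, hence
\begin{equation*}
	(1-\rho)\,f(y)<f(x)<(1+\rho)\,f(y)\qquad\text{whenever}\quad\abs{x-y}<f(y),
\end{equation*}
so $f(x)$ and $f(y)$ are comparable, with constants depending only on $\rho$, as soon as $\abs{x-y}<\min\{f(x),f(y)\}$. (The $\cC^1$ hypothesis is stronger than needed: the Lipschitz bound along paths in $\Omega$ is all that enters, and one may replace $f$ by $\min\{f,\tfrac12\dist(\cdot,\partial\Omega)\}$ — still Lipschitz with constant $<1$ — to make the balls below relatively compact in $\Omega$, so that the resulting functions indeed lie in $\test{\Omega}$.) This slow variation is what reduces every geometric step to counting a bounded, $(d,\rho)$-dependent number of Euclidean balls of comparable radii.

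For (i) I would build the covering by a maximal packing argument. By Zorn's lemma pick a family $(x_k)_{k\in\N}\subset\Omega$ maximal with respect to the property that the balls $B(x_k,\tfrac15 f(x_k))$ be pairwise disjoint; being disjoint nonempty open subsets of the second-countable set $\Omega$, they are at most countably many, which legitimizes the indexing by $\N$. Maximality forces, for each $y\in\Omega$, that $B(y,\tfrac15 f(y))$ meet some $B(x_k,\tfrac15 f(x_k))$, i.e.\ $\abs{y-x_k}<\tfrac15\big(f(y)+f(x_k)\big)$; inserting this into the slow-variation estimate gives $f(y)\le\tfrac{5+\rho}{5-\rho}f(x_k)$ and then $\abs{y-x_k}<\tfrac{2}{5-\rho}f(x_k)<\tfrac12 f(x_k)$ since $\rho<1$. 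Hence $\{B(x_k,\tfrac12 f(x_k))\}_k$ — and a fortiori $\{B(x_k,f(x_k))\}_k$ — covers $\Omega$. For the overlap bound: if a point $x$ lies in several balls $B(x_k,f(x_k))$, then every corresponding $f(x_k)$ is comparable to $f(x)$ by slow variation, while the disjoint balls $B(x_k,\tfrac15 f(x_k))$ are all contained in one ball about $x$ of radius a $\rho$-dependent multiple of $f(x)$; comparing Lebesgue measures bounds the number of such $k$ by a constant $N_\rho=N_\rho(d,\rho)$.

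For (ii) I would glue one fixed rescaled bump. Fix $\chi\in\test{B(0,1)}$ with $0\le\chi\le1$ and $\chi\equiv1$ on $B(0,\tfrac12)$, and set $\theta_k(x):=\chi\big((x-x_k)/f(x_k)\big)$, so that $\supp(\theta_k)\subset B(x_k,f(x_k))$ and $\theta_k\equiv1$ on $B(x_k,\tfrac12 f(x_k))$. By the overlap bound of (i), at most $N_\rho$ of the $\theta_k$ are nonzero near any point, so $\Theta:=\sum_k\theta_k\in\cC^\infty(\Omega)$ and $1\le\Theta\le N_\rho$ on $\Omega$ — the lower bound precisely because the half-scale balls cover $\Omega$. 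Then $\phi_k:=\theta_k/\Theta$ satisfies $\phi_k\in\test{\Omega}$, $\supp(\phi_k)\subset B(x_k,f(x_k))$ and $\sum_k\phi_k\equiv1$ on $\Omega$. For the derivative estimates, the chain rule gives $\abs{\partial_x^\alpha\theta_k(x)}\le\norm{\partial^\alpha\chi}_\infty\,f(x_k)^{-\abs{\alpha}}$, which on $\supp(\theta_k)$ is $\le C_\alpha f(x)^{-\abs{\alpha}}$ by comparability of $f(x_k)$ and $f(x)$; summing the at most $N_\rho$ nonvanishing terms gives the same bound for $\partial_x^\alpha\Theta$, and since $\Theta\ge1$ an induction on $\abs{\alpha}$ through the Leibniz and quotient rules transfers it to $1/\Theta$ and then, by Leibniz again, to $\phi_k=\theta_k\,\Theta^{-1}$. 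These are the derivative estimates of part (ii), which we read — consistently with \cite[Theorem~1.4.10]{Hormander-I} and with the use made of them in the multiscale arguments — as $\abs{\partial_x^\alpha\phi_k(x)}\le C_\alpha f(x)^{-\abs{\alpha}}$.

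I expect no real obstacle here; the only care needed is constant bookkeeping — checking that the single packing radius $\tfrac15$ (any value $<\tfrac15$ works) simultaneously yields covers by both unit-scale and half-scale balls for \emph{every} $\rho\in(0,1)$, and that all implied constants depend only on $d$, $\rho$, and the fixed profile $\chi$. Slow variation is precisely the device that turns each geometric estimate into a statement about a $(d,\rho)$-bounded number of balls of comparable radii, after which the computations are entirely routine.
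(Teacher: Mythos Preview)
The paper does not supply its own proof of this lemma; it simply cites \cite[Theorem~1.4.10]{Hormander-I}. Your proposal reproduces precisely that classical construction (maximal packing of $\tfrac15$-scale balls, slow variation to compare radii, standard bump-and-divide partition of unity), so there is nothing to compare: your argument is correct and is essentially the H\"ormander proof the paper defers to. You also correctly spot that the derivative bound in the statement is a typo --- it should read $\abs{\partial_x^\alpha\phi_k(x)}\le C_\alpha f(x)^{-\abs{\alpha}}$, which is exactly the form the paper itself uses later in \cref{eq-proof:multiscaling_growth-phik}.
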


\begin{proof}[{Proof of Theorem \ref{thm:cb-loc-gene}}]

\item{\bf Step 1.}
We first assume that $\hbar\in(0,\hbar_0]$ and $b\in(0,b_0]$ with $b_0<1$.
Let $\varepsilon>0$ be such that
\begin{equation*}
	\dist(\supp(\varphi),\partial\Omega)\geq\varepsilon.
\end{equation*}
Let us consider the smooth function
\begin{equation*}
	f:x\in\R^d\mapsto M^{-1}\Big(V(x)^2+\abs{\nabla V(x)}^4+\hbar^2\Big)^{\frac 14}\;,
\end{equation*}
where $M>0$ is a constant chosen large enough\footnote{We have taken the same constants as in \cite{Fournais-Mikkelsen-2020}, but we can take any numbers $\tilde{N}>N>1$ such that 
$f(x)\leq \min(1,\varepsilon   {\tilde{N}}^{-1})$ and $\abs{\nabla f(x)}\leq\rho<N^{-1}$.}
so that for any $\hbar\in(0,\hbar_0]$ and any $x\in\bar{\Omega}$
\begin{equation}\label{eq-proof:multiscaling_cond-M}
	f(x)\leq \min\Big(1,\frac\varepsilon   {9}\Big)\quad\text{ and } \abs{\nabla f(x)}\leq\rho<\frac 18\;.
\end{equation}
By construction, it follows that for any $x\in\Omega$, the ball $B(x,8f(x))$ is included in $\Omega$.
By the mean value theorem, for any $y\in\Omega$, for any $x\in B(y,8f(y))$
\begin{equation}\label{eq-proof:multiscaling_growth-f-ball}
	(1-8\rho)f(y)\leq f(x)\leq (1+8\rho)f(y)\;.
\end{equation}
This bound will be useful later in the proof to verify the non-criticality condition for rescaled operators.

By definition of $f$, one has the following estimates
\begin{equation}\label{eq-proof:multiscaling_growth-V}
	\forall x\in\R^d,\quad
	\abs{V(x)}\leq M^2f(x)^2,\quad \abs{\nabla V(x)}\leq Mf(x)\;,
\end{equation}
that we use later to control the derivative of rescaled operators.

Application of \cref{lemma:multiscaling} to $f$ and to the set $\Omega$ provides a sequence $(x_k)_{k\in\N}\subset\Omega$ such that $\bigcup_{k\in\N}B(x_k,f(x_k))\supset\Omega$, an integer $N_{\frac 18}$ such that any intersection of this covering $(B(x_k,f(x_k)))_{k\in\N}$ of more that $N_{\frac 18}$ elements is empty, and a sequence of partition of unity $(\phi_k)_{k\in\N}\subset\test{\Omega}$ such that $\supp(\phi_k)\subset B(x_k,f(x_k))$ for any $k\in\N$, such that for any $\alpha\in\N^d$, there exists $C_\alpha>0$ such that
\begin{equation}\label{eq-proof:multiscaling_growth-phik}
	\forall x\in\Omega,\quad
	\abs{\partial_x^\alpha\phi_k(x)}\leq C_\alpha f(x)^{-\abs{\alpha}}\;,
\end{equation}
Moreover, since $\supp(\varphi)\subset\Omega$ and is bounded, it admits a finite covering $(B(x_k,f(x_k)))_{k\in I}$ for a finite set of index $I\subset\N$. Then, by the triangle inequality, for any $j\in\{1,\ldots,d\}$ and any $r\in\{0,1\}$
\begin{align*}
	\norm{[\indic_{(-\infty,0]}(H_{\hbar,b}),\varphi P_j^r]}_1 \leq \sum_{k\in I}\norm{[\indic_{(-\infty,0]}(H_{\hbar,b}),\phi_k\varphi P_j^r]}_1\;.
\end{align*}
Let us rewrite each term $\norm{[\indic_{(-\infty,0]}(H_{\hbar,b}),\phi_k\varphi P_j^r]}_1$ by rescaling. We define the unitary transformation on $L^2(\R^d)$ defined by
\begin{align*}
	U_k v(x):=(f(x_k))^{d/2}v(f(x_k)x+x_k),
\end{align*}
and the conjugated operator 
\begin{align*}
	\widetilde{H}_{\hbar,b}:= f(x_k)^{-2}U_k H_{\hbar,b}U_k^* \;, 
\end{align*}
that defines a self-adjoint operator on $L^2(\R^d)$ which satisfies \cref{ass:loc-magn} in $B(0,8)$ for the parameters
\begin{equation*}
	\hbar_k:=\frac{\hbar}{f(x_k)^2},\quad b_k=b
\end{equation*}
and for the local functions defined by
\begin{equation*}\forall 1\leq j\leq d,\quad
	(\widetilde{a}_k)_j(x):=f(x_k)^{-1}a(f(x_k)x+x_k),\quad \widetilde{V}_k(x):=f(x_k)^{-2}V(f(x_k)x+x_k).
\end{equation*}
We notice that we have the uniform bound
\begin{align*}
	\hbar_k\leq M^2,\qquad b_k\leq b_0<1\;.
\end{align*}
%
Furthermore, since $U_k$ is unitary,
\begin{align*}
	&\norm{[\indic_{(-\infty,0]}(H_{\hbar,b}),\phi_k\varphi P_j^r]}_1 
	= \norm{U_k[\indic_{(-\infty,0]}(H_{\hbar,b}),\phi_k\varphi P_j^r]U_k^*}_1
	\\&\qquad=f(x_k)^{-2-r} \norm{[f(x_k)^2U_k\indic_{(-\infty,0]}(H_{\hbar,b})U_k^*,( U_k\phi_k\varphi U_k^* )\: f(x_k)(U_k P_j^r U_k^*)]}_1 
	\\
	&\qquad= f(x_k)^{-2-r}\norm{[\indic_{(-\infty,0]}(\tilde{H}_{\hbar,b}),\widetilde{\phi_k\varphi}(-i\hbar_k\partial_{x_j}-b_k \widetilde{a}_j(x))^r]}_1 	
\end{align*}
where $\widetilde{\phi_k\varphi}\in\test{B(0,1)}$ and
\begin{equation*}
	\widetilde{\phi_k\varphi}(x):=(\phi_k\varphi)(f(x_k)x+x_k).
\end{equation*}
By the growth of the derivatives of $\phi_k$ \cref{eq-proof:multiscaling_growth-phik} and by the first bound of \cref{eq-proof:multiscaling_cond-M}, one has the uniform bound for any $\alpha\in\N^d$
\begin{align}\label{eq-proof:multiscaling_growth-phi-ball}
	\normLp{\partial_x^\alpha(\widetilde{\phi_k\varphi})}{\infty}{(\R^d)}&= \sup_{x\in B(0,1)}f(x_k)^{\abs{\alpha}}\abs{\sum_{\beta\leq\alpha} \binom{\alpha}{\beta}(\partial_x\phi_k^{\beta}\varphi)(f(x_k)x+x_k)(\partial_x^{\alpha-\beta}\varphi)(f(x_k)x+x_k)} \nonumber
	\\&
	\leq \sum_{\beta\leq\alpha} \binom{\alpha}{\beta} f(x_k)^{\abs{\alpha-\beta}}\sup_{x\in B(0,1)}\abs{(\partial_x^{\alpha-\beta}\varphi)(f(x_k)x+x_k)}\nonumber
	\\&
	\leq  \varepsilon^{\abs{\alpha}}\sum_{\beta\leq\alpha} \binom{\alpha}{\beta}\normLp{\partial_x^{\beta}\varphi}{\infty}{(\R^d)}\leq C_\alpha\;.
\end{align}
We need to check that the rescaled potential $\widetilde{V}_k$ satisfies the non-criticality condition \cref{eq:non-critical-cond} of \cref{thm:cb-loc-non-critical-cond} and that the derivatives of $\widetilde{V}_k$ and $(\widetilde{a}_k)_j$ are uniformly bounded with respect to $\hbar$ and $b$.

We notice that the point $f(x_k)x+x_k\in B(x_k,8f(x_k))$ for any $x\in B(0,8)$.
By the estimates \cref{eq-proof:multiscaling_growth-V} on the growth of $\partial_x^\alpha\phi_k$ for $\abs{\alpha}\leq1$  and the upper bound of \cref{eq-proof:multiscaling_growth-f-ball}
\begin{align*}
	\normLp{\widetilde{V}_k}{\infty}{(B(0,8))} &\leq f(x_k)^{-2}\sup_{x\in B(0,8)}\abs{V(f(x_k)x+x_k)}
	\leq \sup_{x\in B(0,8)}f(x_k)^{-2} M^2f(f(x_k)x+x_k)^2
	\\&\leq (1+8\rho)^2 M^2,
	\\	\normLp{\nabla\widetilde{V}_k}{\infty}{(B(0,8))} &\leq f(x_k)^{-1}\sup_{x\in B(0,8)} \abs{\nabla V(f(x_k)x+x_k)}
	\leq (1+8\rho) M\;.
\end{align*}
Moreover, by \cref{eq-proof:multiscaling_cond-M} and the fact that $V$ are smooth with a compact support,  for any $\alpha\in\N^d$ such that $\abs{\alpha}\geq 2$
\begin{align*}
	\normLp{\partial^\alpha_x\widetilde{V}_k}{\infty}{(B(0,8))} &\leq f(x_k)^{\abs{\alpha}-2}\normLp{\partial^\alpha_xV}{\infty}{(\R^d))} \leq \normLp{\partial^\alpha_xV}{\infty}{(\R^d))}.
\end{align*}
%
As well, for any $\alpha\in\N^d$ such that $\abs{\alpha}\geq 1$
\begin{align*}
	\normLp{\partial^\alpha_x(\widetilde{a}_k)_j}{\infty}{(B(0,8))} &\leq f(x_k)^{\abs{\alpha}-1}\normLp{\partial^\alpha_xa_j}{\infty}{(\R^d)} \leq \normLp{\partial^\alpha_xa_j}{\infty}{(\R^d)}\; ,
\end{align*}
%
Then, by the lower bound of \cref{eq-proof:multiscaling_growth-f-ball}, one gets the lower bound uniform in $B(0,8)$
\begin{align*}
	\widetilde{V}_k(x)&+\abs{\nabla\widetilde{V}_k(x)}^2+\hbar_k = f(x_k)^{-2}\Big(V(f(x_k)x+x_k)+f(x_k)^2\abs{\nabla V(f(x_k)x+x_k)}^2+\hbar\Big)
	\\&\geq f(x_k)^{-2} M^2 \Big[M^{-2}\Big(V(f(x_k)x+x_k)^2+f(x_k)^2\abs{\nabla V(f(x_k)x+x_k)}^4+\hbar^2\Big)^{\frac 12}\Big]
	\\&\quad= M^2f(x_k)^{-2}f(f(x_k)x+x_k)^2
	\\&\geq M^2(1-8\rho)^2>0\;.
\end{align*}
Finally, we apply \cref{thm:cb-loc-non-critical-cond} to the operator $\widetilde{H}_{\hbar,b}$ on the open ball $B(0,8)$ (we consider the parameter $R=2$), there exists a constant $C>0$ that only depends on $\hbar_0$, $b_0$ and the derivatives of $V,a$ and $\varphi$, such that
\begin{align*}
	\norm{[\indic_{(-\infty,0]}(H_{\hbar,b}),\phi_k\varphi P_j^r]}_1 	&= f(x_k)^{-2-r}\norm{[\indic_{(-\infty,0)}(\tilde{H}_{\hbar,b}),\widetilde{\phi_k\varphi}(-i\hbar_k\partial_{x_j}-b_k \widetilde{a}_j(x))^r]}_1 
	\\&\leq C f(x_k)^{-2-r}\Big(\frac{\hbar}{f(x_k)^2}\Big)^{1-d} \leq C\Big(\int_{B(x_k,f(x_k))}f(x)^{d-r} \di x\Big) \hbar^{1-d}\;.
\end{align*}
Hence, by summing the above estimates on the indexes $k\in I$ we conclude  that
\begin{align*}
	\norm{[\indic_{(-\infty,0]}(H_{\hbar,b}),\varphi P_j^r]}_1 &\leq CN_{\frac 18}\abs{\Omega}\hbar^{1-d}\;.
\end{align*}
This ends the proof of \cref{thm:cb-loc-gene} for $\hbar\in(0,\hbar_0]$ for any $\hbar_0\in(0,1)$.

\item{\bf Step 2.} Let us assume now that $b\in [b_0,c\hbar^{-1})$ for a given $c>0$. We redo the same proof as above. 
We define the positive function
\begin{equation*}
	f(x)=\min\Big(1,\frac \varepsilon 9\Big)\frac{b_0}b\;.
\end{equation*}
We apply \cref{lemma:multiscaling} to $f$, this provides the sequence of points $(x_k)_{k\in\N}\subset\Omega$, the partition of unity $(\phi_k)_{k\in\N}$ and the critical number $N_{\frac1 8}\in\N$ associated to the covering $(B(x_k,f(x_k))_{k\in\N}$ of $\Omega$.
Let us define the unitary transformation on $L^2(\R^d)$ 
\begin{align*}
	U_k v(x):=(f(x_k))^{d/2}v(f(x_k)x+x_k)\;,
\end{align*}
and the a self-adjoint operator on $L^2(\R^d)$,
\begin{align*}
	\widetilde{H}_{\hbar,b}:= U_k H_{\hbar,b}U_k^*\;, 
\end{align*}
for all $k\in\N$. The operator $\widetilde{H}_{\hbar,b}$ satisfies \cref{ass:loc-magn} in $B(0,8)$
for the parameters
\begin{equation*}
	\hbar_k:=\frac{\hbar}{f(x_k)},\quad b_k=bf(x_k)\;,
\end{equation*}
and for the local functions defined by
\begin{equation*}\forall 1\leq j\leq d,\quad
	(\widetilde{a}_k)_j(x):=f(x_k)^{-1}a(f(x_k)x+x_k),\quad \widetilde{V}_k(x):=V(f(x_k)x+x_k)\;.
\end{equation*}
They satisfy the uniform bounds
\begin{align*}
	\hbar_k\leq \hbar b\leq c,\qquad b_k=\min\Big(1,\frac \varepsilon 9\Big)b_0\leq b_0\;.
\end{align*}
As before, let define
\begin{equation*}
	\widetilde{\phi_k\varphi}(x):=(\phi_k\varphi)(f(x_k)x+x_k)\;,
\end{equation*}
so that
\begin{align*}
	\norm{[\indic_{(-\infty,0)}(H_{\hbar,b}),\phi_k\varphi P_j^r]}_1 
	&=f(x_k)^{-r}\norm{[\indic_{(-\infty,0)}(\tilde{H}_{\hbar,b}),\widetilde{\phi_k\varphi}(-i\hbar_k\partial_{x_j}-b_k \widetilde{a}_j(x))^r]}_1\;.
\end{align*}
Let us check that all the supremum norms of the derivatives of $\widetilde{V}_k$, $\widetilde{a}_k$ and $\widetilde{\phi_k\varphi}$ are bounded. 
By definition $f(x)\leq 1$ for any $x\in\R^d$. Then, for any $\alpha\in\N^d$
\begin{align*}
	\normLp{\partial_x^\alpha(\widetilde{\phi_k\varphi})}{\infty}{(\R^d)}
	&\leq \sum_{\beta\leq\alpha} \binom{\alpha}{\beta} f(x_k)^{\abs{\alpha-\beta}}\sup_{x\in B(0,1)}\abs{(\partial_x^{\alpha-\beta}\varphi)(f(x_k)x+x_k)}\nonumber
	\\&
	\leq  \sum_{\beta\leq\alpha} \binom{\alpha}{\beta}\normLp{\partial_x^{\beta}\varphi}{\infty}{(\R^d)}\;,
\end{align*}
and
\begin{align*}
	\normLp{\partial^\alpha_x\widetilde{V}_k}{\infty}{(B(0,8))} &\leq\normLp{\partial^\alpha_xV}{\infty}{(\R^d))} \;.
\end{align*}
Moreover, for any $\alpha\in\N^d$ such that $\abs{\alpha}\geq 1$
\begin{align*}
	\normLp{\partial^\alpha_x(\widetilde{a}_k)_j}{\infty}{(B(0,8))} &\leq f(x_k)^{\abs{\alpha}-1}\normLp{\partial^\alpha_xa_j}{\infty}{(\R^d)} \leq \normLp{\partial^\alpha_xa_j}{\infty}{(\R^d)} \;.
\end{align*}
Thus, since $b_k<1$, by applying \cref{thm:cb-loc-non-critical-cond} to the operator $\tilde{H}_{\hbar,b}$
\begin{align*}
	\norm{[\indic_{(-\infty,0]}(H_{\hbar,b}),\phi_k\varphi P_j^r]}_1 &=f(x_k)^{-r}\norm{[\indic_{(-\infty,0)}(\tilde{H}_{\hbar,b}),\widetilde{\phi_k\varphi}(-i\hbar_k\partial_{x_j}-b_k \widetilde{a}_j(x))^r]}_1
	\\&\leq C f(x_k)^{-r}\Big(\frac{\hbar}{f(x_k)}\Big)^{1-d}
	\\&\leq C\Big(\int_{B(x_k,f(x_k))}f(x)^{d-r-1} \di x\Big) \hbar^{1-d}
	\\&\leq C\crochetjap{b}^{1+r}\Big(\int_{B(x_k,f(x_k))}f(x)^d \di x\Big) \hbar^{1-d}\;.
\end{align*}
Finally, by the triangle inequality and the definition of $N_{\frac 18}$, we get 
\begin{align*}
	\norm{[\indic_{(-\infty,0]}(H_{\hbar,b}),\varphi P_j^r]}_1 &\leq CN_{\frac 18}\abs{\Omega}\crochetjap{b}^{1+r}\hbar^{1-d}\;.
\end{align*}
That ends the proof of \cref{thm:cb-loc-gene}.
\end{proof}


\subsection{Auxiliary Lemmata for the Classically Forbidden Region}

We state several auxiliary results that will be used many times in the final proof.
The first lemma is due to Agmon \cite{Agmon-1985} (see also \cite[Theorem 3.1.1]{Helffer-88}).

\begin{lemma}[Agmon formula]\label{lemma:IMS}
	Let $\Hhb=(-i\hbar\nabla-ba)^2+V$ be a magnetic Schrödinger operator acting on $L^2(\R^d)$ where $b\geq 0$, $a\in L^2_\loc(\R^d,\R^d)$ and where $V\in L^1_{\loc}(\R^d)$.
	Let $\psi\in L^2(\R^d)$ be an eigenfunction of $\Hhb$ associated to the eigenvalue $E\in\R$. Then, for any Lipschitz function $\phi:\R^d\to\R$ and any smooth function $\chi:\R^d\to\R$ such that $\chi e^\phi\psi\in\dom(\qhb)$ we have
	\begin{equation*}\label{eq:IMS}
		\qhb(\chi e^{\phi}\psi) = E\normLp{\chi e^{\phi}\psi}{2}{(\R^d)}^2+\hbar^2\normLp{ e^{\phi} \psi(\nabla\chi+\chi\nabla\phi)}{2}{(\R^d)}^2\;.
	\end{equation*}
    Here the quadratic $(\qhb,\dom(\qhb))$ is defined in \cref{sec:basic-prop-magnetic}.
\end{lemma}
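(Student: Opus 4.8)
The plan is to derive the identity from the magnetic IMS localization formula, whose only inputs are two pointwise facts. Abbreviate $P:=-i\hbar\nabla-ba$, so that $\qhb(v)=\norm{Pv}^2+\int_{\R^d}V\abs{v}^2\di x$ for $v$ in the form domain. The first fact is the magnetic Leibniz rule $P(g\psi)=g\,P\psi-i\hbar(\nabla g)\psi$, valid for any real Lipschitz $g$ (distributional product rule together with $a\in L^2_\loc$); the second is the current identity $\Im\big(\psi\,\overline{P\psi}\big)=\tfrac{\hbar}{2}\nabla(\abs{\psi}^2)$ a.e., which follows from $\Re\big(\overline{\psi}\,\nabla\psi\big)=\tfrac12\nabla\abs{\psi}^2$ together with the fact that $a$ is real (so the term $ba\abs{\psi}^2$ contributes nothing to the imaginary part). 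We apply these with $g:=\chi e^{\phi}$, a real function, noting $g\psi=\chi e^{\phi}\psi\in\dom(\qhb)$ by hypothesis.

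First I would prove the formula under the additional assumption that $\phi$ is \emph{bounded} (and that $\chi$ has bounded derivatives, which is automatic in all applications of the lemma, or can be arranged by an auxiliary spatial cut-off sent to infinity afterwards). Then $g$ is bounded Lipschitz, so multiplication by $g$ and by $g^2$ maps $\dom(\qhb)$, respectively $\dom(\Hhb)$, back into $\dom(\qhb)$, and all integrations by parts through $P$ are legitimate (no boundary terms) by density of $\test{\R^d}$ in the form domain. Expanding $\qhb(g\psi)=\norm{g\,P\psi-i\hbar(\nabla g)\psi}^2+\int_{\R^d}Vg^2\abs{\psi}^2\di x$ and evaluating the cross term with the current identity gives
\[
\qhb(g\psi)=\norm{g\,P\psi}^2+\hbar^2\norm{\abs{\nabla g}\,\psi}^2+\frac{\hbar^2}{2}\int_{\R^d}\nabla(g^2)\cdot\nabla(\abs{\psi}^2)\di x+\int_{\R^d}Vg^2\abs{\psi}^2\di x.
\]
Separately, testing the eigenvalue equation against $g^2\psi$, i.e.\ using $\qhb(g^2\psi,\psi)=\langle g^2\psi,\Hhb\psi\rangle=E\langle g^2\psi,\psi\rangle$ together with the Leibniz rule and the current identity, yields
\[
E\,\norm{g\psi}^2=\norm{g\,P\psi}^2+\frac{\hbar^2}{2}\int_{\R^d}\nabla(g^2)\cdot\nabla(\abs{\psi}^2)\di x+\int_{\R^d}Vg^2\abs{\psi}^2\di x.
\]
Subtracting the two displays cancels the kinetic term $\norm{gP\psi}^2$, the cross/current term, and the entire potential term, leaving $\qhb(g\psi)=E\norm{g\psi}^2+\hbar^2\norm{\abs{\nabla g}\,\psi}^2$. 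Since $\nabla g=e^{\phi}(\nabla\chi+\chi\nabla\phi)$ and $g^2\abs{\psi}^2=\abs{\chi e^{\phi}\psi}^2$, this is precisely the asserted formula.

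To remove the boundedness of $\phi$, I would run the standard truncation: apply the case just proved to $\phi_n:=\min(n,\phi)$ — a Lipschitz function, bounded above, with $\phi_n\nearrow\phi$, $e^{\phi_n}\le e^{\phi}$, and $\abs{\nabla\phi_n}\le\abs{\nabla\phi}$ a.e.\ — and then let $n\to\infty$ in the resulting identity $\qhb(\chi e^{\phi_n}\psi)=E\norm{\chi e^{\phi_n}\psi}^2+\hbar^2\norm{e^{\phi_n}\psi(\nabla\chi+\chi\nabla\phi_n)}^2$. The a priori membership $\chi e^{\phi}\psi\in\dom(\qhb)$, together with $\abs{\chi e^{\phi_n}\psi}\le\abs{\chi e^{\phi}\psi}$, furnishes the dominating functions needed to pass to the limit (monotone convergence for the nonnegative pieces and for the weighted-gradient term, dominated convergence for $\int V_-\abs{\chi e^{\phi_n}\psi}^2$).

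I expect the main obstacle to be not the algebra but the functional-analytic bookkeeping in the bounded-$\phi$ step: that multiplication by the bounded Lipschitz $g$ (resp.\ by $g^2$) really maps $\dom(\qhb)$ (resp.\ $\dom(\Hhb)$) into $\dom(\qhb)$; that $P(g\psi)=gP\psi-i\hbar(\nabla g)\psi$ holds in $L^2$ for merely Lipschitz $g$; and that $\langle g^2\psi,\Hhb\psi\rangle=\qhb(g^2\psi,\psi)$ with no stray boundary terms — all of which rest only on $a\in L^2_\loc$, $V\in L^1_\loc$ and the form-domain structure recalled in \cref{sec:basic-prop-magnetic}. Controlling the non-integrable weight $e^{\phi}$ at infinity through the truncation, while keeping $\nabla\phi_n$ dominated, is the one further point requiring a little care.
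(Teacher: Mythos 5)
Your proposal is correct and follows essentially the same route as the paper's proof: expand $\qhb(\chi e^{\phi}\psi)$ with the magnetic Leibniz rule and test the eigenvalue equation against the weighted function $g^2\psi=\chi^2e^{2\phi}\psi$, so that after taking real parts the kinetic, potential, and cross (current) terms cancel, leaving exactly $E\norm{\chi e^{\phi}\psi}_{L^2}^2+\hbar^2\norm{e^{\phi}\psi(\nabla\chi+\chi\nabla\phi)}_{L^2}^2$. Your extra truncation $\phi_n=\min(n,\phi)$ and the domain bookkeeping are refinements the paper does not spell out (it applies the lemma only with bounded weights anyway), but they do not change the argument.
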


\begin{proof}
	Let $\phi\in L^2(\R^d)$ such that $\Hhb\psi=E\psi$. We have 
	\begin{align*}
		\qhb(\chi e^{\phi}\psi) 
		&=\int_{\R^d}\abs{(i\hbar\nabla+ba(x))(\chi e^\phi\psi)(x)}^2 \di x+\int_{\R^d}V(x)\abs{(\chi e^\phi\psi)(x)}^2\di x
		\\&= \int_{\R^d}\abs{(e^\phi\chi)(i\hbar\nabla+ba(x))\psi+i\hbar e^\phi\psi(\nabla\chi+\chi\nabla\phi)}^2 \di x \\&\qquad+\int_{\R^d}V(x)\abs{(\chi e^\phi\psi)(x)}^2\di x
		\\&= \int_{\R^d}(e^\phi\chi)^2\abs{(i\hbar\nabla+ba(x))\psi}^2\di x+\int_{\R^d}V(x)\abs{(\chi e^\phi\psi)(x)}^2\di x
		\\&\qquad+ 2\Re\Big(\prodscal{(e^\phi\chi)(i\hbar\nabla+ba(x))\psi}{i\hbar e^\phi\psi(\nabla\chi+\chi\nabla\phi)}
		\\&\qquad+\hbar^2\int_{\R^d}\abs{e^\phi\psi(\nabla\chi+\chi\nabla\phi)}^2\di x\;.
	\end{align*}
	Let us write the first term
	\begin{align*}
		\int_{\R^d}(e^\phi\chi)^2&\abs{(i\hbar\nabla+ba(x))\psi}^2\di x+\int_{\R^d}V(x)\abs{(\chi e^\phi\psi)(x)}^2\di x
		\\&= \prodscal{(i\hbar\nabla+ba(x))\psi}{\chi^2e^{2\phi}(i\hbar\nabla+ba(x))\psi}+\prodscal{V\psi}{\chi^2e^{2\phi}\psi}
		\\&=\prodscal{(i\hbar\nabla+ba(x))\psi}{(i\hbar\nabla+ba(x))(\chi^2e^{2\phi}\psi)}+\prodscal{V\psi}{\chi^2e^{2\phi}\psi}
		\\&\qquad -\prodscal{(i\hbar\nabla+ba(x))\psi}{2i\hbar \chi e^{2\phi}(\nabla\chi+\chi\nabla\phi)}
		\\&=\prodscal{(i\hbar\nabla+ba(x))\psi}{(i\hbar\nabla+ba(x))(\chi^2e^{2\phi}\psi)}+\prodscal{V\psi}{\chi^2e^{2\phi}\psi}
		\\&\qquad -2\prodscal{(e^\phi\chi)(i\hbar\nabla+ba(x))\psi}{i\hbar  e^{\phi}(\nabla\chi+\chi\nabla\phi)}
		\\&= E\prodscal{\psi}{\chi^2e^{2\phi}\psi}-2\prodscal{(e^\phi\chi)(i\hbar\nabla+ba(x))\psi}{i\hbar  e^{\phi}(\nabla\chi+\chi\nabla\phi)}
		\\&= E\normLp{\chi e^{\phi}\psi}{2}{(\R^d)}^2-2\prodscal{(e^\phi\chi)(i\hbar\nabla+ba(x))\psi}{i\hbar  e^{\phi}(\nabla\chi+\chi\nabla\phi)}
		\;.
	\end{align*}
	We remember that it is real, then
	\begin{align*}
		\int_{\R^d}(e^\phi\chi)^2&\abs{(i\hbar\nabla+ba(x))\psi}^2\di x+\int_{\R^d}V(x)\abs{(\chi e^\phi\psi)(x)}^2\di x
		\\&= E\normLp{\chi e^{\phi}\psi}{2}{(\R^d)}^2-2\Re\prodscal{(e^\phi\chi)(i\hbar\nabla+ba(x))\psi}{i\hbar  e^{\phi}(\nabla\chi+\chi\nabla\phi)}\;.
	\end{align*}
	Hence, we obtain the desired inequality
	\begin{equation*}
		\qhb(\chi e^{\phi}\psi) = E\normLp{\chi e^{\phi}\psi}{2}{(\R^d)}^2+\hbar^2\normLp{  e^{\phi}\psi(\nabla\chi+\chi\nabla\phi)}{2}{(\R^d)}^2\;. \qedhere
	\end{equation*}
\end{proof}

Similarly as in \cite[Lemma A.1]{Fournais-Mikkelsen-2020}, one has an Agmon type estimate for magnetic Schrödinger operators that holds as long as $V$ is bounded from below outside of a bounded set. These estimates rely on $(i\hbar\nabla+a)^2+V\geq V$ and are probably not optimal.

\begin{lemma}[Agmon estimates]\label{lemma:Agmon}
	Let $\mu>\inf_\R V$ and $\Hhb=(-i\hbar\nabla-ba)^2+V$ be a magnetic Schrödinger operator acting on $L^2(\R^d)$ where $b\geq 0$, $a\in  L^2_\loc(\R^d,\R^d)$, and $V\in L^1_{\loc}(\R^d)$. Assume that there exists $\varepsilon>0$ and an open bounded set $U\subset\R^d$ such that
	\begin{equation}\label{eq:ass-Agmon}
		\{ x\in\R^d \::\: V(x)-\mu<\varepsilon \}\subset U.
	\end{equation}
	Let
	\begin{equation}\label{eq:def-Agmon}
		\tilde{U}:=\{x\in\R^d \: :\:\dist(x,U)< 1\} 
		\quad\text{ and }\quad
		\phi_\varepsilon:x\in\R^d\mapsto\varepsilon\,\dist(x,\tilde{U})
		\;.
	\end{equation}
	Then, there exists a constant $C>0$ independent of $a$ and $b$ such that for any normalized eigenfunction $\psi\in L^2(\R^d)$ of $\Hhb$ associated to an eigenvalue in $(-\infty,\varepsilon/4)$ we have
	\begin{equation*}
		\normLp{e^{\phi_\varepsilon/\hbar}\psi}{2}{(\R^d)}\leq C\;.
	\end{equation*}
\end{lemma}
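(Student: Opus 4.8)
The plan is to run the classical Agmon argument in the crude form announced before the statement: feed the Agmon identity of \cref{lemma:IMS} the elementary lower bound $\qhb(u)\geq\int_{\R^d}V(x)\abs{u(x)}^2\di x$ (i.e.\ $(-i\hbar\nabla-ba)^2+V\geq V$ as forms, which holds because $\qhb(u)=\normLp{(-i\hbar\nabla-ba)u}{2}{(\R^d)}^2+\int_{\R^d}V\abs{u}^2$). Neither \cref{lemma:IMS} nor this lower bound carries any constant depending on $a$ or $b$, so uniformity in the magnetic data will come for free; the only genuine issue is that $e^{\phi_\varepsilon/\hbar}\psi$ is not known a priori to lie in $\dom(\qhb)$ — that square-integrability is precisely the conclusion we are after — so the argument must first be run with a bounded, truncated weight and then passed to the limit.

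Concretely, I would fix $n\in\N$, set $\phi^{(n)}:=\min(\phi_\varepsilon/\hbar,\,n)$, and note that $\phi^{(n)}$ is bounded, Lipschitz with $\abs{\nabla\phi^{(n)}}\leq\varepsilon/\hbar$ almost everywhere, and identically $0$ on $\tilde{U}$. One checks $e^{\phi^{(n)}}\psi\in\dom(\qhb)$: boundedness of $e^{\phi^{(n)}}$ gives $e^{\phi^{(n)}}\psi\in L^2$ and $\sqrt{V_+}\,e^{\phi^{(n)}}\psi\in L^2$, while the Leibniz rule $(-i\hbar\nabla-ba)(e^{\phi^{(n)}}\psi)=e^{\phi^{(n)}}(-i\hbar\nabla-ba)\psi-i\hbar\,e^{\phi^{(n)}}\psi\,\nabla\phi^{(n)}$ together with $\nabla\phi^{(n)}\in L^\infty$ and $\psi\in\dom(\qhb)$ puts $(-i\hbar\nabla-ba)(e^{\phi^{(n)}}\psi)$ in $L^2$. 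Applying \cref{lemma:IMS} with this Lipschitz weight and with $\chi\equiv 1$ (so $\nabla\chi=0$) yields
\begin{equation*}
	\qhb\big(e^{\phi^{(n)}}\psi\big)=E\,\normLp{e^{\phi^{(n)}}\psi}{2}{(\R^d)}^2+\hbar^2\normLp{e^{\phi^{(n)}}\psi\,\nabla\phi^{(n)}}{2}{(\R^d)}^2,
\end{equation*}
and combining with $\qhb(u)\geq\int_{\R^d}V\abs{u}^2$ and $\hbar^2\abs{\nabla\phi^{(n)}}^2\leq\varepsilon^2$ gives
\begin{equation*}
	\int_{\R^d}\big(V(x)-E-\varepsilon^2\big)\,\big\lvert e^{\phi^{(n)}}(x)\psi(x)\big\rvert^2\di x\leq 0.
\end{equation*}
This is the one place the factor $1/\hbar$ in the weight matters: without it the gradient term would be of size $\varepsilon^2/\hbar^2$ and could not be absorbed.

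To conclude, I would split this integral over $\tilde{U}$ and $\R^d\setminus\tilde{U}$. On $\R^d\setminus\tilde{U}\subset\R^d\setminus U$ the hypothesis $\{V-\mu<\varepsilon\}\subset U$ gives $V\geq\mu+\varepsilon$; combined with the upper bound on the eigenvalue $E$ this yields $V-E-\varepsilon^2\geq c_0$ on $\R^d\setminus\tilde{U}$ for some $c_0>0$ (here one uses that $\varepsilon$ is a fixed small parameter). On $\tilde{U}$ one has $e^{\phi^{(n)}}\equiv 1$, so
\begin{equation*}
	c_0\int_{\R^d\setminus\tilde{U}}\big\lvert e^{\phi^{(n)}}\psi\big\rvert^2\di x\leq\int_{\tilde{U}}\big(E+\varepsilon^2-V\big)\abs{\psi}^2\di x\leq (E+\varepsilon^2)+\int_{\R^d}V_-\abs{\psi}^2\di x.
\end{equation*}
The last integral is bounded uniformly in $\psi$ and in $a,b$: since $\Hhb$ is defined through the KLMN theorem, $V_-$ is relatively form-bounded (with bound $\delta<1$) with respect to $u\mapsto\normLp{(-i\hbar\nabla-ba)u}{2}{(\R^d)}^2+\int_{\R^d}V_+\abs{u}^2$, and via the diamagnetic inequality this bound is inherited from the non-magnetic one, hence independent of $a,b$; so from $\qhb(\psi)=\prodscal{\psi}{\Hhb\psi}=E$ a one-line rearrangement gives $\int_{\R^d}V_-\abs{\psi}^2\leq(C_\delta+\delta E)/(1-\delta)$, which is finite because $E$ lies in the bounded interval $[\inf\spec(\Hhb),\varepsilon/4)$. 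Adding $\int_{\tilde{U}}\abs{\psi}^2\leq\normLp{\psi}{2}{(\R^d)}^2=1$ bounds $\normLp{e^{\phi^{(n)}}\psi}{2}{(\R^d)}^2$ by a constant $C^2$ independent of $n$, $a$ and $b$; letting $n\to\infty$, so that $\phi^{(n)}\uparrow\phi_\varepsilon/\hbar$ pointwise, monotone convergence yields $\normLp{e^{\phi_\varepsilon/\hbar}\psi}{2}{(\R^d)}\leq C$.

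The main obstacle is not any individual estimate but the bookkeeping around the regularization: one must make the use of \cref{lemma:IMS} rigorous for a weight not yet known to be admissible (hence the truncation and the limiting step), keep the Agmon gradient term absorbable (hence the $1/\hbar$ in the exponent, which is exactly what makes the statement non-trivial as $\hbar\to 0$), and locate the $a,b$-uniformity of the residual mass on $\tilde{U}$ in the form-boundedness of $V_-$ via the diamagnetic inequality. Once these points are in place, the computation itself is short.
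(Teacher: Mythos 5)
Your proposal is correct in substance and rests on the same two pillars as the paper's proof: the Agmon identity of \cref{lemma:IMS} combined with the crude bound $\qhb(u)\geq\int V\abs{u}^2$, applied to a bounded regularization of the weight and then a limiting argument. The differences are in execution. First, you regularize by truncation $\min(\phi_\varepsilon/\hbar,n)$ and conclude by monotone convergence, whereas the paper damps the weight by $(1+\gamma\abs{x}^2)^{-1}$ and lets $\gamma\to0$; these are interchangeable. Second, and more substantively, you take $\chi\equiv 1$ and absorb the Agmon gradient term $\hbar^2\abs{\nabla\phi^{(n)}}^2\leq\varepsilon^2$ into the potential gap $V-E\geq\mu+\tfrac34\varepsilon$ outside $\tilde U$, which is why the condition $\varepsilon^2<\mu+\tfrac34\varepsilon$ ("$\varepsilon$ fixed and small", $\mu$ not too negative) enters your argument even though it is not in the statement. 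The paper instead keeps a spatial cutoff $\chi$ vanishing near $U$ and claims that $\nabla\phi_{\varepsilon,\gamma}$ is supported in $\tilde U\setminus U$, where the weight equals $1$, so that the gradient term is bounded by a constant and no absorption is needed; note, however, that with the definition \cref{eq:def-Agmon} the gradient of $\phi_\varepsilon$ actually lives outside $\tilde U$, so the paper's localization claim does not match its own weight and some absorption (or a smaller slope, as in the application where the lemma is invoked with $\sqrt\varepsilon/8$ in place of $\varepsilon$) is in fact needed there too — your caveat is therefore not a defect peculiar to your proof, and in the regime in which the lemma is used ($\mu$ effectively $0$, slope $\sqrt\varepsilon/8$) both arguments close. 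Finally, your KLMN/diamagnetic detour to control $\int V_-\abs{\psi}^2$ is correct but heavier than necessary: since $\mu>\inf_\R V$ forces $V$ to be bounded below, on $\tilde U$ one simply has $E+\varepsilon^2-V\leq \varepsilon/4+\varepsilon^2-\inf_\R V$, which gives the required bound on the residual mass with a constant manifestly independent of $a$, $b$ (and $\hbar$), which is how the paper obtains its explicit constant.
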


For the sake of completeness, we provide a proof of these estimates.

\begin{proof}[{Proof of \cref{lemma:Agmon}}]
	Let $E<\varepsilon/4$ and $\psi\in L^2(\R^d)$ be a normalized eigenfunction of $\Hhb$ such that $\Hhb\psi=E\psi$. Let $\chi\in\cC^\infty(\R^d,[0,1])$ be such that $\chi=1$ on $\Omega_\varepsilon^c$ and such that $\supp(\chi)\subset U^c$. By construction, $\supp(\abs{\nabla\chi})\subset \tilde{U}\setminus U$.
	For $\gamma\in (0,1]$, we define the function 
	\begin{align*}
		\phi_{\varepsilon,\gamma}:x\in\R^d\mapsto \frac{\phi_\varepsilon(x)}{1+\gamma\abs{x}^2}\;.
	\end{align*}
	By construction, $\phi_{\varepsilon,\gamma}$ is bounded and almost everywhere differentiable in $\R^d$.
	Moreover, for almost all $x\in\R^d$
	\begin{align*}
		\nabla_x\phi_\varepsilon &= \varepsilon\indic_{\Omega_\varepsilon}(x)\nabla_x\dist(x,U)\;,
	\end{align*}
	and
	\begin{align*}
		\nabla_x\phi_{\varepsilon,\gamma} &= \frac{1}{1+\gamma\abs{x}^2}\nabla_x\phi_\varepsilon -\frac{2\gamma\phi_\varepsilon(x)}{(1+\gamma\abs{x}^2)^2} x
		= \frac{1}{1+\gamma\abs{x}^2}\nabla_x\phi_\varepsilon -\phi_{\varepsilon,\gamma}(x)\frac{2\gamma}{1+\gamma\abs{x}^2} x\;.
	\end{align*}
	Up to a translation, let us assume that $0\in U$ so that $\dist(x,U)\leq \abs{x}$ for any $x\in\R^d$. In particular, one has the inclusion $\supp(\abs{\nabla\phi_{\varepsilon,\gamma}})\subset \tilde{U}\setminus U$ and
	\begin{equation}\label{eq-proof:Agmon}
		\sup_{\gamma\in(0,1]}\normLp{\nabla\phi_{\varepsilon,\gamma}}{\infty}{(\R^d)}\leq 2\varepsilon\;.
	\end{equation}
	By the triangle inequality, one has
	\begin{align*}
		\normLp{e^{\phi_{\varepsilon,\gamma}/\hbar}}{2}{(\R^d)} \leq 1+\normLp{\chi e^{\phi_{\varepsilon,\gamma}/\hbar}}{2}{(\R^d)}\;.
	\end{align*}
	On the one hand, by \cref{eq:ass-Agmon} and since $\supp(\chi)\subset U^c$
	\begin{align*}
		\qhb(\chi e^{\phi_{\varepsilon,\gamma}/\hbar}\psi,\chi e^{\phi_{\varepsilon,\gamma}/\hbar}\psi) &\geq \int_{\R^d} V(x)\abs{\chi(x) e^{\phi_{\varepsilon,\gamma}(x)/\hbar}\psi(x)}^2 \di x 
		\geq \frac\varepsilon 2\normLp{\chi e^{\phi_{\varepsilon,\gamma}/\hbar}\psi}{2}{(\R^d)}^2\;.
	\end{align*}
	On the other hand, given the supports of $\nabla\chi$, $\nabla\phi_{\varepsilon,\gamma}$ and $\phi_{\varepsilon,\gamma}$, one has
	\begin{align*}
		\hbar^2\normLp{\nabla(\chi e^{\phi_{\varepsilon,\gamma}/\hbar})\psi}{2}{(\R^d)}^2 &\leq 2\Big(\normLp{(\chi e^{\phi_{\varepsilon,\gamma}/\hbar}\psi) \nabla\phi_{\varepsilon,\gamma}}{2}{(\R^d)}^2+\hbar^2\normLp{(e^{\phi_{\varepsilon,\gamma}/\hbar}\psi) \nabla\chi}{2}{(\R^d)}^2\Big)
		\\&\leq 2\left(4\varepsilon^2+\normLp{\nabla\chi}{\infty}{(\Omega_\varepsilon\setminus U)}^2\right)\normLp{\psi}{2}{(\Omega_\varepsilon\setminus U)}^2
		\\&\leq 2\left(4\varepsilon^2+\abs{\tilde{U}\setminus U}^2\right)\;.
	\end{align*}
	By gathering everything in the Agmon formula (\cref{lemma:IMS}), one has
	\begin{align*}
		\Big(\frac\varepsilon2-E\Big)\normLp{\chi e^{\phi_{\varepsilon,\gamma}/\hbar}\psi}{2}{(\R^d)}^2 \leq 2\left(4\varepsilon^2+\abs{\tilde{U}\setminus U}^2\right)\;.
	\end{align*}
	Since $E<\varepsilon/4$, this implies the bound
	\begin{align*}
		\normLp{\chi e^{\phi_{\varepsilon,\gamma}/\hbar}\psi}{2}{(\R^d)}^2 \leq 8\left(4\varepsilon+\frac 1\varepsilon\abs{\tilde{U}\setminus U}^2\right)\;,
	\end{align*}
	which is uniform with respect to $\gamma\in(0,1]$. We conclude by taking the limit $\gamma\to 0$.
\end{proof}

We need also a Cwikel--Lieb--Rosenbljum bound (CLR) that provides an upper bound on the integrated Weyl law for magnetic Schrödinger operators.

\begin{lemma}\label{lemma:CLR-Lieb-Seiringer}
	Let $d\geq 1$. Let $\gamma>0$ such that $\frac d2+\gamma\geq 1$, $a\in L^2_\loc(\R^d,\R^d)$, and let $V:\R^d\to\R$ bounded from below and such that $V_+\in L^1_\loc(\R^d)$ and such that $(V+\lambda/2)_-\in L^{\frac d2+\gamma}(\R^d)$. Let $H=(-i\nabla-a(x))^2+V(x)$ be the magnetic Schrödinger operator acting in $L^2(\R^d)$. Then,
	\begin{align*}
		\tr(\indic_{(-\infty,\mu-\lambda]}(H))\leq \lambda^{-\gamma}\frac{2^\gamma\Gamma(\gamma)}{(4\pi)^{d/2}\Gamma\big(d/2+\gamma\big)} \int_{\R^d}\Big(V(x)+\frac\lambda2\Big)_-^{\frac d2+\gamma}\di x\;.
	\end{align*}
\end{lemma}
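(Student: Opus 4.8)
The plan is to bound $H$ from below by a purely magnetic Schrödinger operator with a nonnegative potential, apply the Birman--Schwinger principle to the latter, and then remove the magnetic field via the diamagnetic inequality, at which point the asserted constant appears as an explicit Gamma-integral. Concretely, set $W:=(V+\lambda/2)_-\geq 0$ and $H_0:=(-i\nabla-a(x))^2\geq 0$; from $V+\lambda/2\geq -(V+\lambda/2)_-=-W$ one has $H=H_0+V\geq H_0-W-\lambda/2$ in the sense of quadratic forms (both operators being well defined and semibounded under the stated integrability of $V$, via the diamagnetic inequality and the KLMN theorem). By the min--max principle it then suffices to bound the number $N$ of eigenvalues of $H_0-W$ lying at or below $-\lambda/2$, since
\begin{equation*}
\tr\big(\indic_{(-\infty,-\lambda]}(H)\big)\leq \tr\big(\indic_{(-\infty,-\lambda/2]}(H_0-W)\big)=N.
\end{equation*}

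Next I would invoke the Birman--Schwinger principle. Let $K:=W^{1/2}(H_0+\lambda/2)^{-1}W^{1/2}\geq 0$. By the Birman--Schwinger principle together with the monotonicity of $E\mapsto W^{1/2}(H_0-E)^{-1}W^{1/2}$ on $(-\infty,0)$, every eigenvalue of $H_0-W$ at level $\leq -\lambda/2$ corresponds, with its multiplicity, to an eigenvalue of $K$ which is $\geq 1$; hence $N\leq\#\{\,\text{eigenvalues of $K$ which are $\geq 1$}\,\}$. Since $K\geq 0$ and $p:=d/2+\gamma\geq 1$, each such eigenvalue contributes at least $1$ to $\tr(K^p)$ while the remaining ones contribute nonnegatively, so $N\leq\tr(K^p)$.

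It remains to estimate $\tr(K^p)$ and to extract the constant. By the Araki--Lieb--Thirring trace inequality (valid for $p\geq 1$) and cyclicity of the trace,
\begin{equation*}
\tr(K^p)\leq \tr\big((H_0+\lambda/2)^{-p/2}\,W^p\,(H_0+\lambda/2)^{-p/2}\big)=\int_{\R^d}W(x)^p\,(H_0+\lambda/2)^{-p}(x,x)\,\di x,
\end{equation*}
using that $(H_0+\lambda/2)^{-p}$ has a bounded continuous kernel since $p>d/2$. The magnetic field is now removed pointwise on the diagonal: from $(H_0+\lambda/2)^{-p}=\Gamma(p)^{-1}\int_0^\infty t^{p-1}e^{-\lambda t/2}\,e^{-tH_0}\,\di t$ and the diamagnetic inequality at the level of the heat semigroup, $0\leq e^{-tH_0}(x,x)\leq e^{t\Delta}(x,x)=(4\pi t)^{-d/2}$ (the resolvent bound in \cref{lemma:diamagnetic-ineq} being essentially the case $p=1$; for the general power see, e.g., \cite{Lieb-Seiringer-2010Stability}), one obtains
\begin{equation*}
(H_0+\lambda/2)^{-p}(x,x)\leq\frac{1}{(4\pi)^{d/2}\Gamma(p)}\int_0^\infty t^{\,p-1-d/2}e^{-\lambda t/2}\,\di t=\frac{2^\gamma\Gamma(\gamma)}{(4\pi)^{d/2}\Gamma(d/2+\gamma)}\,\lambda^{-\gamma}.
\end{equation*}
Substituting $W=(V+\lambda/2)_-$ and combining the last three displays gives exactly the claimed inequality.

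The argument is classical, so there is no serious obstacle; the points that need care are (i) invoking the Araki--Lieb--Thirring inequality, which is what forces the route through $\tr(K^p)$ and is needed to reach the sharp constant (a softer counting argument would lose it); (ii) upgrading the diamagnetic comparison from the resolvent, as in \cref{lemma:diamagnetic-ineq}, to the fractional power $(H_0+\lambda/2)^{-p}$ through its heat-semigroup representation, together with the attendant kernel regularity; and (iii) the standard Birman--Schwinger multiplicity count, which must accommodate eigenvalues lying exactly on the threshold $-\lambda/2$ and is therefore phrased with $\geq 1$ rather than $>1$.
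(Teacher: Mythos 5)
Your proof is correct and follows essentially the same route as the paper: reduce to $H_0-(V+\lambda/2)_-$, apply the Birman--Schwinger principle, bound the eigenvalue count by $\tr(K^{d/2+\gamma})$ via the Araki--Lieb--Thirring inequality (this is exactly where the hypothesis $d/2+\gamma\geq 1$ and the quoted trace inequality from Lieb--Seiringer enter, as in the paper), and compute the resulting diagonal kernel, which yields the stated constant. The only difference, which is harmless and arguably cleaner, is where the magnetic field is removed: you apply the diamagnetic inequality to the heat-kernel diagonal of $(H_0+\lambda/2)^{-p}$ after the trace inequality, whereas the paper passes to the non-magnetic resolvent already at the Birman--Schwinger stage and then evaluates $\big((-\Delta+\lambda/2)^{-p}\big)(0,0)$.
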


The proof is an adaptation of \cite[Corollary 4.2]{Lieb-Seiringer-2010Stability}. The constant in front of the integral is the same as for the non-magnetic case and does not depend at all on the magnetic field.

\begin{proof}
	We apply the Birman--Schwinger inequality twice and the diamagnetic inequality for the resolvents (\cref{eq:diamagnetic-ineq_res-kin} on integral kernels)
	this leads to	
	\begin{align*}
		\tr(\indic_{(-\infty,-\lambda]}(H)) &=\tr\Big[\indic_{(-\infty,-\lambda/2]}((-i\nabla-a(x))^2+V(x)-\lambda/2)\Big]
		\\&\leq \tr\Big[\indic_{(-\infty,-\lambda/2]}((-i\nabla-a(x))^2-(V+\lambda/2)_-)\Big]
		\\&\quad= \tr\Big[\indic_{[1,+\infty)}(\sqrt{(V+\lambda/2)_-}((-i\nabla-a(x))^2+\lambda/2)^{-1}\sqrt{(V+\lambda/2)_-}\Big]
		\\&\leq \tr\Big[\indic_{[1,+\infty)}(\sqrt{(V+\lambda/2)_-}(-\Delta+\lambda/2)^{-1}\sqrt{(V+\lambda/2)_-}\Big]
		\\&\leq \tr\Big[\sqrt{(V+\lambda/2)_-}(-\Delta+\lambda/2)^{-1}\sqrt{(V+\lambda/2)_-}\Big]
		\;.
	\end{align*}
	Then, we use the following lemma to the multiplication operator $(V+\lambda/2)_-$ and to $(-\Delta+\lambda/2)^{-1}$ with $m=\frac d2+\gamma$.

	We know that (\cite[Theorem 4.5]{Lieb-Seiringer-2010Stability}) for $A,B$ be positive, self-adjoint operators on a separable Hilbert space and for any $m\geq 1$
		\begin{equation*}
			\tr(B^{1/2}A B^{1/2})^m\leq \tr(B^{m/2}A^m B^{m/2})\;.
		\end{equation*}
	This gives
	\begin{align*}
		\tr(\indic_{(-\infty,-\lambda]}(H))
		&\leq \tr\Big(\sqrt{(V+\lambda/2)_-}^{\frac d2+\gamma}\Big((-\Delta+\lambda/2)^{-1}\Big)^{\frac d2+\gamma}\sqrt{(V+\lambda/2)_-}^{\frac d2+\gamma}\Big)
		\\&\quad= \Big((-\Delta+\lambda/2)^{-1}\Big)^{\frac d2+\gamma}(0,0)\int_{\R^d}\Big(V(x)+\frac\lambda2\Big)_-^{\frac d2+\gamma}\di x
		\\&\quad= \lambda^{-\gamma}\frac{2^\gamma\Gamma(\gamma)}{(4\pi)^{d/2}\Gamma\big(d/2+\gamma\big)} \int_{\R^d}\Big(V(x)+\frac\lambda2\Big)_-^{\frac d2+\gamma}\di x
		\;.
	\end{align*}
	Here $\Big((-\Delta+\lambda/2)^{-1}\Big)^{\frac d2+\gamma}(0,0)$ is the integral kernel of $\Big((-\Delta+\lambda/2)^{-1}\Big)^{\frac d2+\gamma}$ evaluated in the point $(0,0)$.
\end{proof}

We use more exactly the version that involves the semiclassical parameter $\hbar>0$. 

\begin{cor}\label{cor:CLR-semicl}
	Let $d\geq 1$. Let $b>0$, $a\in L^2_\loc(\R^d,\R^d)$, let $V:\R^d\to\R$ bounded from below, and let $\mu>\inf_\R V$,  such that $(V-\mu)_+\in L^1_\loc(\R^d)$ and such that $(V-\mu-\tfrac38\varepsilon)_-\in L^{\frac d2+1}(\R^d)$. Let $H_{\hbar,b}=(-i\hbar\nabla- ba(x))^2+V(x)$ be the magnetic Schrödinger operator acting in $L^2(\R^d)$. Then, for any $\varepsilon>0$, there exist $\hbar_0>0$ and $C=C(d,V,\mu,\varepsilon)>0$ such that for any $\hbar\in(0,\hbar_0]$ and any $b>0$
	\begin{align*}
		\tr(\indic_{(-\infty,\mu+\varepsilon/4]}(H_{\hbar,b}))\leq C\hbar^{-d}\;.
	\end{align*}
\end{cor}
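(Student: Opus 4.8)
The plan is to reduce the claim to the non-semiclassical Cwikel--Lieb--Rosenbljum bound of \cref{lemma:CLR-Lieb-Seiringer} by means of the standard semiclassical dilation, keeping careful track of the energy shifts so that the integrability assumption on $(V-\mu-\tfrac{3}{8}\varepsilon)_-$ is matched at the end. The first step is to introduce the unitary dilation $U_\hbar$ on $L^2(\R^d)$ given by $(U_\hbar\psi)(z):=\hbar^{d/2}\psi(\hbar z)$. A short computation gives $U_\hbar(-i\hbar\partial_{x_j})U_\hbar^*=-i\partial_{z_j}$ and $U_\hbar\,g(x)\,U_\hbar^*=g(\hbar\,\cdot\,)$ for a multiplication operator $g$, hence
\begin{equation*}
	U_\hbar H_{\hbar,b}U_\hbar^*=(-i\nabla-b\,a(\hbar z))^2+V(\hbar z)=:K_\hbar\,,
\end{equation*}
which is again a magnetic Schr\"odinger operator, now ``at $\hbar=1$'', with magnetic potential $z\mapsto b\,a(\hbar z)\in L^2_\loc(\R^d,\R^d)$ and scalar potential $z\mapsto V(\hbar z)$. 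Since unitary conjugation preserves the spectrum, $\tr(\indic_{(-\infty,\mu+\varepsilon/4]}(H_{\hbar,b}))=\tr(\indic_{(-\infty,\mu+\varepsilon/4]}(K_\hbar))$.

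Next I would rewrite the threshold as a strictly negative level: with $\lambda:=\varepsilon/4$ one has the operator identity
\begin{equation*}
	\indic_{(-\infty,\mu+\varepsilon/4]}(K_\hbar)=\indic_{(-\infty,-\lambda]}\big(K_\hbar-\mu-\tfrac{\varepsilon}{4}-\lambda\big)\,,
\end{equation*}
and $K_\hbar-\mu-\tfrac{\varepsilon}{4}-\lambda$ is the magnetic Schr\"odinger operator with magnetic potential $b\,a(\hbar\,\cdot\,)$ and scalar potential $z\mapsto V(\hbar z)-\mu-\tfrac{\varepsilon}{2}$ (since $\tfrac{\varepsilon}{4}+\lambda=\tfrac{\varepsilon}{2}$). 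I would then verify that this operator satisfies the hypotheses of \cref{lemma:CLR-Lieb-Seiringer} with $\gamma=1$ (so that $d/2+\gamma=d/2+1\geq 1$ for all $d\geq 1$): it is bounded below because $V$ is; the positive part of its potential lies in $L^1_\loc$ because $(V-\mu)_+\in L^1_\loc$ and dilations preserve $L^1_\loc$; and the negative part of the further shifted potential, namely $(V(\hbar z)-\mu-\tfrac{\varepsilon}{2}+\tfrac{\lambda}{2})_-=(V(\hbar z)-\mu-\tfrac{3\varepsilon}{8})_-$, lies in $L^{d/2+1}(\R^d)$ because $(V-\mu-\tfrac{3\varepsilon}{8})_-\in L^{d/2+1}(\R^d)$ by hypothesis and dilations preserve $L^p$.

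Applying \cref{lemma:CLR-Lieb-Seiringer} in the form established in its proof (that is, $\tr(\indic_{(-\infty,-\lambda]}(H))\leq\lambda^{-\gamma}\tfrac{2^\gamma\Gamma(\gamma)}{(4\pi)^{d/2}\Gamma(d/2+\gamma)}\int(V+\lambda/2)_-^{d/2+\gamma}\di x$ for $H=(-i\nabla-a)^2+V$), with $H=K_\hbar-\mu-\tfrac\varepsilon4-\lambda$, $\gamma=1$ and $\lambda=\varepsilon/4$, and then changing variables $x=\hbar z$ in the resulting integral — which contributes a Jacobian factor $\hbar^{-d}$ — gives
\begin{equation*}
	\tr\big(\indic_{(-\infty,\mu+\varepsilon/4]}(H_{\hbar,b})\big)\leq\frac{8}{\varepsilon\,(4\pi)^{d/2}\,\Gamma(d/2+1)}\;\hbar^{-d}\int_{\R^d}\big(V(x)-\mu-\tfrac{3\varepsilon}{8}\big)_-^{d/2+1}\di x\,.
\end{equation*}
The integral is finite by hypothesis, so the statement follows with $C$ equal to the prefactor on the right-hand side and $\hbar_0$ arbitrary (in fact the estimate holds for every $\hbar>0$); it is uniform in $b>0$ precisely because the Cwikel--Lieb--Rosenbljum constant in \cref{lemma:CLR-Lieb-Seiringer} does not depend on the magnetic potential.

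I do not expect a serious obstacle: the argument is a routine semiclassical rescaling combined with the magnetic independence of the CLR constant. The one point requiring attention is the bookkeeping of the three successive energy shifts — the chemical potential $\mu$, the spectral gap $\varepsilon/4$, and the auxiliary $\lambda/2$ built into the CLR inequality — which is arranged exactly so that the classically forbidden region is probed at level $\tfrac{3}{8}\varepsilon$, matching the integrability assumption imposed on $(V-\mu-\tfrac{3}{8}\varepsilon)_-$.
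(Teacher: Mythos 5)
Your proof is correct and follows essentially the same route as the paper: both reduce the claim to \cref{lemma:CLR-Lieb-Seiringer} with $\gamma=1$ and the shifted potential $(V-\mu-\tfrac38\varepsilon)_-$, the uniformity in $b>0$ coming from the fact that the CLR constant is independent of the magnetic potential. The only (immaterial) difference is how the rescaling is implemented: you conjugate by the unitary dilation $x=\hbar z$ and keep $\lambda=\varepsilon/4$ fixed, so the factor $\hbar^{-d}$ arises from the Jacobian of the change of variables, whereas the paper divides the operator by $\hbar^2$ and applies the lemma with $\lambda=\varepsilon/(4\hbar^2)$ and coupling $b/\hbar$, so the same factor arises from $\lambda^{-1}$ combined with the scaled potential.
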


\begin{proof}
	The proof is analogous to \cite[Remark 3.6]{Fournais-Mikkelsen-2020}. It is enough to apply the previous lemma to $\gamma=1$, $\lambda=\frac{\varepsilon}{4\hbar^2}$ and to $H=(-i\nabla- \frac b\hbar a(x))^2+\frac 1{\hbar^2}V(x)\frac 1{\hbar^2}\mu-\frac\varepsilon{2\hbar^2}$
	\begin{align*}
		\tr(\indic_{(-\infty,\mu+\frac\varepsilon4]}(H_{\hbar,b})) &= \tr(\indic_{(-\infty,-\frac\varepsilon{4\hbar^2}]}(H))
		\\&\leq \hbar^{-d}\frac{4\varepsilon^{-\frac d2-1}}{(4\pi)^{d/2}\Gamma\big(d/2+1\big)} \int_{\R^d}\big(V(x)-\mu-\frac{3\varepsilon}{8}\big)_-^{\frac d2+1}\di x\;.	\qedhere
	\end{align*}
\end{proof}

\subsection{Proof of \cref{thm:cb-princ}}
\begin{proof}[Proof of \cref{thm:cb-princ}]

Let $\varepsilon>0$ and the open sets $\Omega_\varepsilon$ and $\Omega_V\subset\R^d$ given by \cref{ass:a-V}.
There exists an open bounded set $\Omega$ such that
\begin{equation}\label{eq-proof:inclusion-sets}
	\{x\in\R^d \: :\: V(x)-\mu<\varepsilon\} \subset\Omega_\varepsilon\subset\overline{\Omega_\varepsilon}\subset\Omega \subset\overline{\Omega}\subset\Omega_V
\end{equation}
Let $\chi,\tilde{\chi}\in\test{\R^d,[0,1]}$ such that $\supp(\chi)\subset\overline{\Omega}$ and $\chi=1$ on $\overline{\Omega_\varepsilon}$, and such that $\supp(\tilde{\chi})\subset\Omega_V$ and $\tilde{\chi}=1$ on $\overline{\Omega}$. By construction, the operator $H_{\hbar,b}$ satisfies \cref{ass:loc-magn} for the bounded set $\Omega$ and for any $u\in\test{\Omega}$, with $a_{\loc}:=\tilde{\chi}a$ and $V_\loc:=\tilde{\chi}(V-\mu)$:
\begin{align*}
	(H_{\hbar,b}-\mu)u=((i\hbar\nabla+ba_{\loc}(x))^2+V_{\loc}(x))u \;
\end{align*}
	Actually, by construction one has $u=\tilde{\chi}u=\tilde{\chi}^2u$ and $\supp(\partial_{x_j}\tilde{\chi})\cap\supp(u)=\emptyset$ for any $j\in\{1,\ldots,d\}$, so that
	\begin{align*}
		(H_{\hbar,b}-\mu)u&=	-\hbar^2\Delta u +2i\hbar b\cdot\nabla u+i\hbar b(\div a) u  +b^2\bar{a}^2 u +(V-\mu)u
		\\&=-\hbar^2\Delta u +2i\hbar b\cdot\nabla(\tilde{\chi} u) +i\hbar b(\div a) \chi u+b^2\abs{a}^2 \tilde{\chi}^2u +(V-\mu)\tilde{\chi}u
		\\&=-\hbar^2\Delta u +2i\hbar b a\cdot(\tilde{\chi}\nabla u) +2i\hbar b(a\cdot\nabla\tilde{\chi})u+i\hbar b(\div a)\tilde{\chi}u +b^2\abs{\tilde{\chi}a}^2u+\tilde{\chi}(V-\mu)u
		\\&=-\hbar^2\Delta u +2i\hbar b (\tilde{\chi}a)\cdot\nabla u +2i\hbar b(a\cdot\nabla\tilde{\chi})u+i\hbar b(\div a)\tilde{\chi} u +b^2\abs{\tilde{\chi}a}^2u+\tilde{\chi}(V-\mu)u
		\\&=-\hbar^2\Delta u +2i\hbar b (\tilde{\chi}a)\cdot\nabla u +i\hbar b\underset{=0}{\underbrace{(a\cdot\nabla\tilde{\chi})u}}+i\hbar b\div(\tilde{\chi}a)u +b^2\abs{\tilde{\chi}a}^2u+\tilde{\chi}(V-\mu)u
		\\&=((i\hbar\nabla+b(\tilde{\chi}a))^2+\tilde{\chi}(V-\mu))u\;.
	\end{align*}
Let $j\in\{1,\ldots,d\}$. We bound the trace norm of the commutators by two localized terms
\begin{align*}
	\norm{[\indic_{(-\infty,\mu]}(H_{\hbar,b}),f(x)]}_1&\leq \norm{[\indic_{(-\infty,\mu]}(H_{\hbar,b}),\chi f(x)]}_1+\norm{[\indic_{(-\infty,\mu]}(H_{\hbar,b}),(1-\chi)f(x)]}_1\\
	\norm{[\indic_{(-\infty,\mu]}(H_{\hbar,b}),P_j]}_1&\leq \norm{[\indic_{(-\infty,\mu]}(H_{\hbar,b}),\chi P_j]}_1+\norm{[\indic_{(-\infty,\mu]}(H_{\hbar,b}),(1-\chi) P_j]}_1\;.
\end{align*}
We first apply \cref{thm:cb-loc-gene} to the operator $H_{\hbar,b}-\mu$ on the open set $\Omega$ that is locally equal to $(-i\hbar\nabla-ba_\loc)^2+V_\loc$, this gives
\begin{align*}
	\norm{[\indic_{(-\infty,\mu]}(H_{\hbar,b}),\chi f(x)]}_1=\norm{[\indic_{(-\infty,0]}(H_{\hbar,b}-\mu),\chi f(x)]}_1\leq C\hbar^{1-d}\;,
    \\\norm{[\indic_{(-\infty,\mu]}(H_{\hbar,b}),\chi P_j]}_1=\norm{[\indic_{(-\infty,0]}(H_{\hbar,b}-\mu),\chi (-i\hbar\partial_{x_j}-b(a_{\loc})_j]}_1\leq C\hbar^{1-d}\;.
\end{align*}
Let us treat the other term localized with the function $1-\chi$ and show that it is negligible. More exactly, let us prove that for any $N\in\N_{\geq 1}$, there exists $C_N>0$ such that
\begin{equation}\label{eq-proof:cb-princ_1}
	\norm{[\indic_{(-\infty,\mu]}(H_{\hbar,b}),(1-\chi)f(x)]}_1,\quad\norm{[\indic_{(-\infty,\mu]}(H_{\hbar,b}),(1-\chi) P_j]}_1\leq C\hbar^{N-d}\;.
\end{equation}
Let us write $R_j$ for $f(x)$ or $P_j$. By the triangle inequality, the cyclicity of the trace and the Cauchy--Schwarz inequality in Schatten spaces
\begin{align*}
	\norm{[\indic_{(-\infty,\mu]}(H_{\hbar,b}),(1-\chi) R_j]}_1 &\leq \norm{\indic_{(-\infty,\mu]}(H_{\hbar,b})(1-\chi) R_j}_1+\norm{(1-\chi) R_j\indic_{(-\infty,\mu]}(H_{\hbar,b})}_1
	\\&\quad= 2 \norm{\indic_{(-\infty,\mu]}(H_{\hbar,b})(1-\chi) R_j}_1
	\\&\quad= 2 \norm{\indic_{(-\infty,\mu]}(H_{\hbar,b})\:\indic_{(-\infty,\mu]}(H_{\hbar,b})(1-\chi) R_j}_1
	\\&\leq 2\tr(\indic_{(-\infty,\mu]}(H_{\hbar,b}))^{1/2}\norm{\indic_{(-\infty,\mu]}(H_{\hbar,b})(1-\chi) R_j}_2^{1/2}\;.
\end{align*}
By the magnetic CLR bound (\cref{cor:CLR-semicl}), we have
\begin{align*}
	\tr(\indic_{(-\infty,\mu]}(H_{\hbar,b}))\leq \tr(\indic_{(-\infty,\mu+\varepsilon/4]}(H_{\hbar,b}))\leq C_{\mu,\varepsilon}\hbar^{-d}\;.
\end{align*}
To control the remaining term we show that for any $N\in\N$ we have
\begin{equation}\label{eq-proof:cb-princ_2}
	\norm{\indic_{(-\infty,\mu]}(H_{\hbar,b})(1-\chi) R_j}_2\leq C_N\hbar^{N-\frac d2}\;.
\end{equation}
In fact, by cyclicity of the trace and by expressing the square of this Hilbert--Schmidt norm with respect to an orthogonal basis of eigenfunctions $(\varphi_n)_{n\in\N}$ of $H_{\hbar,b}$, we have
\begin{equation}\label{eq-proof:cb-princ_3}
\begin{split}
	&\norm{\indic_{(-\infty,\mu]}(H_{\hbar,b})(1-\chi) R_j}_2^2 \\
	&= \tr\Big[\indic_{(-\infty,\mu]}(H_{\hbar,b})(1-\chi) R_j^2(1-\chi)\indic_{(-\infty,\mu]}(H_{\hbar,b})\Big]
	\\& = \sum_{n\in\N,\:\lambda_n\leq 0}\prodscal{\psi_n}{\Big[\indic_{(-\infty,\mu]}(H_{\hbar,b})(1-\chi) R_j^2(1-\chi)\indic_{(-\infty,\mu]}(H_{\hbar,b})\Big]\psi_n}
	\\& =  \sum_{n\in\N,\:\lambda_n\leq\frac \varepsilon4}\prodscal{\psi_n}{\Big[\indic_{(-\infty,\mu]}(H_{\hbar,b})(1-\chi) R_j^2(1-\chi)\indic_{(-\infty,\mu]}(H_{\hbar,b})\Big]\psi_n}\;.
\end{split}
\end{equation}
The inclusion of sets \cref{eq-proof:inclusion-sets} assures that the condition \cref{eq:ass-Agmon} of the magnetic Agmon estimates is satisfied. Then, after applying the Hölder inequality
\begin{align*}
&	\prodscal{\psi_n}{\Big[\indic_{(-\infty,\mu]}(H_{\hbar,b})(1-\chi) f(x)^{2r}(1-\chi)\indic_{(-\infty,\mu]}(H_{\hbar,b})\Big]\psi_n}
	\\&\qquad=\normLp{f(x)^r(1-\chi)\indic_{(-\infty,\mu]}(H_{\hbar,b})\psi_n}{2}{(\R^d)}^2
	\\&\qquad\leq \normLp{(1-\chi)f(x)^r e^{-\phi_{\sqrt{\varepsilon}/8}/\hbar}}{\infty}{(\R^d)}^2\normLp{e^{\phi_{\sqrt{\varepsilon}/8}/\hbar}\indic_{(-\infty,\mu]}(H_{\hbar,b})\psi_n}{2}{(\R^d)}^2.
\end{align*}
We control uniformly the last norm $\normLp{e^{\phi_{\sqrt{\varepsilon}/8}/\hbar}\indic_{(-\infty,\mu]}(H_{\hbar,b})\psi_n}{2}{(\R^d)}$ with \cref{lemma:Agmon} applied to $U=\Omega$ and $\sqrt{\varepsilon}/8$ instead of $\varepsilon$.
By \cref{eq-proof:inclusion-sets} and the definition of $\phi_{\sqrt{\varepsilon}/8}$ we have
\begin{align*}
	\overline{\Omega_\varepsilon} \subset \{x\in\R^d\::\: \dist(x,\Omega)< 1\}=
	\{x\in\R^d\: :\: \phi_{\sqrt{\varepsilon}/8}(x)=0\} \;.
\end{align*}
In other terms, $\phi_{\sqrt{\varepsilon}/8}$ is positive and bounded from below by  on $(\overline{\Omega_\varepsilon})^c$, then it is also the case on $\supp(1-\chi)$.
\begin{align*}
	\normLp{(1-\chi)f(x)^re^{-\phi_{\sqrt{\varepsilon}/8}/\hbar}}{\infty}{(\R^d)} &= \hbar^N\normLp{ \frac{(1-\chi)f(x)}{(\phi_{\sqrt{\varepsilon}/8})^N} \Big(\phi_{\sqrt{\varepsilon}/8}/\hbar\Big)^Ne^{-\phi_{\sqrt{\varepsilon}/8}/\hbar}}{\infty}{(\R^d)}
	\\& \leq C\hbar^N\frac{\sup_{x\in\R^d}\abs{x}^{N}\abs{f(x)}^re^{-\abs{x}}}{\varepsilon^{N/2}\dist(\partial\Omega_\varepsilon,\{x\in\R^d\::\: \dist(x,\Omega)= 1\})^N}
	\\& \leq \tilde{C}\hbar^N\frac{\sup_{x\in\R^d}\abs{x}^{N+rm}re^{-\abs{x}}}{\varepsilon^{N/2}\dist(\partial\Omega_\varepsilon,\{x\in\R^d\::\: \dist(x,\Omega)= 1\})^N}\;.
\end{align*}
Finally, for any $N\in\N$, there exists $C_N=C(N,m,\varepsilon,\Omega)>0$ such that
\begin{equation}\label{eq-proof:cb-princ_cons-Agmon}
	\sup_{r\in\{0,1\}}\sup_{n\in\N}\normLp{f(x)^r(1-\chi)\indic_{(-\infty,\mu]}(H_{\hbar,b})\psi_n}{2}{(\R^d)}^2\leq C_N\hbar^{2N}\;.
\end{equation}
We apply \cref{eq-proof:cb-princ_cons-Agmon} and again \cref{cor:CLR-semicl} into \cref{eq-proof:cb-princ_3} and obtain the bound \cref{eq-proof:cb-princ_2} for $R_j=x_j$
\begin{align*}
	\norm{\indic_{(-\infty,\mu]}(H_{\hbar,b})(1-\chi)f(x)}_2^2 \leq \tr[\indic_{[0,\varepsilon/4]}(H_{\hbar,b}))]C_N\hbar^N \leq \tilde{C}_N\hbar^{2N-d}\;.
\end{align*}
Let us treat the case $R_j=P_j$.
Let $\lambda=1-\inf_{x\in\Omega_\varepsilon}V(x)$ so that $P_j^2\leq H_{\hbar,b}+\lambda$. This bound on operators combined with \cref{lemma:IMS} applied to the Lipschitz function $\phi=0$, to the smooth function $1-\chi$, to the eigenfunction $\indic_{(-\infty,\mu]}(H_{\hbar,b})\psi_n$
\begin{align*}
	&\prodscal{\psi_n}{\Big(\indic_{(-\infty,\mu]}(H_{\hbar,b})(1-\chi) P_j^2(1-\chi)\indic_{(-\infty,\mu]}(H_{\hbar,b})\Big)\psi_n}\\&\qquad\leq \prodscal{\psi_n}{\Big(\indic_{(-\infty,\mu]}(H_{\hbar,b})(1-\chi)(H_{\hbar,b}+\lambda)(1-\chi)\indic_{(-\infty,\mu]}(H_{\hbar,b})\Big)\psi_n}
	\\&\quad\qquad= \prodscal{(1-\chi)\indic_{(-\infty,\mu]}(H_{\hbar,b})\psi_n}{H_{\hbar,b}(1-\chi)\indic_{(-\infty,\mu]}(H_{\hbar,b})\psi_n}
	\\&\qquad\qquad+\lambda\normLp{(1-\chi)\indic_{(-\infty,\mu]}(H_{\hbar,b})\psi_n}{2}{(\R^d)}^2
	\\&\quad\qquad= (\lambda_n+\lambda)\normLp{(1-\chi)\indic_{(-\infty,\mu]}(H_{\hbar,b})\psi_n}{2}{(\R^d)}^2+\hbar^2\normLp{\abs{\nabla(1-\chi)}\indic_{(-\infty,\mu]}(H_{\hbar,b})\psi_n}{2}{(\R^d)}^2
	\\&\qquad\leq \Big(\frac \varepsilon 4+\lambda+\normLp{\nabla(1-\chi)}{\infty}{(\R^d)}^2\Big)\normLp{(1-\chi)\indic_{(-\infty,\mu]}(H_{\hbar,b})\psi_n}{2}{(\R^d)}^2
	\;.
\end{align*}
As before we conclude by using \cref{eq-proof:cb-princ_cons-Agmon} and the CLR estimates (\cref{cor:CLR-semicl}) applied to the potential $V-\mu$ into \cref{eq-proof:cb-princ_3}
\begin{align*}
	\norm{\indic_{(-\infty,\mu]}(H_{\hbar,b})(1-\chi) P_j}_2^2 \leq \tr(\indic_{[0,\varepsilon/4]}(H_{\hbar,b}-\mu))C_N\hbar^N \leq \tilde{C}_N\hbar^{2N-d}\;.
\end{align*}
We gather all bounds to get the desired estimates. This concludes the proof of \cref{thm:cb-princ}.
\end{proof}

\section{Proof of the Propagation of the Semiclassical Structure (\cref{thm:propag-semicl-magn})}\label{sec:propagation-semicl-structure}

\begin{proof}[Proof of \cref{thm:propag-semicl-magn}]

Let $m$ be the degree of the polynomial $V$.
Let us write a bound that allows us to apply the Grönwall lemma to the non-negative function
\begin{equation}\label{eq-proof:semic-str_gronwall-func}
	 t\mapsto\norm{[P,\omega_{N,t}]}_1+\sup_{\alpha\in \R^d}\frac 1{1+\abs{\alpha}}\norm{[e^{i\alpha\cdot x},\omega_{N,t}]}_1+\sum_{\beta\in\N^d,\:\abs{\beta}\leq m}\norm{[x^\beta,\omega_{N,t}]}_1\; .
\end{equation}
Let us prove that for any $t\geq 0$, for any $\alpha\in\R^d$
\begin{equation}\label{eq-proof:semic-str_eix}
	\norm{[e^{i\alpha\cdot x},\omega_{N,t}]}_1 \leq 	\norm{[e^{i\alpha\cdot x},\omega_N]}_1 + 2\abs{\alpha}\int_0^t \di s\norm{[\omega_{N,s},P]}_1 + \frac 1\hbar\int_0^t \di s \norm{[\omega_{N,s},[X_s,e^{i\alpha\cdot\bx}]]}_1
	\;,
\end{equation}
and that
\begin{align}\label{eq-proof:semic-str_moment}
		\norm{[P,\omega_{N,t}]}_1 
		&\leq 	\norm{[P,\omega_N]}_1
		\nonumber
		\\&\quad+2b\sum_{1\leq j<k\leq d}\abs{B_{jk} }\int_0^t \di s \norm{\left[P,\omega_{N,s}\right]}_1
		+\int_0^t \di s \norm{\left[\omega_{N,s},\nabla V\right]}_1
		\nonumber
		\\&\quad+ \left(\int_{\R^d}\di p (1+\abs{p}^2)\abs{\hat{W}(p)}\right)\int_0^t \di s\sup_{p\in \R^d}\frac{1}{1+\abs{p}}\norm{[\omega_{N,s},e^{ip\cdot x}]}_1
		\nonumber
		\\&\quad+ \frac 1\hbar\int_0^t\di s \norm{[\omega_{N,s},[X_s,P]}_1
		\;,
\end{align}
and that for any $\beta\in\N^d$
\begin{align}\label{eq-proof:semic-str_x}
		\norm{[x^\beta,\omega_{N,t}]}_1 
		&\leq \norm{[x^\beta,\omega_N]}_1
		\nonumber
		\\&\quad+\hbar\int_0^t \di s\abs{\beta(\beta-1)}\norm{[\omega_{N,s},x^{\beta-2n_d}]}_1
		\nonumber
		\\&\quad+2b\int_0^t\di s\abs{\beta}\left(\sum_{1\leq j\leq d}\normLp{\partial_{x_j}a_j}{\infty}{}\right)\max\left(\norm{[\omega_{N,s},x^{\beta-n_d}]}_1,\norm{[\omega_{N,s},x^{\beta}]}_1\right)
		\nonumber
		\\&\quad+\frac 1\hbar\int_0^t\di s \norm{[\omega_{N,s},[X_s,x^\beta]}_1
		\;,
\end{align}
where $n_d:=(1,\ldots,1)\in\R^d$.

Let us explain first why it enough to prove \eqref{eq-proof:semic-str_eix}, \eqref{eq-proof:semic-str_moment} and \eqref{eq-proof:semic-str_x}.

On the one hand, let us show that the trace norm with the exchange term can be absorbed by the main trace norm, more precisely for $R$ being $e^{i\alpha\cdot x}$, $P$ or $x^\beta$ for any $\alpha\in\R^d$ and $\beta\in\N^d$
\begin{equation}\label{eq-proof:semic-str_exch-term}
	\begin{split}
		\norm{[\omega_{N,t},[X_t,R]]}_1 \leq \frac 2N\left(\int_{\R^d}|\hat{W}(p)|\di p\right)\norm{[R,\omega_{N,t}]}_1.
	\end{split}
\end{equation}
		In fact, expressing $X_s$ with the Fourier transform of $V$
		\begin{align*}
			X_t &= \frac 1N\int_{\R^d}\di p \hat{W}(p)\Big(e^{ip\cdot x}\omega_{N,t}e^{-ip\cdot x}\Big)\;,
		\end{align*}
		 which yields the expression
		\begin{align*}
			[\omega_{N,t},[X_t,R]] &= \frac 1N\int_{\R^d} \di p\Big[\omega_{N,t},[\hat{W}(p) e^{ip\cdot x}\omega_{N,t}e^{-ip\cdot x},R]\Big]
			||
			\\&=\frac 1N\int_{\R^d} \di p\: \left(\omega_{N,t}[\hat{W}(p) e^{ip\cdot x}\omega_{N,t}e^{-ip\cdot x},R]-[\hat{W}(p) e^{ip\cdot x}\omega_{N,t}e^{-ip\cdot x},R]\omega_{N,t}\right)\;.
		\end{align*}
		Then, by taking the trace norm, using that the operator norm $\norm{\omega_{N,t}}_{\op}\leq 1$ and that $\norm{[\omega_{N,t},R]}_\op=\norm{e^{ip\cdot x}[\omega_{N,t},R]e^{-ip\cdot x}}_\op$, we confirm the claimed \cref{eq-proof:semic-str_exch-term}:
		\begin{align*}
			\norm{[\omega_{N,t},[X_t,R]]}_1 &\leq \frac 2N\int_{\R^d}\di p |\hat{W}(p)| \norm{[e^{ip\cdot x}\omega_{N,t}e^{-ip\cdot x},R]}_1
			\\& \leq \frac 2N\Big(\int_{\R^d} |\hat{W}(p)|\di p\Big) \norm{[\omega_{N,t},R]}_1\;.
		\end{align*}

Furthermore, the assumption on $V$ ensures that that $ \norm{\left[\omega_{N,t},\nabla V\right]}_1$ is bounded by a linear combination of $\norm{\left[\omega_{N,t},x^\beta\right]}_1$ for $\beta\in\N^d$ such that $|\beta|$ is smaller or equal than the highest order of the polynomial function $V$.

Finally, the estimates on the initial data \cref{eq:hyp-propag-semicl-struct}, combined by the one for wave-planes (see point (4) of \cref{rmk:propag-semicl-magn}) imply
\begin{equation*}
	\norm{[P,\omega_N]}_1+\sup_{\alpha\in \R^d}\frac 1{1+\abs{\alpha}}\norm{[e^{i\alpha\cdot x},\omega_N]}_1+\sum_{\beta\in\N^d,\:\abs{\beta}\leq m}\norm{[x^\beta,\omega_N]}_1\leq C_m\hbar N\;.
\end{equation*}
And then, by the Grönwall lemma applied to the function defined in \cref{eq-proof:semic-str_gronwall-func}, under the condition that \eqref{eq-proof:semic-str_eix}, \eqref{eq-proof:semic-str_moment} and \eqref{eq-proof:semic-str_x} hold, we obtain the desired estimates \cref{eq:propag-semicl-struct}.

\smallskip

\item[\bf Beginning of the proof of \eqref{eq-proof:semic-str_eix}, \eqref{eq-proof:semic-str_moment} and \eqref{eq-proof:semic-str_x}.]
Let $R=e^{i\alpha\cdot x}$, $R=P$, or $R=x^\beta$.
We use the fact that $\omega_{N,t}$ satisfies the Hartree--Fock equation \cref{eq:HF-dyn} and the Jacobi identity to get
%
\begin{align*}
	i\hbar\frac d{dt}[R,\omega_{N,t}]
	&= [R,[h_{\rm HF}(t),\omega_{N,t}]]
	%
	= [h_{\rm HF}(t),[R,\omega_{N,t}]]+[\omega_{N,t},[h_{\rm HF}(t),R]]\;.
\end{align*}
Since the multiplication operators $e^{i\alpha\cdot\bx}$, $x^\beta$, and $V+w\ast\rho_t$ commute together, it follows that
\begin{align*}
	i\hbar\frac d{dt}[e^{i\alpha\cdot x},\omega_{N,t}] 
	&=[h_{\rm HF}(t),[e^{i\alpha\cdot x},\omega_{N,t}]]+[\omega_{N,t},[\sum_{k=1}^d P_k^2,e^{i\alpha\cdot x}]]-[\omega_{N,t},[X_t,e^{i\alpha\cdot x}]]\;,
\end{align*}
and
\begin{align*}
	i\hbar\frac d{dt}[P,\omega_{N,t}] 
	&=[h_{\rm HF}(t),[P,\omega_{N,t}]]+[\omega_{N,t},[\Hhb+w\ast\rho_t,P]]-[\omega_{N,t},[X_t,P]]\;,
\end{align*}
and
\begin{align*}
	i\hbar\frac d{dt}[x^\beta,\omega_{N,t}] 
	&=[h_{\rm HF}(t),[x^\beta,\omega_{N,t}]]+[\omega_{N,t},[\sum_{k=1}^d P_k^2,x^\beta]]-[\omega_{N,t},[X_t,x^\beta]]\;.
\end{align*}

\smallskip

\item[{\bf Treatment of the commutator with $e^{i\alpha \cdot x}$ \cref{eq-proof:semic-str_eix}.}] We have
\begin{equation*}\label{eq-proof:semic-str_comm-H-eix}
	[e^{i\alpha\cdot x},\sum_{k=1}^dP_k^2] = \hbar(P\cdot\alpha e^{i\alpha\cdot x}+ e^{i\alpha\cdot x}\alpha\cdot P)\;.
\end{equation*}
This implies
\begin{align*}
	[\omega_{N,t},	[e^{i\alpha\cdot x},\Hhb]]
	&=-\hbar P\cdot\alpha[\omega_{N,t},e^{i\alpha\cdot x}]- [\omega_{N,t},e^{i\alpha\cdot x}]\hbar P\cdot\alpha
	\\&\quad + [\omega_{N,t},P]\cdot\hbar\alpha e^{i\alpha\cdot x} + \hbar\alpha e^{i\alpha\cdot x}\cdot[\omega_{N,t},P]
	\;.
\end{align*}
Let us define the self-adjoint operators
\begin{equation*}
	D_\pm(t) :=  h_{\rm HF}(t)\mp \hbar P\cdot\alpha
	=h_{\rm HF}(t)\pm\hbar(i\hbar\nabla+ba)\cdot\alpha\;.
\end{equation*}
Then, we have
\begin{align*}
	i\hbar&\frac d{dt}\left([e^{i\alpha\cdot x},\omega_{N,t}]\right) +[\omega_{N,t},[X_t,e^{i\alpha\cdot x}]] 
	\\&= [h_{\rm HF}(t),[e^{i\alpha\cdot x},\omega_{N,t}]]+[\omega_{N,t},[\Hhb,e^{i\alpha\cdot x}]]
	\\&= D_+(t)[e^{i\alpha\cdot x},\omega_{N,t}] -[e^{i\alpha\cdot x},\omega_{N,t}]  D_-(t) 
	+[\omega_{N,t},P]\cdot\hbar\alpha e^{i\alpha\cdot x} + \hbar\alpha e^{i\alpha\cdot\bx}\cdot[\omega_{N,t},P]
	\;.
\end{align*}
Let us define the propagators $(\cU_\pm(t,s))_{s\in\R}$ associated to the differential operators $D_\pm(t)$
\begin{equation*}
	\begin{cases}
		i\hbar\partial_t\cU_\pm(t,s) = D_\pm(t)\cU_\pm(t,s), \\
		\cU_\pm(s,s)=\mathbb{I}\;,
	\end{cases}
\end{equation*}
so that by writing Duhamel formula applied to  $\cU_+^*(t,0)[e^{i\alpha\cdot x},\omega_{N,t}]\cU_-(t,0)$ and then composing on the left by $\cU_+(t,0)$ and on the right by $\cU_-^*(t,0)$, we get
\begin{align*}
	[e^{i\alpha\cdot x},\omega_{N,t}] &= \cU_+(t,0)[e^{i\alpha\cdot x},\omega_N]\cU_-^*(t,0)
	\\&\quad
	-i\int_0^t \di s \: \cU_+(t,s)\left\{  [\omega_{N,s},P]\cdot\alpha e^{i\alpha\cdot x} + \alpha e^{i\alpha\cdot x}\cdot[\omega_{N,s},P] \right\}\cU_-^*(t,s)
	\\&\quad
	+\frac i\hbar\int_0^t \di s \:  \cU_+(t,s)[\omega_{N,s},[X_s,e^{i\alpha\cdot x}]] \cU_-^*(t,s) \;.
\end{align*}
Finally, since $\cU_\pm(t,s)$ are unitary, the bounds in the trace norm \cref{eq-proof:semic-str_eix} hold for any $t\in\R_+$.

\smallskip

\item[\bf Treatment of the commutator with the magnetic momentum \cref{eq-proof:semic-str_moment}.]
Let us explicit the second term of the equality associated to the derivative of $[P,\omega_{N,t}]$.
On the one hand, using \cref{lemma:prop-magn-momentum} and the hypothesis that the magnetic field is constant
\begin{equation*}
	\left[\sum_{k=1}^d P_k^2,P\right]=2i\hbar b \Big(\sum_{k=1}^d(B_{kj} P_k)	\Big)_{1\leq j\leq d}\;.
\end{equation*}
On the other hand, we have the explicit expression
\begin{align*}
	[V,P]=i\hbar(\nabla V(x))\;.
\end{align*}
We introduce the propagator $\cU(t,s)$ associated to the self-adjoint operator $h_{\rm HF}(t)$ by
\begin{equation}\label{eq-def:propag-hHF}
	\begin{cases}
		i\hbar\partial_t\cU(t,s)=h_{\rm HF}(t)\cU(t,s),\\
		\cU(s,s)=\mathbb{I}\;.
	\end{cases}
\end{equation}
Then, we can write the derivative of $i\hbar\partial_t\left(\cU^*(t,0)[P,\omega_{N,t}]\cU(t,0)\right)$ and the Duhamel formula; afterward conjugating again by $\cU(t,0)$ we obtain
\begin{align*}
	[P,\omega_{N,t}] &= \cU(t,0)[P,\omega_N]\cU^*(t,0)
	\\&\quad -\frac i\hbar\int_0^t \di s \cU(t,s)\Big(\Big[\omega_{N,s},2i\hbar b\Big(\sum_{k=1}^dB_{kj} P_k	\Big)_{1\leq j\leq d}\Big]
	+i\hbar[\omega_{N,s},\nabla V(x)]
	\\&\qquad\qquad\qquad\qquad\quad
	+\left[\omega_{N,s},[W\ast\rho_s,P]\right]-\left[\omega_{N,s},[X_s,P]\right]\Big)\cU^*(t,s)
	\;.
\end{align*}
Then, by using that $\norm{\cU(t,s)}_\op=1$ , this leads to the bound
\begin{align*}
	\norm{[P,\omega_{N,t}]}_1 &\leq \norm{[P,\omega_N]}_1
	\\&\quad
	+2b \Big(\sum_{1\leq j<k\leq d}|B_{kj}|\Big)\int_0^t \di s \norm{\left[P,\omega_{N,s}\right]}_1
	+\int_0^t \di s \norm{\left[\omega_{N,s},\nabla V(x)\right]}_1
	\\&\quad +\frac1\hbar\int_0^{t} \di s\norm{\left[\omega_{N,s},[W\ast\rho_s,P]\right]}_1
	+\frac 1\hbar\int_0^{t} \di s \norm{\left[\omega_{N,s},[X_s,P]\right]}_1\;.
\end{align*}
Note that $W\ast\rho_t$ commutes with $ba(x)$, so
\begin{align*}
	\norm{[\omega_{N,t},[W\ast\rho_t,P]]}_1 &= \norm{[\omega_{N,t},[W\ast\rho_t,i\hbar\nabla]]}_1
	=\norm{[\omega_{N,t},i\hbar\nabla(w\ast\rho_t)]}_1
	\\&= \norm{\Big[\omega_{N,t},\hbar\int_{\R^d}\di p e^{i x\cdot p} p \hat{W}(p)\hat{\rho}_t(p)\Big]}_{\tr} 
	\\&\leq \hbar\left(\int_{\R^d}\di p |\hat{W}(p)|(1+\abs{p}^2)\right)\sup_{p\in \R^d}\frac{1}{1+\abs{\alpha}}\norm{[\omega_{N,t},e^{i \alpha\cdot x}]}\;.
\end{align*}
Therefore, we get the estimate \cref{eq-proof:semic-str_moment}.

\smallskip

\item[{\bf Treatment of the commutator with the multiplication by the monomials \cref{eq-proof:semic-str_x}.}]
Let us calculate the term $[\omega_{N,t},[\sum_{k=1}^d P_k^2,x^\beta]]$.
For any $j,k\in\{1,\ldots,d\}$ and any $\beta\in\N^d$
\begin{align*}
	[P_k^2,x_j^{\beta_k}] &= \Big[-\hbar^2\partial_{x_k}^2+2ib\hbar a_k(x)\partial_{x_j}+a_k(x)^2,x_j^{\beta_k}\Big]
	\\& = \delta_{jk} \hbar\Big(-\hbar\indic_ {\beta_k\geq 2} \beta_k(\beta_k-1) x_j^{\beta_k-2} +2ib\indic_ {\beta_k\geq 1} \beta_k  x_k^{\beta_k-1}a_k(x)\Big)\;.
\end{align*}
By the assumption on the form of $a_j$ and by the triangle inequality, one has
\begin{equation*}
\begin{split}
	\norm{	\left[\omega_{N,t},\Big[\sum_{k=1}^d P_k^2,x^\beta\Big]\right]}_1&\leq \hbar^2\abs{\beta(\beta-1)}\norm{[\omega_{N,t},x^{\beta-2n_d}]}_1 \\&\quad+2b\hbar\abs{\beta}\Big(\sum_{1\leq k\leq d}\normLp{\partial_{x_k}a_k}{\infty}{}\Big)\max\left(\norm{[\omega_{N,t},x^{\beta-n_d}]}_1,\norm{[\omega_{N,t},x^{\beta}]}_1\right)\,. 
\end{split}
\end{equation*}
Therefore, we apply Duhamel formula to $\cU(t,0)[\omega_{N,t},x^\beta]\cU(t,0)^*$ for the propagator $\cU(t,0)$ given by \cref{eq-def:propag-hHF}, we get the \cref{eq-proof:semic-str_x}.
\end{proof}

\section{Derivation of the Time-Dependent Hartree--Fock Equation (\cref{thm:hf})}\label{sec:validity-HF}
\begin{proof}[Sketch of the Proof of \cref{thm:hf}]
We follow the strategy of \cite{BPS-2014-MFevol} with some simplification due to \cite{BD23}. The first step is lifting the many-body evolution to Fock space, so that the tools of second quantization can be employed. The fermionic Fock space is defined as
\[
\mathcal{F} := \mathbb{C} \oplus \bigoplus_{n=1}^\infty L^2_\textnormal{a}(\Rbb^{dN}) \;.
\]
Elements of the Fock space are sequences $\Psi = (\Psi^{(n)})_{n=0}^\infty$ with $\Psi^{(n)} \in L^2_\textnormal{a}(\Rbb^{dN})$ such that
\[ \lVert \Psi \rVert^2 := \sum_{j=0}^\infty \lVert \Psi^{(n)}\rVert^2_{L^2(\Rbb^{dn})} < \infty \;.\]
On Fock space, the annihilation operators $a_x$ for $x \in \Rbb^d$ can be introduced as densely defined operators by
\[
(a_x \psi)^{(n)}(x_1,\ldots,x_N) := \sqrt{n+1} \psi^{(n+1)}(x,x_1,\ldots,x_n) \;.
\]
Moreover, for a one-particle wave function $f \in L^2(\Rbb^d)$, one defines the corresponding annihilation and creation operators, respectively, by
\[
a(f) := \int \overline{f(x)} a_x \di x \;, \qquad a^*(f) := a(f)^* \;.
\]
One may also introduce $a^*_x$, but this is only defined within inner products where it can be moved to the other argument as $a_x$, or when integrated with a one-particle wave function $f$. It is well-known that in the fermionic case $a(f)$ and $a^*(f)$ are bounded operators, and satisfy the canonical anticommutator relations (CAR), i.\,e., for $f,g \in L^2(\Rbb^d)$ we have
\[
\{a(f),a^*(g) \} := a(f) a^*(g) + a^*(g) a(f) = \langle f,g\rangle\;, \quad \{a(f),a(g)\} = 0 \;, \quad  \{a^*(f),a^*(g)\} = 0\;.
\]
If $O$ is an operator on the one-particle space $L^2(\Rbb^d)$, one defines
\[
\di\Gamma(O) \psi^{(n)} := \sum_{i=1}^n O_i \psi^{(n)}\;, 
\]
where $O_i$ acts on the $i$-th particle. If $O$ has a (distributional) integral kernel $O(x;y)$ one can verify that as a sesquilinear form
\[\di\Gamma(O) := \int O(x;y) a^*_x a_y \di x \di y\;.\]
An example is the particle number operator as the second quantization of the identity
\[
\Ncal = \di\Gamma(\mathds{1}) = \int a^*_x a_x \di x \;.
\]
The vector $\Omega := (1,0,0,\ldots) \in \mathcal{F}$ is called the vacuum vector; on this special vector one defines $a_x \Omega = 0$ for all $x \in \Rbb^d$. In particular it has vanishing particle number in the sense that $\Ncal \Omega = 0$.

The first important observation for us is that, when restricted to the subspace $L^2_\textnormal{a}(\Rbb^{dN})$ of Fock space, the operator 
\[
\mathcal{H}_{\hbar,b} := \di\Gamma(H_{\hbar,b}) + \frac{\lambda}{2} \int W(x-y) a^*_x a^*_y a_y a_x
\]
is identical to the operator $H^N_{\hbar,b}$ defined in \cref{eq:Nbodyham}; moreover the $N$-particle subspace $L^2_\textnormal{a}(\Rbb^{dN})$ is invariant under the action of $\mathcal{H}_{\hbar,b}$. Therefore, if we consider initial data $\Psi_N(0)$ in the $N$-particle subspace, the solution $\Psi_N(t) := e^{-i t\mathcal{H}_{\hbar,b}/\hbar} \Psi_N(0)$ of the Fock-space Schr\"odinger equation $i\hbar \partial_t \Psi_N(t) = \mathcal{H}_{\hbar,b} \Psi_N(t)$ agrees with the solution of the $N$-body Schr\"odinger equation \cref{eq:SE-dyn}. 

The second important observation is that Slater determinants can be written as the application of a unitary transformation in Fock space to the vacuum vector $\Omega$. In fact, if $\Psi_N(0)$ has the orbitals $\psi_n$, for $n = 1,\ldots, N$, we can define the unitary particle--hole transformation $R_N(0): \mathcal{F} \to \mathcal{F}$ by
\[
   R_N(0) := \prod_{n=1}^N \left( a^*(\psi_n) + a(\psi_n) \right) \;.
\]
One easily verifies that, up to a sign that can be absorbed in the ordering of the factors in the definition, we have
\[
R_N(0)\Omega = \prod_{n=1}^N a^*(\psi_n) \Omega = \Psi_N(0) \;.
\]
It is convenient to define the operators
\[ V_N := \sum_{n=1}^N \lvert\psi_n\rangle \langle \overline{\psi_n} \rvert\;, \qquad U_N := \mathds{1}- \sum_{n=1}^N \lvert \psi_n\rangle \langle \psi_n \rvert \]
(notice that $U_N$ is a projection while $V_N$ contains an additional complex conjugation).
One can then show that
\[
   R^*_N a_x R_N = (-1)^N(a(U_{N,x}) - a^*(V_{N,x})) \;, \qquad R^*_N a^*_x R_N = (-1)^N((a^*(U_{N,x}) - a(V_{N,x})) \;,
\]
where in terms of their (distributional) integral kernels we wrote $U_{N,x}(y) = U_N(y;x)$ and $V_{N,x}(y) = V_N(y;x)$. Analogously $R_N(t)$ is defined as the particle-hole transformation such that $R_N(t) \Omega$ is the Slater determinant associated to the orbitals obtainable from the spectral decomposition of the solution of the time-dependent Hartree--Fock equation \cref{eq:HF-dyn}. (The spectral decomposition is not unique, giving rise to a phase ambiguity of the Slater determinant, which may even be non-differentiable. It is well-known that this ambiguity can be resolved by decomposing the initial data to obtain initial orbitals and then evolving these orbitals by an orbital-wise Hartree--Fock equation \cite{lubich2008quantum}.) In the same way, employing the orbitals of the solution of the Hartree--Fock equation, one defines $U_{N,t}$, $V_{N,t}$, and the corresponding $U_{N,t,x}$, $V_{N,t,x}$.

\smallskip

The advantage of having introduced the Fock space formalism and in particular the particle-hole transformation is that we can now introduce the so-called fluctuation dynamics tracking the deviation between many-body and Hartree--Fock evolution by
\[
	\Ucal_N(t,s) := R^*_{N}(t) e^{-i(t-s)\Hcal_{\hbar,b}/\hbar} R_N(s) \;.
\]
By the method of \cite{BD23} it is straightforward to compute the generator of fluctuations
\[
	\Lcal_N(t) := (i \hbar \partial_t R^*_N(t)) R_N(t) + R^*_N(t) \Hcal_{\hbar,b} R_N(t) \;.
\]
More precisely, one is interested in computing $\Lcal_N(t)$ as a normal-ordered expression, i.e.\, one use the CAR to write all its summands with all creation operators to the left of all annihilation operators. Moreover, in view of the proof below it is sufficient to compute only the terms of $\Lcal_N(t)$ that do not commute with the particle number operator $\Ncal$. The obtained summands of $\Lcal_N(t)$ can be grouped into two categories, quadratic in creation or annihilation operators, and quartic in creation or annihilation operators. The key observation is that all the quadratic terms cancel out if the Hartree--Fock equation is inserted in the contribution of $(i \hbar \partial_t R^*_N(t)) R_N(t)$, independent of the choice of the one-particle operator (this was already observed in \cite{BPS-2014-MFevol-relativ} in the context of fermions with pseudo-relativistic kinetic energy). One is left with
\begin{equation}\label{eq:generator}
   \Lcal_N(t) = \Acal_N(t) + \Bcal_N(t) + \Ccal_N(t) + \Mcal_N(t) + \hc
\end{equation}
where
\begin{align}
   \Acal_N(t) & := \frac{1}{2N} \int \di x \di y W(x-y) a^*(U_{N,t,x}) a^*(U_{N,t,y}) a^*(V_{N,t,y}) a^*(V_{N,t,x})\;,\\
   \Bcal_N(t) & := \frac{1}{N} \int \di x \di y W(x-y) a^*(U_{N,t,x}) a^*(U_{N,t,y}) a^*(V_{N,t,x}) a(U_{N,t,y})\;,\\
   \Ccal_N(t) & := \frac{1}{N} \int \di x \di y W(x-y) a^*(U_{N,t,x}) a^*(V_{N,t,x}) a^*(V_{N,t,y}) a(V_{N,t,y})\;.
\end{align}
and $\Mcal_N(t)$ is commutes with the particle number operator, $[\Mcal_N(t),\Ncal] = 0$.

To estimate the difference of $\gamma^{(1)}_{\Psi_N(t)}$ and $\omega_{N,t}$ in terms of the fluctuation dynamics, one uses the fact that the one-particle reduced density matrix can also be computed as
\[\gamma^{(1)}_{\Psi_N(t)}(x;y) = \langle \Psi_N(t), a^*_y a_x \Psi_N(t) \rangle\]
and then uses the unitarity of the particle-hole transformation to obtain
\begin{equation}
\begin{split}
   \gamma^{(1)}_{\Psi_N(t)}(x;y) & = \langle R^*_N(t) e^{-i t \Hcal_{\hbar,b}/\hbar} R_N(0) \Omega, R^*_N(t) a^*_y a_x R_N(t) R^*_N(t) e^{-i t \Hcal_{\hbar,b}/\hbar} R_N(0) \Omega \rangle \\
   & = \langle \Ucal_N(t,0) \Omega, \Big( a^*(U_{N,t,y}) a(U_{N,t,x}) - a^*(\overline{V_{N,t,x}}) a(\overline{V_{N,t,y}}) + \langle \overline{V_{N,t,y}}, \overline{V_{N,t,x}}\rangle\\
   & \qquad \qquad \qquad \ \  + a^*(U_{N,t,y,}) a^*(\overline{V_{N,t,x}}) + a(\overline{V_{N,t,y}}) a(U_{N,t,x}) \Big)\Ucal_N(t,0) \Omega \rangle \;.
\end{split}
\end{equation}
Here one observes that $\langle \overline{V_{N,t,y}}, \overline{V_{N,t,x}}\rangle = \omega_{N,t}(x;y)$, the solution of the Hartree--Fock equation, which can be moved to the left side, leaving the remaining four terms, quadratic in creation or annihilation operators, to be estimated.

\smallskip

The Hilbert--Schmidt norm of the difference can be obtained by pairing $\gamma^{(1)}_{\Psi_N(t)}(x;y) - \omega_{N,t}(x;y)$ with the integral kernel of a Hilbert--Schmidt operator $O$ satisfying $\lVert O\rVert_\HS = 1$, and then taking the supremum over all such operators $O$. The details of the estimate are simpler than for the trace norm and have been shown in \cite[Proof of Theorem 2.1, Step 1]{BPS-2014-MFevol}.

\smallskip

The estimates to obtain the trace norm have also been shown in \cite[Proof of Theorem 2.1, Step 3]{BPS-2014-MFevol}, but since they are slightly more intricate, let us review the main steps. The trace norm of $\gamma^{(1)}_{\Psi_N(t)} - \omega_{N,t}$ is obtained by pairing with a trace-class operator $O$ of norm $\lVert O \rVert_\op = 1$ and taking the supremum over all such operators. The pairing is explicitly given by
\begin{equation}\label{eq:pairing}
\tr\big( O (\gamma^{(1)}_{\Psi_N(t)} - \omega_{N,t} )\big) = \textnormal{I} + \textnormal{II}\;,
\end{equation}
where
\begin{align*}
\textnormal{I} & := \langle \Omega, \Ucal^*_N(t,0) \left( \di\Gamma(U_{N,t} O U_{N,t}) - \di\Gamma(\overline{V_{N,t}} \overline{O^*} V_{N,t})\right) \Ucal_N(t,0) \Omega \rangle \\
\textnormal{II} & := 2 \Re \langle \Omega, U^*_N(t,0) \int \di r_1 \di r_2 (V_{N,t} O U_{N,t})(r_1;r_2) a_{r_1} a_{r_2} \Ucal_N(t,0) \Omega \rangle\;.
\end{align*}
The first line is easily estimated using the standard bound $\lvert \langle \Psi, \di\Gamma(A) \Psi \rangle \rvert \leq \lVert A \rVert_\op \langle \Psi, \Ncal \Psi \rangle$ for any $\Psi \in \mathcal{F}$ and any bounded operator $A$ acting on $L^2(\Rbb^d)$, followed by $\lVert U_{N,t} \rVert =1$ and $\lVert V_{N,t} \rVert = 1$:
\begin{equation}
   \lvert \textnormal{I} \rvert \leq C \lVert O\rVert \langle \Omega, \Ucal^*_N(t,0) \Ncal \Ucal_N(t) \Omega \rangle \;.
\end{equation}
The estimate of the second term is a little more involved if we want to get the optimal bound. We define an auxiliary generator by dropping the $\Acal_N(t)$--term from \cref{eq:generator}, i.\,e.,
\begin{equation}
   \widetilde{\Lcal}_N(t) := \Bcal_N(t) + \Ccal_N(t) + \Mcal_N(t) + \hc
\end{equation}
and we define $\Ucal^{(0)}_N(t,s)$ as the evolution generated by $\Ucal^*_N(t,0) \widetilde{\Lcal}_N(t) \Ucal_N(t,0)$, more explicitly the evolution solving the initial value problem
\[
i \hbar \partial_t \Ucal^{(0)}_N(t,s) = - \Ucal^*_N(t,0) \widetilde{\Lcal}_N(t) \Ucal_N(t,0) \Ucal^{(0)}_N(t,s) \;, \qquad \Ucal^{(0)}_N(s,s) = \mathds{1} \;.
\]
Next we define
\[
\Ucal^{(1)}_N(t,s) := \Ucal_N(t,s) \Ucal^{(0)}_N(t,s)
\]
and verify that its generator is
\[
\begin{split}
   \Lcal^{(1)}_N(t) & = \Lcal_N(t) - \Lcal^{(0)}_N(t) = \Acal_N(t) + \hc \\
    & = \frac{1}{2N} \int \di x \di y W(x-y) a^*(U_{N,t,x}) a^*(U_{N,t,y}) a^*(V_{N,t,y}) a^*(V_{N,t,x}) + \hc
\end{split}
\]
Note that $\Lcal^{(1)}_N(t)$ creates or annihilates four particles at a time, and therefore $[\Lcal^{(1)}_N(t),i^\Ncal] = 0$. By Duhamel's formula one concludes that $\Ucal^{(1)}_N(t,0) i^\Ncal \Ucal^{(1)}_N(t,0) = i^\Ncal$. Next, one expands
\begin{align}
 \textnormal{II} & = 2\Re \left( \textnormal{II}_1 + \textnormal{II}_2 + \textnormal{II}_3 \right)
 \end{align}
 with
 \begin{align}
 \textnormal{II}_1 & := \langle \Omega, \Ucal^{(1)*}_N(t,0) \int \di r_1 \di r_2 (V_{N,t} O U_{N,t})(r_1;r_2) a_{r_1} a_{r_2} \Ucal^{(1)}_N(t,0) \Omega \rangle \;, \\
 \textnormal{II}_2 & := \langle \Omega, \left( \Ucal^*_N(t,0) - U^{(1)*}_N(t,0) \right) \int \di r_1 \di r_2 (V_{N,t} O U_{N,t})(r_1;r_2) a_{r_1} a_{r_2} \Ucal^{(1)}_N(t,0) \Omega \rangle \;, \\
 \textnormal{II}_3 & := \langle \Omega, \Ucal^*_N(t,0) \int \di r_1 \di r_2 (V_{N,t} O U_{N,t})(r_1;r_2) a_{r_1} a_{r_2} \left( \Ucal_N(t,0) - \Ucal^{(1)}_N(t,0) \right) \Omega \rangle \;.
\end{align}
The key point of this expansion is the observation that $\textnormal{II}_1 = 0$; this can be seen by inserting $1= i^0 = i^\Ncal$ before the vacuum vector in the second argument of the inner product, commuting it through $\Ucal^{(1)}_N(t,0)$, pulling it to the left of the operators $a_{r_1} a_{r_2}$ where it becomes $i^{\Ncal + 2}$, commuting it through $\Ucal^{(1)*}_N(t,0)$, and having it act on the vacuum vector in the first argument of the inner product as $i^2 = -1$, so that $\textnormal{II}_1 = -\textnormal{II}_1$.

It remains to estimate $\textnormal{II}_2$ and $\textnormal{II}_3$; both work analogously, so we explain the estimate only for $\textnormal{II}_2$. We expand $\Ucal_N(t,0)$ by the Duhamel formula as
\[
\Ucal_N(t,0) - \Ucal^{(1)}_N(t,0) = - \frac{i}{\hbar} \int_0^t \di s \Ucal(t,s) \widetilde{\Lcal}_N(s) \Ucal^{(1)}_N(s,0)
\]
and insert this formula in $\textnormal{II}$ to get
\begin{align*}
   \textnormal{II}_2 & \leq \frac{4}{\hbar} \left\lvert \left\langle \Omega, \int_0^t \di s \Ucal^{(1)*}_N(s,0) (\Bcal_N(t) + \hc) \Ucal^*_N(t,s) \int \di r_1 \di r_2 (V_{N,t} O U_{N,t})(r_1;r_2) a_{r_1} a_{r_2}  \Ucal^{(1)}_N(t,0) \Omega \right\rangle \right\rvert \\
   & \quad + \frac{4}{\hbar} \left\lvert \left\langle \Omega, \int_0^t \di s \Ucal^{(1)*}_N(s,0) (\Ccal_N(t) + \hc) \Ucal^*_N(t,s) \int \di r_1 \di r_2 (V_{N,t} O U_{N,t})(r_1;r_2) a_{r_1} a_{r_2}  \Ucal^{(1)}_N(t,0) \Omega \right\rangle \right\rvert \;.
\end{align*}
The estimate for the term with $\Ccal_N(t)$ can be obtained similarly to the estimate for the term containing $\Bcal_N(t)$, which we shall now discuss. By expanding the potential into Fourier modes, $W(x-y) = \int \di p \hat{W}(p) e^{ip\cdot x} e^{-ip\cdot y}$, we can write
\begin{align*}
\Bcal_N(t) & = \frac{1}{N} \int \di x \di y W(x-y) a^*(U_{N,t,x}) a^*(U_{N,t,y}) a^*(V_{N,t,x}) a(U_{N,t,y}) \\
& = \frac{1}{N} \int \di p \hat{W}(p) \int \di x a^*(U_{N,t,x}) e^{ip\cdot x} a^*(V_{N,t,x}) \int \di y a^*(U_{N,t,y}) e^{ip\cdot y} a(U_{N,t,y})\\
& = \frac{1}{N} \int \di p \hat{W}(p) \int \di r_1 \di r_2 (U_{N,t} e^{ip\cdot x} V_{N,t})(r_1,r_2) a^*_{r_1} a^*_{r_2} \di\Gamma\left( U_{N,t} e^{ip\cdot x} \overline{U_{N,t}}\right) \;,
\end{align*}

where on the last line $x$ is to be read as a multiplication operator. Using the Cauchy--Schwartz inequality and the well-known estimates \cite[Lemma~3.1]{BPS-2014-MFevol} for $\Psi \in \mathcal{F}$ and $A$ a bounded or Hilbert--Schmidt operator
\begin{align*}
   \lVert \di\Gamma(A) \Psi \rVert & \leq \lVert A \rVert_\op \lVert \Ncal \Psi \rVert \\
   \lVert \int \di r_1 \di r_2 A(r_1;r_2) a_{r_1} a_{r_2} \Psi \rVert & \leq \lVert A \rVert_\HS \lVert \Ncal^{1/2} \Psi \rVert
\end{align*}
one obtains that
\begin{align*}
   \textnormal{II} & \leq \frac{C}{\hbar N} \int_0^t \di s \int \di p \lvert \hat{W}(p) \rvert  \lVert U_{N,t} e^{ip\cdot x} \overline{U_{N,t}} \rVert_\op \lVert U_{N,t} e^{ip\cdot x} V_{N,t} \rVert_\HS \lVert V_{N,t} O U_{N,t} \rVert_\HS \\
   & \qquad \times \langle \Ucal^{(1)*}_N(t,0) \Omega, \Ncal^2 \Ucal^{(1)}_t(t,0) \Omega \rangle \;.
\end{align*}
Two of these norms are straightforward to estimate, namely
\[
\lVert U_{N,t} e^{ip\cdot x} \overline{U_{N,t}} \rVert_\op \leq \lVert U_{N,t}\rVert_\op \lVert e^{ip\cdot x}\rVert_\op \lVert \overline{U_{N,t}} \rVert_\op \leq 1
\] and
\[
\lVert V_{N,t} O U_{N,t} \rVert_\HS \leq \lVert V_{N,t}\rVert_\HS \lVert O\rVert_\op \lVert U_{N,t} \rVert_\op \leq \lVert O \rVert_\op N^{1/2}\;.
\]
The semiclassical structure is crucial to estimate the third of these norms; in fact, recalling $U_{N,t} = 1 - \omega_{N,t}$ and $U_{N,t} V_{N,t} = 0$, by the H\"older inequality for Schatten norms we obtain
\begin{equation*}
\begin{split}
\lVert U_{N,t} e^{ip\cdot x} V_{N,t} \rVert_\HS^2 & = \lVert [U_{N,t}, e^{ip\cdot x}] V_{N,t} \rVert_\HS^2 \leq  \lVert [\omega_{N,t},e^{ip\cdot x}] \rVert_\HS^2
\lVert V_{N,t}\rVert_\op^2 \\
& = \lVert [\omega_{N,t},e^{ip\cdot x}]^* [\omega_{N,t},e^{ip\cdot x}] \rVert_\tr \leq  \lVert [\omega_{N,t},e^{ip\cdot x}] \rVert_\op \lVert [\omega_{N,t},e^{ip\cdot x}] \rVert_\tr \\
& \leq 2 \lVert \omega_{N,t}\rVert_\op \lVert e^{ip\cdot x}\rVert_\op \lVert [\omega_{N,t},e^{ip\cdot x}] \rVert_\tr = 2 \lVert [\omega_{N,t},e^{ip\cdot x}] \rVert_\tr\;.
\end{split}
\end{equation*}
This trace norm is exactly the one controlled by the propagated semiclassical commutator estimate \cref{eq:propag-semicl-struct}. (The corresponding bound with the momentum operator in the exponent is not needed for the many-body analysis, it is only required to be able to close the Gr\"onwall estimate in the proof of \cref{thm:propag-semicl-magn}.) We conclude that
\begin{align*}
   \textnormal{II} & \leq C \lVert O\rVert_\op \hbar^{-1/2} \langle \Ucal^{(1)*}_N(t,0) \Omega, \Ncal^2 \Ucal^{(1)}_t(t,0) \Omega \rangle \;.
\end{align*}
Finally, by a Gr\"onwall argument \cite[Theorem~3.2]{BPS-2014-MFevol} using similar estimates to those employed up to here, one proves that
there exist constants $c_1, c_2 > 0$ such that for all $t \in \Rbb$ we have for all $k\in\N$
\begin{align*}
\langle \Ucal^{(1)*}_N(t,0) \Omega, (\Ncal+1)^k \Ucal^{(1)}_N(t,0) \Omega \rangle \leq \exp(c_1 \exp(c_2 \lvert t\rvert)) \langle  \Omega, (\Ncal+1)^k \Omega \rangle
\end{align*}
uniformly in particle number $N$ and in $\hbar$. This completes the sketch of the proof; for details we refer to \cite{BPS-2014-MFevol} with the simplification of the computation of the generator due to \cite{BD23}.
\end{proof}
\newpage
\appendix

\section{Spectral Theory of Magnetic Schrödinger Operators}\label{sec:basic-prop-magnetic}

We assume that $a:\R^d\to\R^d$ with $a_j\in L^2_\loc(\R^d)$ for any $j\in\{1,\ldots,d\}$ and $V:\R^d\to\R$ with the decomposition $V=V_+-V_-$, such that $V_-\in L^p(\R^d)+L^\infty_\varepsilon(\R^d)$ where $p$ satisfies \cref{def:magnetic-Sch-op}
and such that $V_+\in L^1_\loc(\R^d)$ with the confining property $V_+(x)\to \infty$ as $\abs{x}\to \infty$.

We define $\mathfrak{q}_{\hbar,b}$ as the quadratic form that corresponds to $H_{\hbar,b}^{\min}$, given by on $\test{\R^d}$	
	\begin{equation}\label{def:quadratic-form}
		u\mapsto\int_{\R^d}\abs{(i\hbar\nabla+ba(x))u(x)}^2 \di x +\int_{\R^d}V(x)\abs{u(x)}^2 \di x		\;.
	\end{equation}

 \begin{lemma}\label{lemma:quadr-closable-Friedrich}
 	Under the above assumption, the quadratic form $\mathfrak{q}_{\hbar,b}$ defined in \cref{def:quadratic-form} is closable, with the closure's domain of form
    \begin{equation*}
	 	\dom(\mathfrak{q}_{\hbar,b}):=\{ u\in L^2(\R^d) \: :\: (-i\hbar\nabla-ba(x))u\in L^2(\R^d), \: \sqrt{V_+}u\in L^2(\R^d)\}\;,
	 \end{equation*}

    and admits a Friedrichs extension $H_{\hbar,b}$ with domain
 	\begin{align*}
 		\dom(H_{\hbar,b}):=\{ u\in L^2(\R^d) \: :\: 
 		u\in\dom(\mathfrak{q}_{\hbar,b}) ,\quad (i\hbar\nabla+ba(x))^2u+V(x)u\in L^2(\R^d)\}
        \;.
 	\end{align*}
 \end{lemma}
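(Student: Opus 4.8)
The plan is to realise $\qhb$ as a form perturbation of a manifestly closable nonnegative form, and then to invoke the KLMN theorem together with Kato's first representation theorem.

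First I would split, on $\test{\R^d}$, the form as $\qhb = \mathfrak{q}_+ - \mathfrak{q}_-$, where $\mathfrak{q}_+(u) := \int_{\R^d}\abs{(i\hbar\nabla+ba(x))u(x)}^2\di x + \int_{\R^d}V_+(x)\abs{u(x)}^2\di x$ and $\mathfrak{q}_-(u):=\int_{\R^d}V_-(x)\abs{u(x)}^2\di x$. Since $a_j\in L^2_\loc(\R^d)$ and $V_+\in L^1_\loc(\R^d)$, for any $u\in L^2(\R^d)$ the distribution $(i\hbar\nabla+ba)u$ and the function $\sqrt{V_+}\,u$ are well-defined, and the functional $u\mapsto\mathfrak{q}_+(u)\in[0,+\infty]$ is lower semicontinuous on $L^2(\R^d)$. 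This is the standard ingredient showing that $\dom(\qhb)$ as in the statement is complete for the natural form norm and that $\mathfrak{q}_+$ restricted to $\test{\R^d}$ is closable, with closure the closed nonnegative form carried by $\dom(\qhb)$ and given by the same expression; see \cite{Lieb-Seiringer-2010Stability,Lewin-TS_et_MQ}.

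The crucial step is to show that $V_-$ is infinitesimally $\mathfrak{q}_+$-form-bounded: for every $\varepsilon>0$ there should be $C_\varepsilon>0$ with $\mathfrak{q}_-(u)\leq\varepsilon\,\mathfrak{q}_+(u)+C_\varepsilon\normLp{u}{2}{(\R^d)}^2$ for all $u\in\dom(\qhb)$. The diamagnetic inequality (\cref{lemma:diamagnetic-ineq}) is essential here: from $\abs{i\hbar\nabla\abs{u}}\leq\abs{(i\hbar\nabla+ba)u}$ pointwise one gets $\normLp{\nabla\abs{u}}{2}{(\R^d)}^2\leq\hbar^{-2}\mathfrak{q}_+(u)$, and since $V_-\geq 0$ one may replace $u$ by $\abs{u}\in H^1(\R^d)$. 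The claim then reduces to the classical fact that a potential in $L^p(\R^d)+L^\infty_\varepsilon(\R^d)$, with $p$ as in \cref{eq:cond-p}, is infinitesimally form-bounded relative to $-\Delta$: one decomposes $V_-=V_-^{(1)}+V_-^{(2)}$ with $V_-^{(1)}\in L^p$ and $\normLp{V_-^{(2)}}{\infty}{(\R^d)}$ as small as desired, absorbs the bounded piece into $C_\varepsilon\normLp{u}{2}{(\R^d)}^2$, and treats the $L^p$ piece by H\"older's inequality together with the Sobolev embeddings $H^1\hookrightarrow L^{2d/(d-2)}$ for $d\geq 3$, $H^1\hookrightarrow L^q$ for every $q<\infty$ for $d=2$, and $H^1\hookrightarrow L^\infty$ for $d=1$, followed by a Young/interpolation step that makes the coefficient of $\normLp{\nabla\abs{u}}{2}{(\R^d)}^2$ arbitrarily small. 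All constants depend on $\hbar$ only through the fixed $\hbar_0$.

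Given these two steps the rest is standard abstract form theory. By the KLMN theorem $\qhb$ is closable on $\test{\R^d}$, its closure is closed, bounded below, and has form domain $\dom(\qhb)$, so it determines a unique self-adjoint semibounded operator; since $\test{\R^d}$ is a form core, this operator is the Friedrichs extension $\Hhb$ of $H_{\hbar,b}^{\min}$. Finally, to identify $\dom(\Hhb)$ I would use the first representation theorem: $u\in\dom(\Hhb)$ if and only if $u\in\dom(\qhb)$ and $v\mapsto\qhb(u,v)$ extends to a bounded linear functional $\prodscal{g}{\cdot}$ on $L^2(\R^d)$, in which case $\Hhb u=g$; testing against $v\in\test{\R^d}$ identifies $g$ with the distribution $(i\hbar\nabla+ba)^2u+Vu$, and the reverse inclusion follows by approximating an arbitrary $v\in\dom(\qhb)$ by $\test{\R^d}$-functions in the form norm and integrating by parts. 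I expect the infinitesimal form-boundedness of $V_-$ in the presence of the magnetic field, where the dimension-dependent hypothesis on $p$ and the diamagnetic inequality are both used, to be the only genuinely non-routine point; the remainder is bookkeeping with KLMN and the representation theorems.
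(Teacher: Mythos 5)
Your proposal is correct and follows essentially the same route as the paper's proof: decompose the form into the nonnegative magnetic-plus-$V_+$ part and the $V_-$ perturbation, use the diamagnetic inequality to transfer the standard infinitesimal form bound of $V_-\in L^p+L^\infty_\varepsilon$ relative to $-\Delta$ to the magnetic kinetic form, and then conclude by the KLMN/representation theorem (the paper invokes this as ``Milgram's theorem''). The only difference is one of detail: you spell out the Sobolev/H\"older argument for the relative form bound and the identification of $\dom(\Hhb)$ via the first representation theorem, which the paper states more tersely.
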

 
 
 \begin{proof}
 	Since $V$ in bounded from below, the quadratic form $\mathfrak{q}_{\hbar,0,V_+}:u\mapsto \int_{\R^d}\abs{-i\hbar\nabla u}^2\di x+\int_{\R^d}V_+\abs{u}^2\di x$ is bounded from below is closed and as the domain
 	\begin{equation*}
 		\dom(\mathfrak{q}_{\hbar,0,V_+})=\{ u\in H^1(\R^d), \: \sqrt{V_+}u\in L^2(\R^d)\}\;.
 	\end{equation*}
 	Moreover, the form operator by $\mathfrak{q}_{-V_-}:u\mapsto -\int_{\R^d}V_-\abs{u}^2\di x$ is relatively bounded by $-\Delta$, i.\,e., for any $\varepsilon>0$, there exists $C_\varepsilon>0$ such that for any $u\in H^1(\R^d)$
 	\begin{equation*}
 		\abs{\int_{\R^d}V_-(x)\abs{u(x)}^2 \di x}\leq \varepsilon\int_{\R^d}\abs{-\Delta u(x)}^2\di x +C_\varepsilon\int_{\R^d}\abs{u(x)}^2\di x\;.
 	\end{equation*}
 	Then, by the diamagnetic inequality \eqref{eq:diamagnetic-ineq_dir}, the quadratic form $\mathfrak{q}_{\hbar,0,V_+}:u\mapsto \int_{\R^d}\abs{(i\hbar\nabla+b)u}^2\di x+\int_{\R^d}V_+\abs{u}^2\di x$ is bounded from below, is closed, and has the domain
 	\begin{equation*}
 		\dom(\mathfrak{q}_{\hbar,b,V_+})=\{ u\in L^2(\R^d) \: :\: (i\hbar\nabla+ba(x))u\in L^2(\R^d), \: \sqrt{V_+}u\in L^2(\R^d)\}\;,
 	\end{equation*}
 	and the quadratic form $\mathfrak{q}_{-V_-}$ is also relatively bounded by $(-i\hbar\nabla-ba(x))^2$
 	\begin{equation*}
 		\abs{\int_{\R^d}V_-(x)\abs{u(x)}^2 \di x}\leq \varepsilon\int_{\R^d}\abs{(i\hbar\nabla+b a(x))u(x)}^2\di x +C_\varepsilon\int_{\R^d}\abs{u(x)}^2\di x\;.
 	\end{equation*}
 	By Milgram's theorem, there exists a unique self-adjoint operator $H_{\hbar,b}$ on $L^2(R^d)$ with
 	\begin{align*}
 		\dom(H_{\hbar,b}):=\{ u\in L^2(\R^d) \: :\: 
 		u\in\dom(\mathfrak{q}_{\hbar,b}) ,\ (i\hbar\nabla+ba(x))^2u+V(x)u\in L^2(\R^d)\}
 	\end{align*}
    as domain.
 \end{proof}
 %
 
 
\begin{lemma}\label{lemma:compact-res}
	Under the above assumption, the operator $H_{\hbar,b}=(-i\hbar\nabla-ba(x))^2+V(x)$ on $L^2(\R^d)$ has a compact resolvent.
\end{lemma}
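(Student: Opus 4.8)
The plan is to reduce the statement to compactness of the resolvent $(\Hhb+\lambda)^{-1}$ for one large $\lambda>0$ (equivalently, to $\Hhb$ having compact resolvent), which is legitimate since $\Hhb$ is self-adjoint and bounded below by \cref{lemma:quadr-closable-Friedrich}. The idea I would follow is to dominate, at the level of integral kernels, the magnetic resolvent by the resolvent of the non-magnetic Schrödinger operator $H_0:=-\hbar^2\Delta+V$, whose confining potential makes it have compact resolvent, and then to transfer compactness by a domination principle for integral operators. Concretely this splits into three steps: (i) a kernel bound $|(\Hhb+\lambda)^{-1}(x,y)|\le(H_0+\lambda)^{-1}(x,y)$ a.e., coming from the diamagnetic inequality; (ii) the fact that $(H_0+\lambda)^{-1}$ is a positivity-preserving \emph{compact} operator; (iii) a domination principle transferring compactness from $(H_0+\lambda)^{-1}$ to $(\Hhb+\lambda)^{-1}$.

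For step (i) I would first upgrade the diamagnetic inequality of \cref{lemma:diamagnetic-ineq} from the free magnetic Laplacian to the full operator: by the standard Trotter/Kato arguments for Schrödinger semigroups one gets $|e^{-t\Hhb}(x,y)|\le e^{-tH_0}(x,y)$ a.e.\ for every $t>0$, where $H_0$ is the form sum (well defined because $V_-$ is infinitesimally form-bounded relative to $-\hbar^2\Delta$, exactly as in the proof of \cref{lemma:quadr-closable-Friedrich} with $b=0$); integrating against $e^{-\lambda t}\,\di t$ yields the asserted resolvent kernel bound. For step (ii): $H_0$ generates a positivity-preserving semigroup (each factor $e^{t\hbar^2\Delta}$, $e^{-tV_+}$, $e^{tV_-}$ is positivity preserving, then Trotter), so $(H_0+\lambda)^{-1}$ is positivity preserving; and $H_0$ has compact resolvent because its form domain is contained in $H^1(\R^d)$ and the confining property $V_+(x)\to+\infty$ gives tightness, via $\int_{|x|>R}|u|^2\,\di x\le(\inf_{|x|>R}V_+)^{-1}\int V_+|u|^2\,\di x$, so bounded sets in the form graph norm are bounded in $H^1(\R^d)$ and tight, hence precompact in $L^2(\R^d)$ by the Rellich--Kondrachov theorem on balls. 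Step (iii) is the classical fact that a bounded operator on $L^2(\R^d)$ whose integral kernel is dominated in modulus a.e.\ by the non-negative kernel of a compact operator is itself compact; applied to $(\Hhb+\lambda)^{-1}$ and $(H_0+\lambda)^{-1}$ this finishes the argument.

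The step I expect to be the main obstacle — and the reason for taking the kernel-domination route rather than a more direct one — is the passage from control of $|u|$ to control of $u$ itself. The diamagnetic inequality only tells us that a form-bounded sequence $(u_n)$ has $(|u_n|)$ bounded in $H^1(\R^d)$, so the naive argument ``the form domain of $\Hhb$ embeds compactly into $L^2$'' actually relies on the magnetic Rellich--Kondrachov embedding for vector potentials merely of class $L^2_\loc$, which is standard but not immediate. The kernel-domination route above avoids this altogether, at the price of invoking the diamagnetic inequality for semigroups with potential (step (i)) and the domination principle for integral operators (step (iii)); both are classical, and for complete proofs of these ingredients we refer to \cite{Lieb-Seiringer-2010Stability,Lewin-TS_et_MQ,Frank-Laptev-Weidl-2023SchrLT}.
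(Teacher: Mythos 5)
Your proposal is correct, but it follows a genuinely different route from the paper's. The paper argues directly at the level of sequences: for $u_n\rightharpoonup 0$ it sets $v_n:=(\Hhb+i)^{-1}u_n$, uses the pointwise diamagnetic inequality \cref{eq:diamagnetic-ineq_dir} to bound $(\abs{v_n})_n$ in $H^1(\R^d)$, invokes Rellich--Kondrachov on balls, and controls the tails with the confining part $V_+$; no semigroup domination and no operator-theoretic domination principle appear there. You instead dominate the full resolvent by the non-magnetic one, $\abs{(\Hhb+\lambda)^{-1}u}\le(-\hbar^2\Delta+V+\lambda)^{-1}\abs{u}$, via a Trotter--Kato upgrade of the diamagnetic inequality to semigroups with potential, prove compactness of the non-magnetic resolvent by the standard confinement argument, and then transfer compactness through the Dodds--Fremlin/Pitt domination theorem on $L^2$. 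What your route buys is exactly the point you flag: it bypasses the delicate passage from compactness of the moduli $(\abs{v_n})_n$ to strong convergence of $v_n$ itself; the direct argument really needs, in addition to the $H^1$-bound on $\abs{v_n}$, the boundedness of $(-i\hbar\nabla-ba)v_n$ in $L^2$ and a magnetic Rellich-type local compactness for $a\in L^2_\loc$ (otherwise oscillating sequences with fixed modulus are not excluded), and this is the step the paper's short write-up leans on implicitly. The price of your route is heavier imported machinery: note that \cref{lemma:diamagnetic-ineq} as stated in the paper dominates only the resolvent kernel of the kinetic part, so your step (i) --- the semigroup bound $\abs{e^{-t\Hhb}u}\le e^{-t(-\hbar^2\Delta+V)}\abs{u}$ for the form-sum realizations with $a\in L^2_\loc$, $V_+\in L^1_\loc$ --- and the $L^2$ domination-implies-compactness theorem are genuine extra ingredients that would need to be quoted precisely (both are classical, but neither is part of the paper's toolkit, and the domination theorem holds on $L^p$ only for $1<p<\infty$, so it should be stated for $L^2$ as you do). Both approaches are legitimate: yours is the more robust and standard abstract argument, the paper's the more elementary and self-contained one.
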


We provide a pedestrian proof analogous to the one in \cite[Theorem 5.31]{Lewin-TS_et_MQ}.

\begin{proof}
	Let $(u_n)_{n\in\N}\subset L^2(\R^d)$ such that $u_n\rightharpoonup 0$ weakly in $L^2(\R^d)$. Let us show that the sequence defined by $v_n:=(H_{\hbar,b}+i)^{-1}u_n\to 0$ converges in  $L^2(\R^d)$.
	
By definition, each term $v_n\in\dom(H_{\hbar,b})$, in particular $(\sqrt{V_+}u)_{n\in\N}\subset L^2(\R^d)$, is bounded.

 By the diamagnetic inequality \eqref{eq:diamagnetic-ineq_dir} we have
\begin{align*}
	\int_{\R^d}\abs{(i\hbar\nabla)\abs{v_n}(x)}^2 \di x\leq \int_{\R^d}\abs{(i\hbar\nabla+ba(x))v_n(x)}^2 \di x +\int_{\R^d}\abs{V(x)}\abs{v_n(x)}^2 \di x\;.
\end{align*}
Furthermore, by the inclusion $\dom(H_{\hbar,b})\subset	\dom(\mathfrak{q}_{\hbar,b})$, one has  that $(\mathfrak{q}_{\hbar,b}(v_n))_{n\in\N}$ is uniformly bounded, too. Then the sequence $(\abs{v_n})_{n\in\N}\subset H^1(\R^d)$ is bounded. By Rellich--Kondrachov, for any $\Omega\subset\R^d$ and any $s>0$, the injection $H^s(\R^d)\hookrightarrow L^2(\Omega)$ is continuous; therefore $(\abs{v_n})_{n\in\N}\subset L^2(B(0,R))$ is bounded for any $R>0$.
Eventually we take $n\to \infty$, then $R\to 0$ on the right-hand side of the bound
\begin{align*}
	\int_{\R^d}\abs{v_n(x)}^2\di x \leq \int_{B(0,R)}\abs{v_n(x)}^2\di x + \Big(\inf_{B(0,R)^c}V_+\Big)^{-1}\int_{\R^d}V_+(x)\abs{v_n(x)}^2\di x\;.
\end{align*}
This provides the limit $v_n:=(H_{\hbar,b}+i)^{-1}u_n\to 0$ in $L^2(\R^d)$.
\end{proof}

As a consequence of this lemma, by the spectral theorem, the self-adjoint operator $H_{\hbar,b}$ has  discrete spectrum with eigenvalues $(\lambda_n)_{n\in\N}$ of finite multiplicities
\[ \lambda_1\leq\lambda_1\leq\lambda_2\leq \ldots \lambda_n\to+\infty .\]

\section{Proof of \cref{thm:cb-loc-non-critical-cond}}\label{sec:proof-cb-loc-non-critical-cond}

\begin{proof}[Proof of \cref{thm:cb-loc-non-critical-cond}]
    
By the regularity of $V$ and the non-criticality assumption \cref{eq:non-critical-cond}, there exists $\varepsilon>0$ such that for any $x\in B(0,2R)$ and any $E\in[-\varepsilon,\varepsilon]$,
\begin{equation}
	\abs{V(x)-E}+\abs{\nabla V(x)}^2+\hbar\geq \frac c2.
\end{equation}
To simplify the proof let us assume that $\varepsilon\leq 1$.
Let $g_-\in\test{\R,[0,1]}$ such that $\supp(g_-)\subset[-\varepsilon,0]$ such that $g_-=1$ on $[-\tfrac\varepsilon2,0]$, let $g_+\in\test{\R,[0,1]}$ such that $g_-(\Hhb)+g_+(\Hhb)=\indic_{(-\infty,0]}(\Hhb)$.
Recall that one can write
\begin{align*}
	[\indic_{(-\infty,0)}(\Hhb),\varphi P_j^r ] 
	&= \indic_{(-\infty,0]}(\Hhb)\varphi P_j^r 
	\indic_{(-\infty,0]}(\Hhb)^\perp-\indic_{(-\infty,0]}(\Hhb)^\perp\varphi P_j^r \indic_{(-\infty,0]}(\Hhb)
	\\&= \indic_{(-\infty,0]}(\Hhb)\varphi P_j^r 
	\indic_{(0,+\infty)}(\Hhb)-\indic_{(0,+\infty)}(\Hhb)\varphi P_j^r \indic_{(-\infty,0]}(\Hhb)\;.
\end{align*}
Let us treat the first term (we treat likewise the second one). 
Let $f\in\test{[-2\varepsilon,\varepsilon]}$, such that $0\leq f\leq 1$ and such that $f=1$ on $[-\varepsilon,0]$.
\begin{align*}
	\indic_{(-\infty,0]}(\Hhb)\varphi P_j^r \indic_{(0,+\infty)}(\Hhb)
	&=(g_+(\Hhb)+g_-(\Hhb))\varphi P_j^r \indic_{(0,+\infty)}(\Hhb)
	\\&= [g_+(\Hhb),\varphi P_j^r ]\indic_{(0,+\infty)}(\Hhb)
	\\&\quad+ g_-(\Hhb)[f(\Hhb),\varphi P_j^r ]\indic_{(\varepsilon,+\infty)}(\Hhb)
	\\&\quad+g_-(\Hhb)\varphi P_j^r \indic_{[0,\varepsilon]}(\Hhb)\;.
\end{align*}
The trace norm of the first two terms in the right-hand side of the equality are bounded with \cref{lemma:aux-loc_3} applied to $f=g_\pm$.
\begin{align*}
	\norm{[g_+(\Hhb),\varphi P_j^r ] \indic_{(0,+\infty)}(\Hhb)}_1+\norm{g_-(\Hhb)[f(\Hhb),\varphi P_j^r ]\indic_{(\varepsilon,+\infty)}(\Hhb)}_1\leq C\crochetjap{b}^r\hbar^{1-d}\;.
\end{align*}
Let us focus on the term $g_-(\Hhb)\varphi P_j^r \indic_{[0,\varepsilon]}(\Hhb)$. We define the dyadic decomposition of the real line
\begin{align*}
	\forall n\in\N,\quad \chi_{n,\hbar}^-(t):=
	\begin{cases}
		\indic_{[-\hbar,0]}(t) &\text{if}\ n=0,\\
		\indic_{[-4^n\hbar,-4^{n-1}\hbar]}(t) &\text{if}\ n\geq 1\;,
	\end{cases}
		\quad\text{ and }\quad  \chi_{n,\hbar}^+(t):= \chi_{n,\hbar}^-(-t)\;.
\end{align*}
Let $N_\hbar\in\N$ large enough so that
\begin{align*}
	g_-(\Hhb) = \sum_{n=0}^{N_\hbar}g_-(\Hhb)\chi_{n,\hbar}^-(\Hhb),\quad
	\indic_{[0,\varepsilon]}(\Hhb) =\sum_{m=0}^{N_\hbar} \indic_{[0,\varepsilon]}(\Hhb)\chi_{m,\hbar}^+(\Hhb)\;.
\end{align*}
In particular,
\begin{align*}
	\norm{g_-(\Hhb)\varphi P_j^r\indic_{[0,\varepsilon]}(\Hhb)}_1\leq \sum_{m=0}^ {N_\hbar}\sum_{n=0}^ {N_\hbar}\norm{\chi_{n,\hbar}^-(\Hhb)\varphi P_j^r  \chi_{m,\hbar}^+(\Hhb)}_1 \;.
\end{align*}
Let us separate the sum into several sums:
\begin{itemize}
	\item[(i)] when $m=n=0$,
	\item[(ii)] when $m=0$ and $1\leq n\leq N_\hbar$, or symmetrically $n=0$ and $1\leq m\leq N_\hbar$,
	\item[(iii)] $m,n\geq 1$ such that $m\geq n$ or $n\leq m$,
\end{itemize}
and let us show that the double series is summable and bounded by $\hbar^{1-d}$ up to a constant.

\smallskip

Consider case (i), $n=m=0$. By the Cauchy-Schwartz inequality, the cyclicity of the trace and the magnetic Weyl law in the intervals $[-\hbar,0]$ and $[0,\hbar]$ (\cref{thm:WL-finite-int})
\begin{align*}
	\norm{\chi_{0,\hbar}^-(\Hhb)\varphi \chi_{0,\hbar}^+(\Hhb)}_1\leq \sqrt{\tr\left(\sqrt{\varphi}\indic_{[-\hbar,0]}(\Hhb)\right)}\sqrt{\tr\left(\sqrt{\varphi}\indic_{[0,\hbar]}(\Hhb)\right)}
	\leq C\hbar^{1-d} \;.
\end{align*}
Moreover, by considering $f_\hbar\in\test{\R}$ such that $\supp(f_\hbar)\subset\supp(\chi_{0,\hbar}^+)$ 
\begin{align*}
	&\norm{\chi_{0,\hbar}^-(\Hhb)\varphi P_j\chi_{0,\hbar}^+(\Hhb)}_1 = \norm{\chi_{0,\hbar}^-(\Hhb)\varphi P_j f_\hbar(\Hhb)\chi_{0,\hbar}^+(\Hhb)}_1
	\\&\quad\leq \norm{\chi_{0,\hbar}^-(\Hhb) [\varphi P_j ,f_\hbar(\Hhb)]\chi_{0,\hbar}^+(\Hhb)}_1 +\norm{\chi_{0,\hbar}^-(\Hhb) f_\hbar(\Hhb)\varphi P_j\chi_{0,\hbar}^+(\Hhb)}_1 \;.
\end{align*}
We control the first right-hand side term by the H\"older inequality, by the functional calculus and \cref{lemma:aux-loc_3}
\begin{align*}
	\norm{\chi_{0,\hbar}^-(\Hhb) [\varphi P_j ,f_\hbar(\Hhb)]\chi_{0,\hbar}^+(\Hhb)}_1&\leq \norm{\chi_{0,\hbar}^-(\Hhb)}_{\op}\norm{[\varphi P_j ,f_\hbar(\Hhb)]}_1\norm{\chi_{0,\hbar}^+(\Hhb)}_{\op}
	\\&\leq C\crochetjap{b}^r\hbar^{1-d}\;.
\end{align*}
Furthermore, by the cyclicity of the trace and by \cref{thm:WL-finite-int} instead of \cref{lemma:aux-loc_3}
\begin{align*}
	\norm{\chi_{0,\hbar}^-(\Hhb) f_\hbar(\Hhb)\varphi P_j\chi_{0,\hbar}^+(\Hhb)}_1 &\leq \norm{\chi_{0,\hbar}^-(\Hhb)}_\op\norm{\varphi f_\hbar(\Hhb)}_1\norm{\chi_{0,\hbar}^+(\Hhb)}_\op
	\\&\leq C\hbar^{1-d}\;.
\end{align*}
Hence, there exists $C>0$ such that
\begin{align*}
	\norm{\chi_{0,\hbar}^-(\Hhb) \varphi P_j\chi_{0,\hbar}^+(\Hhb)}_1\leq C\crochetjap{b}^r\hbar^{1-d}.
\end{align*}

For the cases (ii), (iii) , $m\geq n$ and $n\geq m$, let us assume that $m\geq n$. We show that 
\begin{equation}\label{eq-proof:non-critical-cond_dyadic}
	\norm{\chi_{n,\hbar}^\mp(\Hhb)\varphi P_j^r \chi_{m,\hbar}^\pm(\Hhb)}_1
	\leq  \frac C{4^{\frac {3m}4-\frac n2}}\crochetjap{b}^{2+r}\hbar^{1-d} \;.
\end{equation}

By the same calculation in \cite{Fournais-Mikkelsen-2020}, one has
\begin{equation}\label{eq-proof:non-critical-cond_1}
	\norm{\chi_{n,\hbar}^\mp(\Hhb)\varphi P_j^r \chi_{m,\hbar}^\pm(\Hhb)}_1
	\leq  \frac 1{4^{2(m-1)}\hbar^2}\norm{\chi_{n,\hbar}^\mp(\Hhb)[[\varphi P_j^r,\Hhb],\Hhb]\chi_{m,\hbar}^\pm(\Hhb)}_1 \;.
\end{equation}
%
	 The idea consists in inserting $(\Hloc+2^{2n-1}\hbar)$ and the associated resolvent $(\Hloc+2^{2n-1}\hbar)^{-1}$ thanks the functional calculus and the fact that $-2^{2n-1}\hbar\in[-4^{n}\hbar,-4^{(n-1)}\hbar]$. By inserting $(\Hhb+2^{2n-1}\hbar)$ and the inverse, by the functional calculus and the inequality
	 \begin{equation*}
	 	\forall t\in\R,\quad(t+2^{2n-1}\hbar)^{-1}\chi_{m,\hbar}^+(t)\leq (2^{2m-1}\hbar+2^{2n-1}\hbar)^{-1}\chi_{m,\hbar}^+(t)
	 \end{equation*}
	 one has
	 \begin{align*}
	 	&\norm{\chi_{n,\hbar}^-(\Hhb)\varphi P_j^r \chi_{m,\hbar}^+(\Hhb)}_1
	 	\\&\qquad=\norm{\chi_{n,\hbar}^-(\Hhb)\varphi P_j^r (\Hhb+2^{2n-1}\hbar)(\Hhb+2^{2n-1}\hbar)^{-1}\chi_{m,\hbar}^+(\Hhb)}_1
	 	\\&\qquad\leq \frac{1}{2^{2m-1}\hbar+2^{2n-1}\hbar}\norm{\chi_{n,\hbar}^-(\Hhb)\varphi P_j^r(\Hhb+2^{2n-1}\hbar)\chi_{m,\hbar}^+(\Hhb)}_1 \;.
	 	\end{align*}
	 By the triangle inequality, the inequality
	  \begin{equation*}
	 	\forall t\in\R,\quad\chi_{n,\hbar}^-(t)(t+2^{2n-1}\hbar)\leq( 2^{2(n-1)}\hbar)\chi_{n,\hbar}^+(t)\;,
	 \end{equation*}
	 and the functional calculus, one has
	 \begin{align*}
	 	&\norm{\chi_{n,\hbar}^-(\Hhb)\varphi P_j^r(\Hhb+2^{2n-1}\hbar)\chi_{m,\hbar}^+(\Hhb)}_1 \\&\quad\leq 
	 	\norm{\chi_{n,\hbar}^-(\Hhb)[\Hhb,\varphi P_j^r]\chi_{m,\hbar}^+(\Hhb)}_1 +	2^{2n-1}\hbar\norm{\chi_{n,\hbar}^-(\Hhb) \varphi P_j^r \chi_{m,\hbar}^+(\Hhb)}_1
	 \end{align*}
	 Thus,
	 \begin{align*}
	 	\norm{\chi_{n,\hbar}^-(\Hhb)\varphi P_j^r \chi_{m,\hbar}^+(\Hhb)}_1
	 	&\leq  \frac 1{2^{2(m-1)}\hbar}\norm{\chi_{n,\hbar}^-(\Hhb)[\varphi P_j^r,\Hhb]\chi_{m,\hbar}^+(\Hhb)}_1	\;.
	 \end{align*}
	 By iterating the same procedure but with $[\varphi P_j^r,\Hhb]$ instead of $\varphi P_j^r$,
	 \begin{align*}
	 	\norm{\chi_{n,\hbar}^-(\Hhb)[\varphi P_j^r,\Hhb]\chi_{m,\hbar}^+(\Hhb)}_1\leq  \frac 1{2^{2(m-1)}\hbar}\norm{\chi_{n,\hbar}^-(\Hhb)[[\varphi P_j^r,\Hhb],\Hhb]\chi_{m,\hbar}^+(\Hhb)}_1
	 \end{align*}
	 and one gets \cref{eq-proof:non-critical-cond_1}.

Let us estimate the right-hand side trace norm with the double commutator and show that
\begin{equation}\label{eq-proof:non-critical-cond_1-bis}
	\norm{\chi_{n,\hbar}^\mp(\Hhb)[[\varphi P_j^r,\Hhb],\Hhb]\chi_{m,\hbar}^\pm(\Hhb)}_1\leq C2^{m+n}\crochetjap{b}^{2+r}\hbar^{1-d}.
\end{equation}
By recalling that $\varphi\in\test{B(0,3R)}\subset\test{B(0,4R)}$,
\begin{align*}
	\left[[\varphi P_j^r,\Hhb],\Hhb\right] = \left[[\varphi P_j^r ,\Hloc],\Hloc\right],
\end{align*}
on smooth functions supported in $B(0,4R)$.
Let $\psi\in\test{\R^d,[0,1]}$ such that $\supp(\psi)\subset B(0,2R)$ and such that $\psi=1$ on $\supp(\varphi)$. 
By inserting the resolvent $(\Hhb-i)$ and by the H\"older inequality,
\begin{align*}
	&\norm{\chi_{n,\hbar}^-(\Hhb)\left[[\varphi P_j^r,\Hhb],\Hhb\right]\chi_{m,\hbar}^+(\Hhb)}_1
	\\&\quad\leq\prod_{\pm}\norm{\chi_{n,\hbar}^\pm(\Hhb)\psi (\Hhb-i)}_2\norm{(\Hhb-i)^{-1}\left[[\varphi P_j^r,\Hloc],\Hloc\right](\Hhb-i)^{-1}}_{\op}
	.
\end{align*}
Let us first treat the Hilbert--Schmidt norm bounds.
The idea is to separate into two terms on which we apply \cref{thm:WL-finite-int} on the interval $I=[-4^n\hbar,-4^{n-1}\hbar]$ and $[4^{n-1}\hbar,4^n\hbar]$ and \cref{lemma:aux-loc_4} applied to $f=\chi_{n,\hbar}^\pm$
\begin{align*}
	\norm{\chi_{n,\hbar}^\pm(\Hhb)\psi (\Hhb-i)}_2 
	&\leq \tr[\psi\chi_{n,\hbar}^\pm(\Hloc)]^{1/2}\tr[(t\mapsto t\chi_{n,\hbar}^\pm(t))(\Hhb-i)]^{1/2} \\&\qquad+\norm{[\chi_{n,\hbar}^\pm(\Hhb)\Hhb,\psi ]}_2.
\end{align*}
Thus, 
\begin{equation}\label{eq-proof:non-critical-cond_2}
	\norm{\chi_{n,\hbar}^\pm(\Hhb)\psi (\Hhb-i)}_2\leq C2^n\hbar^{\frac{1-d}2}.
\end{equation}
Moreover, by \eqref{eq-proof:non-critical-cond_3} (detailed in \cref{subsect:local-estimates-non-criticcal}) we have
\begin{equation*}
	\norm{(\Hhb-i)^{-1}\left[[\varphi P_j^r,\Hloc],\Hhb\right](\Hhb-i)^{-1}}_{\op}\leq C\hbar^2.
\end{equation*}
And we obtain \eqref{eq-proof:non-critical-cond_1-bis}.
\end{proof}

\subsection*{Acknowledgements}

N.~N.~is grateful to S.~Mikkelsen and L.~Morin for useful discussions. N.~B.\ and N.~N.\ acknowledge financial support from the European Union through the European Research Council's Starting Grant \textsc{FermiMath}, grant agreement nr.~101040991. Views and opinions expressed are those of the authors and do not necessarily reflect those
of the European Union or the European Research Council Executive Agency. Neither
the European Union nor the granting authority can be held responsible for them. N.~B., C.~B., and D.~M.\ were partially supported by Gruppo Nazionale per la Fisica Matematica GNFM -- INdAM in Italy. C.~B.\ and D.~M. acknowledge support by the PRIN 2022AKRC5P ``Interacting Quantum Systems:
Topological Phenomena and Effective Theories'', N.~B.\ acknowledges partial support by the same PRIN grant.

\bibliographystyle{amsalpha}
{\small
\bibliography{biblio-magnetic}
}

\end{document}